\newtheorem{theorem}{Theorem}[section]
\newtheorem{lemma}[theorem]{Lemma}
\newtheorem{corollary}[theorem]{Corollary}
\newtheorem{definition}[theorem]{Definition}
\newtheorem{claim}[theorem]{Claim}
\newtheorem{observation}[theorem]{Observation}
\newtheorem{fact}[theorem]{Fact}
\newcommand{\scatterdim}[1][$\epsilon$]{{{#1}-scatter dimension}\xspace}
\newcommand{\ScatterDim}[1][$\epsilon$]{{{#1}-Scatter Dimension}\xspace}
\newcommand{\algscatterdim}[1][$\epsilon$]{{algorithmic \scatterdim[#1]}\xspace}
\newcommand{\AlgScatterDim}[1][$\epsilon$]{{Algorithmic \ScatterDim[#1]}\xspace}
\newcommand{\scattering}[1][$\epsilon$]{{{#1}-scattering}\xspace}
\newcommand{\algscattering}[1][$(\cC_{\cM},\epsilon)$]{{#1}-scattering\xspace}
\newcommand{\opt}{\ensuremath{\textsf{OPT}}}
\newcommand{\poly}{\ensuremath{\textsf{poly}}}
\newcommand{\polylog}{\ensuremath{\textsf{polylog}}}
\newcommand{\ball}{{\sf ball}}
\newcommand{\km}{\textsc{$k$-Median}\xspace}
\newcommand{\okm}{\textsc{Ordered $k$-Median}\xspace}
\newcommand{\wkm}{\textsc{Weighted $k$-Median}\xspace}
\newcommand{\kzc}{\textsc{$(k,z)$-Clustering}\xspace}
\newcommand{\knc}{\textsc{Norm $k$-Clustering}\xspace}
\newcommand{\kmns}{\textsc{$k$-Means}\xspace}
\newcommand{\kc}{\textsc{$k$-Center}\xspace}
\newcommand{\lcentrum}{\textsc{$\ell$-Centrum}\xspace}
\newcommand{\rkm}{\textsc{Robust $k$-Median}\xspace}
\newcommand{\sfm}{\textsc{Socially Fair $k$-Median}\xspace}
\newcommand{\pqfc}{\textsc{$(z,q)$-Fair Clustering}\xspace}
\newcommand{\pkc}[1][$k$]{\textsc{Priority {#1}-Center}\xspace}
\newcommand{\wkc}{\textsc{Weighted $k$-Center}\xspace}
\newcommand{\pc}{\textsc{Ball Intersection}\xspace}
\newcommand{\of}{\textsf{OPT}_{\textnormal{f}}\xspace}
\newcommand{\vc}[1]{\bm{#1}\xspace}
\newcommand{\vcx}{\bm{x}\xspace}
\newcommand{\vcg}{\bm{g}\xspace}
\newcommand{\vcw}{\bm{w}\xspace}
\newcommand{\vcd}{\bm{\delta}\xspace}
\newcommand{\defproblemout}[3]{
  \vspace{1mm}
\noindent\fbox{
  \begin{minipage}{0.96\textwidth}
  \begin{tabular*}{\textwidth}{@{\extracolsep{\fill}}lr} #1 \\ \end{tabular*}
  {\bf{Input:}} #2  \\
  {\bf{Output:}} #3
  \end{minipage}
  }
  \vspace{1mm}
}
\newcommand{\spider}{spider}
\newcommand{\tw}{\textnormal{\texttt{tw}}}
\newcommand{\OO}{\mathcal{O}}
\newcommand{\cA}{\ensuremath{\mathcal{A}}\xspace}
\newcommand{\cC}{\ensuremath{\mathcal{C}}\xspace}
\newcommand{\cI}{\ensuremath{\mathcal{I}}\xspace}
\newcommand{\cM}{\ensuremath{\mathcal{M}}\xspace}
\newcommand{\bN}{\ensuremath{\mathbb{N}}\xspace}
\newcommand{\bR}{\ensuremath{\mathbb{R}}\xspace}
\newcommand{\bRd}{\ensuremath{\mathbb{R}^d}\xspace}
\newcommand{\bRn}{\ensuremath{\mathbb{R}^n_{\ge 0}}\xspace}
\newcommand{\bRp}{\ensuremath{\mathbb{R}^{P}_{\ge 0}}\xspace}
\newcommand{\bRna}{\ensuremath{\mathbb{R}^n}\xspace}
\newcommand{\bZ}{\ensuremath{\mathbb{Z}}\xspace}
\newcommand{\bigO}[1]{\ensuremath{\mathcal{O}\left(#1\right)}}
\newcommand{\prob}[1]{\ensuremath{\text{\bf Pr}\left [#1\right]}\xspace}
\newcommand{\grad}{\ensuremath{\partial f}\xspace}
\newcommand{\gradapx}{\ensuremath{\partial_{\epsilon} f}\xspace}
\def\DEBUG{true}
\newcommand{\attention}[1]{\textcolor{red}{** #1 **}} 
\def\rem#1{\marginpar{\raggedright\scriptsize #1}}
\newcommand{\roor}[1]{\rem{\textcolor{BlueViolet}{$\bullet$ R: #1}}}
\newcommand{\jbyr}[1]{\rem{\textcolor{RedViolet}{$\bullet$ Jarek: #1}}}
\newcommand{\pryr}[1]{\rem{\textcolor{Orchid}{$\bullet$ P: #1}}}
\newcommand{\amtr}[1]{\rem{\textcolor{BurntOrange}{$\bullet$ A: #1}}}
\newcommand{\danr}[1]{\rem{\textcolor{Cyan}{$\bullet$ D: #1}}}
\newcommand{\kamr}[1]{\rem{\textcolor{WildStrawberry}{$\bullet$ K: #1}}}
\newcommand{\joer}[1]{\rem{\textcolor{Green}{$\bullet$ J: #1}}}
\newcommand{\roohani}[1]{\rem{\textcolor{BlueViolet}{$\bullet$ R: #1}}}
\newcommand{\faab}[1]{\rem{\textcolor{Dandelion}{$\bullet$ F: #1}}}
\newcommand{\sandy}[1]{\rem{\textcolor{violet}{$\bullet$ Sa: #1}}}
\newcommand{\attention}[1]{}
\newcommand{\ftmr}[1]{}
\newcommand{\faab}[1]{}
\newcommand{\sanr}[1]{}
\newcommand{\jbyr}[1]{}
\newcommand{\pryr}[1]{}
\newcommand{\amtr}[1]{}
\newcommand{\kamr}[1]{}
\newcommand{\joer}[1]{}
\newcommand{\roor}[1]{}
\newcommand{\danr}[1]{}
\newcommand{\sandy}[1]{}
\newcommand{\roohani}[1]{}
\title{Parameterized Approximation Schemes\\ for Clustering with General Norm Objectives}
\author{
Fateme Abbasi\footnotemark[1] \and 
Sandip Banerjee\footnotemark[1] \and 
Jaros\l{}aw Byrka\footnote{University of Wroc\l{}aw, Poland (\texttt{\{fateme.abbasi,jby\}@cs.uni.wroc.pl}, \texttt{sandip.ndp@gmail.com})}
\end{tabular} \endgraf
\begin{tabular}[t]{c}
Parinya Chalermsook\footnotemark[2] \and
Ameet Gadekar\footnote{Aalto University, Espoo, Finland (\texttt{\{parinya.chalermsook,ameet.gadekar\}@aalto.fi}} \and 
Kamyar Khodamoradi\footnote{University of British Columbia, Canada (\texttt{kamyar.khodamoradi@ubc.ca})} 
\end{tabular} \endgraf
\begin{tabular}[t]{c}
D\'{a}niel Marx\footnote{CISPA Helmholtz Center for Information Security, Saarbr\"{u}cken, Germany (\texttt{marx@cispa.de})} \and
Roohani Sharma\footnotemark[5] \and
Joachim Spoerhase\footnote{Max Planck Institute for Informatics, Saarbr\"{u}cken, Germany (\texttt{\{rsharma,jspoerha\}@mpi-inf.mpg.de})}$^{\;}$
\footnote{The University of Sheffield, United Kingdom (\texttt{j.spoerhase@sheffield.ac.uk})}
}
\date{}
\begin{document}

\begin{titlepage}
\maketitle

\begin{abstract}
\thispagestyle{empty}
This paper considers the well-studied algorithmic regime of  designing a $(1+\epsilon)$-approximation algorithm for a $k$-clustering problem that runs in time $f(k,\epsilon)\poly(n)$ (sometimes called an efficient parameterized approximation scheme or EPAS for short\footnotemark). 
Notable results of this kind include EPASes in the high-dimensional Euclidean setting for $k$-center [Bad\u{o}iu, Har-Peled, Indyk; STOC'02] as well as $k$-median, and $k$-means [Kumar, Sabharwal, Sen; J. ACM 2010].

Our main contribution is a clean and simple  EPAS that settles more than ten clustering problems (across multiple well-studied objectives as well as metric spaces) and unifies well-known EPASes.
 More specifically, our algorithm gives EPASes in the following settings: 

\begin{itemize}
    \item \textbf{Clustering objectives}: $k$-means, $k$-center, $k$-median, priority $k$-center, $\ell$-centrum, ordered $k$-median, socially fair $k$-median (aka robust $k$-median), or any other objective that can be formulated as minimizing a monotone (not necessarily symmetric!) norm of the distances of the points from the solution (generalizing the symmetric formulation introduced by Chakrabarty and Swamy [STOC'19]).
    \item \textbf{Metric spaces}: Continuous high-dimensional Euclidean spaces, metrics of bounded doubling dimension, bounded treewidth metrics, and planar metrics. 
\end{itemize}

Prior to our results, EPASes were only known for vanilla clustering objectives ($k$-means, $k$-median, and $k$-center) and  each such algorithm is tailored to work for the specific input metric and clustering objective (e.g., EPASes for $k$-means and $k$-center in ${\mathbb R}^d$ are conceptually very different). 
In contrast, our algorithmic framework is applicable to a wide range of well-studied objective functions in a uniform way, and is (almost) entirely oblivious to any specific metric structures and yet is able to effectively exploit those unknown structures. In particular, our algorithm is \emph{not} based on the (metric- and objective-specific) technique of coresets.

Key to our analysis is  a new concept that we call \emph{bounded \scatterdim}---an intrinsic complexity measure of a metric space that is a relaxation of the standard notion of \textit{bounded doubling dimension} (often used as a source of algorithmic tractability for geometric problems).  Our main technical result shows that two conditions are essentially sufficient for our algorithm to yield an EPAS on the input metric $M$ for any clustering objective: 
\begin{itemize}
\item[(i)] The objective is described by a monotone norm, and
\item[(ii)] the \scatterdim of $M$ is upper bounded by a function of $\epsilon$.
\end{itemize}
\end{abstract}

\end{titlepage}
\clearpage 

\tableofcontents
\thispagestyle{empty}
\clearpage

\pagenumbering{arabic}

\section{Introduction}\label{sec:introduction}

In the class of \emph{$k$-clustering} problems, we are interested in partitioning $n$ data  points into $k$ subsets called clusters, each of which is represented by a center. We aim at minimizing a certain objective based on the distances between the data points and their respective cluster centers.
This  is among the most fundamental optimization problems that arise routinely in both theory and practice and has received attention from various research communities, including optimization, data mining, machine learning, and computational geometry. 
Basic clustering problems such as \km, \kc, and \kmns have been researched for more than half a century and yet remain elusive from many perspectives of computation.

This paper considers a prominent and classic algorithmic regime for $k$-clustering in which one aims at designing  \textit{efficient parameterized approximation schemes} (EPAS)---a $(1+\epsilon)$ approximation algorithm that runs in time $h(k, \epsilon) \poly(n)$ for every $\epsilon >0$\footnotetext{Quick remarks: (i) An EPAS is not comparable to polynomial time approximation schemes (PTAS), (ii) before the term EPAS was invented some researchers call this type of approximation schemes a PTAS or simply an approximation scheme (in clustering, it is often assumed that $k$ is small)~\cite{kumar2010linear,feldman2007ptas}, and (iii) both EPAS and PTAS are implied by the existence of efficient polynomial time approximation schemes (EPTAS).}. 
In a general metric space, obtaining such an approximation scheme is impossible even for  basic clustering problems. Past research has therefore focused on designing algorithms that work in \emph{structured} metric spaces (such as planar graphs or Euclidean spaces). 
In the continuous high-dimensional Euclidean space, EPASes are arguably the ``fastest'' approximation scheme one can hope for~\cite{dasgupta08hardness-2-means,awasthi2015hardness}, so it is no surprise that research on EPASes for clustering problems has received a lot of attention in the past two decades~\cite{har2004coresets,matouvsek2000approximate,ostrovsky2013effectiveness,ding2020unified,bhattacharya2018faster,braverman2021coresets,cohen2019efficient}.\footnote{We remark that PTASes, which are incomparable to EPASes, do not exist for continuous $k$-means, $k$-median and $k$-center~\cite{cohen2019inapproximability,awasthi2015hardness}.}

This paper is inspired by the following meta-question: 

\begin{mdframed}[userdefinedwidth=16cm,align=center,backgroundcolor=gray!20]
For a given $k$-clustering objective and a (structured) metric space, does an EPAS exist? 
\end{mdframed}

Systematic understanding about this question has been seriously lacking. 
While affirmative answers for basic clustering problems such as \kc, \km, and \kmns in the continuous high-dimensional Euclidean space have been shown already two decades ago~\cite{badoiu-etal:approximate-clustering-coresets,kumar2010linear} (recently for structured graph metrics~\cite{baker2020coresets,katsikarelis2019structural,fox2019embedding,braverman2021coresets,cohen2019efficient}), we do not know of any such result for  more complex clustering objectives.

This paper makes substantial progress towards a complete understanding of  the above meta-question. In particular, we present a unified EPAS that works for a broad class of clustering objectives (encompassing almost all center-based clustering objectives ever considered by the algorithms community and some new ones that further generalize the existing problems) as well as diverse metric spaces, hence settling many well-studied standalone clustering problems as a by-product.\footnote{There are variants of clustering problems that enforce constraints on how points can be assigned to open centers (e.g., capacitated and diversity constraints). Our purpose is handling many center-based clustering objectives; handling a broad range of constraints (such as capacities) is beyond the scope of this paper.} In contrast to the existing approaches (where each algorithm is tailored to specific input metric and clustering objective), our algorithmic framework is (almost) entirely oblivious to any specific metric structures and the objective function, yet is able to effectively exploit those unknown structures. 

\subsection{Efficient Parameterized Approximation Schemes for \knc}

As an input to the (general) $k$-clustering problem, we are given $n$ data points $P$, candidate centers~$F$, a metric space $M=(P\cup F,\delta)$, a positive integer $k$, and an objective function $f\colon {\mathbb R}^P \rightarrow {\mathbb R}$. When a set of $k$ ``open'' centers $X\subseteq F$ is chosen, this solution induces a cost vector $\vcd(P,X)=(\delta(p,X))_{p\in P}$ where $\delta(p,X) = \min_{x \in X} \delta(p,x)$ represents the distance from point $p$ to the closest center in $X$. Our goal is to minimize $f(\vcd(P,X))$. We call this problem the $k$-clustering problem with cost function~$f$. We may think of the function $f$ as ``aggregating'' the costs incurred by the points.  
For example, we can formulate basic $k$-clustering  objectives via the functions $f(\vc{x}) = \sum_{p\in P}x(p)$ (\km), $f(\vc{x}) = \sum_{p\in P}x(p)^2$ (\kmns) and $f(\vc{x}) = \max_{p\in P}x(p)$ (\kc). 

Most natural and well-studied clustering objectives can be modeled using (a generalization of) the concept of \emph{norm} optimization introduced by Chakrabarty and Swamy~\cite{chakrabarty2019approximation}. More specifically, we are interested in the setting where the objective $f$ is a norm. A norm is a function $f\colon\bRna\rightarrow\bR_{\geq 0}$, $n\in\bN$ that satisfies (i) for all $\vc{x} \in \bRna$, $f(\vc{x}) =0$ if and only if $\vc{x}=\vc{0}$, (ii) $\forall \vc{x}, \vc{y} \in {\bRna}\colon f(\vc{x}+ \vc{y}) \leq f(\vc{x}) + f(\vc{y})$, and (iii) $\forall \vc{x} \in {\mathbb R}^n, \lambda \in {\mathbb R}\colon f(\lambda \vc{x}) = |\lambda| f(\vc{x})$. 
We say that $f$ is \textit{monotone} if $f(\vc{x}) \leq f(\vc{y})$ whenever $\vc{x} \leq \vc{y}$.  By \knc we refer to the $k$-clustering problem whose objective $f\colon \mathbb{R}^P \rightarrow \bR_{\geq 0}$ is a monotone norm.
While Chakrabarty and Swamy~\cite{chakrabarty2019approximation} further require that $f$ be symmetric\footnote{We say that $f$ is symmetric if $f(\vc{x}) = f(\vc{x'})$ whenever $\vc{x'}$ can be obtained by reordering coordinates of $\vc{x}$.}, our algorithmic framework applies to all monotone norm cost functions. 
This family includes the following well-known clustering problems (see Figure~\ref{fig:prob-diagram} for an overview): 
\begin{itemize}
    \item \textbf{From \kmns, \kc, and \km to \kzc}: All the basic clustering  problems can be captured by the $\ell_z$-norm when $z \in \{1,2, \infty\}$.  In fact, a   $(k,z)$-clustering problem~\cite{huang2020coresets,cohen2021new,cohen2022towards} (for constant positive integer $z$) uses the objective function $g(\vc{x}) = \sum_{p \in P} |x(p)|^z$. (This function itself is not a norm, but we can instead consider the $\ell_z$-norm $f(\vc{x})= g(\vc{x})^{1/z}$.)

    \item \textbf{\wkc (or \pkc)}: The weighted version of \kc \cite{li2003asymmetry,bajpai2021revisiting,plesnik1987heuristic} generalizes the \kc so that each data point $p \in P$ is associated with a positive weight (or priority) $w(p)$, and the objective is to minimize the (weighted) maximum distance to a center.\footnote{For convenience of presentation, the terminologies we use are somewhat different from the literature.} 
    This problem can be modelled by the ``weighted max'' norm $f(\vc{x}) = \max_{p \in P} w(p) x(p)$.
    One can analogously define the weighted versions of \km and \kmns (see, for example, \cite{DBLP:conf/icalp/Cohen-AddadG0LL19}).  We remark that the underlying weighted norms are not symmetric.

    \item \textbf{\lcentrum}: This problem (sometimes called \textsc{$k$-Facility $\ell$-Centrum}) aims to minimize the sum of the connection costs among the $\ell$ ``most expensive'' points (that is, those that are furthest away from the open centers). The problem
    generalizes both \kc ($\ell=1$) and \km ($\ell=|P|$) problem~\cite{tamir2001k}. (See the books~\cite{LocationTheory,LocationScience} for more details on \lcentrum and the more general \okm discussed below.)
    This problem can be modelled by the top-$\ell$ norm $f(\vc{x}) = \sum_{j=1}^{\ell} x^{\downarrow}(j)$ where $\vcx^{\downarrow}$ denotes the reordering of vector $\vc{x}$ so that the entries appear non-increasingly. The top-$\ell$ norm is symmetric.

    \item \textbf{\okm}: This problem further generalizes \lcentrum, allowing flexible penalties to be applied to data points that incur the highest connection costs. More formally, the objective is the \emph{ordered weighted norm} $f(\vc{x}) = \vc{v}^{\intercal} \vc{x}^{\downarrow}$ where $\vc{v}\in\bRn$ is a non-increasing cost vector, that is, $v(1) \geq v(2) \geq \ldots \geq v(n)$. \lcentrum corresponds to $\vc{v} = (1,\ldots, 1,0, \ldots 0)$ where the first $\ell$-entries of $\vc{v}$ are ones. This problem has already received attention for a few decades~\cite{byrka2018constant,chakrabarty2019approximation,braverman2019coresets}. We remark that the $f$ here is a monotone and symmetric norm.  

    \item \textbf{\sfm (or \rkm)}: In \sfm, along with the point set $P$, we are given $m$ different (not necessarily disjoint) subgroups such that $P=\bigcup_{i \in [m]} P_i$. Our goal is to find a set $X$ of centers that incurs fair costs to the groups by minimizing the maximum cost over all the groups. In other words, 
    \[\min_{\substack{X \subseteq F\\ |X|=k}} \max_{i \in [m]} \sum_{p\in P_i}\delta(p,X)\,.\]
    Due to  distinct applications in at least two domains, this variant of clustering has recently been studied extensively: (i) in algorithmic fairness~\cite{abbasi2021fair,goyal2023tight,pmlr-v134-makarychev21a,ghadiri2022constant} and (ii) in the robust optimization context, this problem is known as \rkm, which intends to capture the applications when we are uncertain about the actual data scenarios (corresponding to the groups $P_i$) that may come up~\cite{anthony2010plant,bhattacharya2014new,bansal2013generalizations}.   

    \item \textbf{\pqfc:}
    Our problem also models a clustering problem called \pqfc\footnote{Chlamt{\'a}{\v{c}} et al.~\cite{chlamtavc2022approximating} call the problem \textsc{$(p,q)$-Fair Clustering}. For the sake of consistency with the notation in the rest of the paper, we changed the naming slightly.} introduced by Chlamt{\'a}{\v{c}} et al.~\cite{chlamtavc2022approximating}, which generalizes \sfm. 

In particular, one can view the cost function $f$ of \sfm as a ``two-level'' aggregate cost: First, cost $\sum_{p\in P_i}\delta(p,X)$ incurred by group $P_i$, $i\in [m]$ can be viewed as weighted $\ell_1$-norm $\vc{w}_i^{\intercal} \vc{x}$  where $\vcw_i=\vc{1}_{P_i}\in\{0,1\}^P$ denotes the characteristic vector of $P_i$. Second, these group costs are further aggregated through $\ell_{\infty}$, that is, $f(\vc{x}) = \max(\vcw_1^{\intercal}\vcx, \vcw_2^{\intercal}\vcx, \ldots, \allowbreak\vcw_m^{\intercal}\vc{x})$. 

\pqfc allows arbitrary uses of $\ell_z$ and $\ell_q$ norms to aggregate the costs in two levels. The cost function is defined as $f(\vc{x}) = g(\vc{h}(\vc{x}))$ where $g$ is any $\ell_q$-norm function and  $\vc{h}(\vc{x}) = (h_1(\vc{x}), h_2(\vc{x}),\ldots, h_m(\vc{x}))$ where $h_i(\vc{x})$ is a weighted $\ell_z$-norm, that is, $h_i(\vc{x})= \left(\sum_{p\in P}w_i(p)x(p)^z\right)^{1/z}$ for arbitrary weight vectors $\vcw_i \in \bRp$, $i\in[m]$. 
    It is easy to check that $f(\vc{x}) = g(\vc{h}(\vc{x}))$ is a monotone norm whenever $g$ and $\{h_i\}$ are. 
    
    \item \textbf{Beyond the Known Problems}: 
Our (asymmetric) norm formulation allows us to model more complex clustering objectives that might be useful in some application settings and, to our knowledge, have not yet been considered in the algorithms community.  
    One such objective is the {\sc Priority Ordered $k$-Median}: We have the cost function $f(\vc{x}) = \vc{v}^{\intercal}{\vcx_{\vcw}}^{\downarrow}$ where the weight vector $\vc{v}\in\bRn$, and priority vector $\vc{w} \in \bRp$ are given as input, and where $\vcx_{\vcw}=(w(p)x(p))_{p\in P}$. 
    This objective generalizes both \pkc and \okm.
    Another natural objective is the (multi-level) {\sc Cascaded Norm Clustering}, which generalizes \pqfc to allow multiple levels of cost aggregation.     
    The cost function $f$ for this problem is described by a directed acyclic graph (DAG) $D$ with one sink node and $|P|$ source nodes (each source corresponds to a point in $P$). Each non-source node $v$ is associated with a norm $\ell_q$ for some $q$, and each edge $(u,v)$ has weight $w_{u,v}$. 
    Given such a DAG $D$,  the value of $f(\vc{x})$ can be evaluated by computing the evaluations at nodes in $V(D)$ in (topological) order from sources to sink: (i) The evaluation at  source $p \in P$ is $\eta(p) = x(p)$, (ii) For any non-source node $v \in V(D)$ labelled with the norm $\ell_q$, we evaluate $\eta(v) = \left(\sum_{u \in N^-(v)} w_{u,v} \eta(u)^q\right)^{1/q}$, and (iii) the value of $f(\vc{x})$ is the evaluation of the sink. See Figure~\ref{fig:cascaded} for illustration. \pqfc  is a special case when $D$ has 3 layers with the middle layer using the same norm. Of course, also other basic monotone norms such as top-$\ell$ or ordered weighted norms could be composed to more complex norms analogously.

 \begin{figure}[t]
  \begin{center}
    \includegraphics[width=0.45\textwidth]{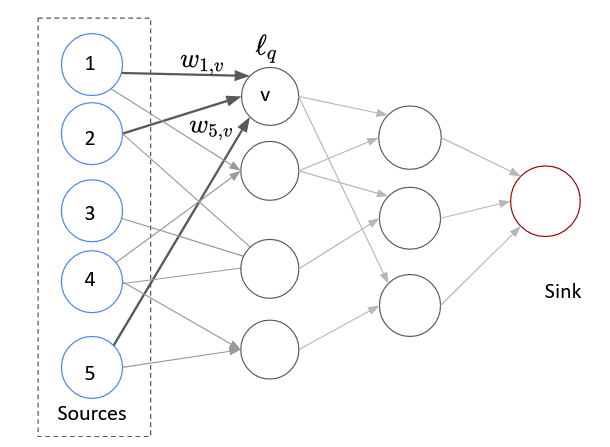}
  \end{center}
  \caption{The DAG here describes evaluation of function $f$.  Node $v$ is labeled with the $\ell_q$ norm, so the evaluation at node $v$ is $\eta(v) = (w_{1,v} x_1^q + w_{2,v} x_2^q + w_{5,v} x_5^q)^{1/q}$.\label{fig:cascaded}}
\end{figure}

\end{itemize}

We remark that \emph{asymmetric} norms  can potentially make the problem substantially harder. For example, a poly-time $\OO(1)$-approximation algorithm exists for symmetric norms~\cite{chakrabarty2019approximation} but the asymmetric norm makes it $\Omega(\log n/\log \log n)$-hard to approximate even for the special case of \rkm on  the line metrics~\cite{bhattacharya2014new}.

Our main  results are encapsulated in the following theorem.  

\begin{theorem}\label{thm:main-informal}
Let $f$ be an efficiently computable monotone norm cost function. Then the $k$-clustering problem with cost function $f$ admits an EPAS for the following input metrics: (i) metrics of bounded doubling dimension, (ii) continuous Euclidean spaces of any dimension, (iii) bounded treewidth metrics, and (iv) planar metrics. 
\end{theorem}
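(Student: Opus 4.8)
The plan is to derive all four cases of Theorem~\ref{thm:main-informal} from a single metric-oblivious engine together with four structural facts. The engine (developed in the technical sections) is the heart of the matter: for any efficiently computable monotone norm $f$, the $k$-clustering problem with cost $f$ admits a $(1+\epsilon)$-approximation running in time $h(k,\epsilon,\lambda)\cdot\poly(|P|+|F|)$ provided the input metric has $\epsilon$-scatter dimension at most $\lambda(\epsilon)$ for some function $\lambda$ — roughly, one guesses the optimal cost, then uses random sampling near the optimum together with the scatter-dimension bound to limit the number of branching rounds, while monotonicity of the norm lets one control the objective from coordinatewise guarantees. Granting this engine, proving Theorem~\ref{thm:main-informal} reduces to showing that, for every fixed value of the relevant structural parameter (doubling dimension $d$, treewidth $w$, or planarity) and every $\epsilon>0$, the $\epsilon$-scatter dimension is bounded by a function of $\epsilon$ and that parameter. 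So after invoking the engine, I would spend the rest of the argument on four scatter-dimension bounds.

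For \textbf{bounded doubling dimension}, I would unfold the definition of an $\epsilon$-scattering — a sequence of point--center pairs $(p_1,c_1),\dots,(p_t,c_t)$ sharing a common radius $r$ in which each fresh point $p_j$ lies at distance more than $(1+\epsilon)r$ from every earlier center $c_i$. The triangle inequality then forces the centers $c_1,\dots,c_t$ to be pairwise $\Omega(\epsilon r)$-separated while all lying in a ball of radius $O(r)$, so the standard packing bound in a doubling metric gives $t\le (O(1/\epsilon))^{O(d)}$; hence $\lambda(\epsilon)=(1/\epsilon)^{O(d)}$ suffices.

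The \textbf{continuous Euclidean} case is the one I expect to be the main obstacle, precisely because high-dimensional Euclidean space has unbounded doubling dimension, so the packing argument collapses. Here the plan is to use linear-algebraic rigidity in place of volume: normalize $r=1$, write $u_i=p_i-c_i$ with $\|u_i\|\le 1$, and observe that the scattering inequalities $\|p_j-c_i\|>1+\epsilon$ constrain the Gram matrix of the $c_i$'s and $u_i$'s. A dimension-free estimate — via a rank / John-ellipsoid argument, or a pigeonhole on the signs of many inner products of the kind used in Euclidean coreset constructions — should cap the length at a function of $\epsilon$ alone, e.g. $2^{\poly(1/\epsilon)}$. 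Obtaining a clean, genuinely dimension-independent bound here is the crux of the whole theorem.

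Finally, for \textbf{bounded treewidth} I would fix a width-$w$ tree decomposition and argue recursively: in any long $\epsilon$-scattering, some bag of size at most $w+1$ separates a constant fraction of the scattering points from the rest, so by pigeonhole many of them route through the same decomposition vertex; two scattering points passing through a common vertex at comparable distance cannot be more than $(1+\epsilon)$-apart, which limits how much the scattering can grow per recursion level and yields $\lambda(\epsilon)=f(w,\epsilon)$. For \textbf{planar metrics} I would either replay this argument with planar shortest-path separators, or — more cleanly — invoke the standard $(1+\epsilon)$-distortion embedding of planar metrics into bounded-treewidth metrics used in earlier planar clustering EPASes, noting that a $(1+\epsilon/3)$-distortion embedding sends an $\epsilon$-scattering to an $(\epsilon/2)$-scattering of the host metric, so the treewidth bound transfers with only a constant-factor loss in $\epsilon$. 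Combining the engine with these four scatter-dimension bounds yields Theorem~\ref{thm:main-informal}.
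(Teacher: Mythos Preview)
Your overall architecture---a metric-oblivious engine plus four scatter-dimension bounds---matches the paper. The doubling case is right (packing), and the planar-via-treewidth-embedding reduction is also the paper's route (though the actual embedding is \emph{additive}-distortion, not multiplicative). But the Euclidean case contains a genuine, fatal gap.

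You propose to bound the (non-algorithmic) \scatterdim of high-dimensional continuous Euclidean space by a function of $\epsilon$ alone, via Gram-matrix or rank arguments. This is impossible: the quantity you are trying to bound is in fact \emph{unbounded}. In $\bR^d$, take $x_i=e_i/\sqrt 2$ and $p_i=-x_i$ for $i\in[d-1]$; one checks directly that this is a $(\sqrt 2-1)$-scattering of length $d-1$. So no dimension-free bound on the \scatterdim exists, and any ``linear-algebraic rigidity'' argument aiming at such a bound must fail. The paper's workaround is conceptual, not merely technical: it introduces a second notion, the \emph{algorithmic} \scatterdim, in which the center player is replaced by a specific \pc algorithm (concretely, the Kumar--Yildirim weighted $1$-center algorithm) that plays to shorten the game. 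The engine is stated for this algorithmic notion, and what is bounded in the Euclidean case is the \emph{algorithmic} \scatterdim, by $\OO(\epsilon^{-4}\log(1/\epsilon))$. Your proposal is missing this entire mechanism.

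Two smaller points. First, your description of an \scattering is off: the refutation constraint is $\delta(x_i,p_i)>1+\epsilon$ for the \emph{matching} pair only, while the covering constraint is $\delta(x_i,p_j)\le 1$ for $j<i$; you have the quantifiers reversed, which would make the packing argument for doubling metrics harder than it needs to be. Second, your treewidth sketch (``two scattering points passing through a common vertex at comparable distance cannot be more than $(1+\epsilon)$-apart'') does not obviously go through; the paper instead extracts from a long scattering a large family of pairwise-disjoint ``spiders'' on a common point set, which certifies high treewidth directly. Your separator-recursion idea may be salvageable, but as written it is not a proof.
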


By \emph{continuous} Euclidean space, we refer to the setting where any point of the space can be chosen as a center. This is in contrast to a \emph{discrete} Euclidean space, where we restrict the centers to be selected from a specific finite subset of the points.Observe that for a fixed $d$, discrete Euclidean problems in $\bRd$ have bounded doubling dimension, hence covered by our framework. Furthermore, it is not a shortcoming of our result that it does not cover discrete Euclidean spaces of high dimensions: in this setting, \kc  
is W[1]-hard to approximate within a factor of $\sqrt{3/2} -o(1)$  (the proof of this will appear elsewhere).

Our result in particular implies the following.

\begin{corollary}
In all aforementioned metric spaces,
\begin{enumerate}
    \item There exists a $2^{h(1/\epsilon)\cdot k\cdot\polylog (k)}\cdot \poly(n)$ time EPAS  for \okm on $n$ points.
\item There exists a $2^{h(1/\epsilon)\cdot k\cdot\polylog(k)}\cdot \poly(n,m)$ time EPAS  for \pqfc on $n$ points and $m$ groups.
\end{enumerate}
\end{corollary}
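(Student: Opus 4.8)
My plan is to obtain the corollary by instantiating Theorem~\ref{thm:main-informal} (and the technical statement underlying it) for the two named objectives; the work then splits into (a) checking that the \okm and \pqfc objectives are efficiently computable monotone norms, and (b) bookkeeping the running time of the generic EPAS in each of the four metric families.

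For (a), I would first observe that the \okm objective $f(\vc{x}) = \vc{v}^{\intercal}\vc{x}^{\downarrow}$ with $v(1)\ge\cdots\ge v(n)\ge 0$ is the standard ordered weighted norm: it can be written as a maximum of linear functionals $\langle \pi(\vc{v}),\cdot\rangle$ over coordinate permutations $\pi$ of $\vc{v}$, which immediately gives subadditivity and homogeneity, while nonnegativity and monotonicity are clear on $\bRn$; and it is evaluated in $\OO(n\log n)$ time by sorting. The \pqfc objective $f(\vc{x}) = g(\vc{h}(\vc{x}))$ is a monotone norm by the composition fact already recorded in the excerpt (a monotone norm applied to a vector of monotone norms is again a monotone norm), and it is evaluated in $\poly(n,m)$ time: compute the $m$ weighted $\ell_z$-norms $h_i(\vc{x})$ and then one $\ell_q$-norm of the resulting $m$-vector.

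For (b), Theorem~\ref{thm:main-informal} certifies that the generic algorithm is a $(1+\epsilon)$-approximation whose running time has the shape $2^{\widetilde{\OO}(k\cdot\Gamma(\epsilon))}\cdot T_f\cdot\poly(n)$, where $T_f$ is the cost of one evaluation of $f$, $\Gamma(\epsilon)$ collects the \scatterdim of $M$ together with the remaining $\epsilon$-dependent quantities used in the analysis (e.g.\ a geometric-discretization parameter), and $\widetilde{\OO}(\cdot)$ hides the $\polylog k$ factors coming from the algorithm's guessing/enumeration steps. In each of the four families the \scatterdim is bounded by a function of $\epsilon$ alone once the relevant structural parameter is held constant: $\OO_D(\poly(1/\epsilon))$ for metrics of doubling dimension $D$, $\poly(1/\epsilon)$ for $\bRd$ \emph{independently of $d$}, $\poly(t,1/\epsilon)$ for treewidth-$t$ metrics, and $\poly(1/\epsilon)$ for planar metrics, so $\Gamma(\epsilon)=\OO_{\epsilon}(1)$. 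Plugging this in with $T_f=\poly(n)$ for \okm yields $2^{h(1/\epsilon)\cdot k\cdot\polylog k}\cdot\poly(n)$; with $T_f=\poly(n,m)$ for \pqfc (and noting that $m$ enters only through the input/evaluation size, never the exponent of $k$) it yields $2^{h(1/\epsilon)\cdot k\cdot\polylog k}\cdot\poly(n,m)$.

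The main obstacle is not conceptual but careful accounting inside the proof of Theorem~\ref{thm:main-informal}: I would need to isolate exactly which sub-steps contribute a $2^{\widetilde{\OO}(k)}$ factor (rather than, say, $2^{\OO(k^{1+c})}$), confirm that the only $k$-dependence beyond the explicit linear $k$ in the exponent is polylogarithmic (so that it is absorbed by $\widetilde{\OO}$), and verify that every remaining factor is $\poly(n)$ (resp.\ $\poly(n,m)$) or a function of $\epsilon$ only. A secondary point to double-check is that the quoted \scatterdim bounds for treewidth and planar metrics are genuinely polynomial in $1/\epsilon$ (and in $t$) and that the Euclidean bound is truly dimension-free — this is precisely what makes $\Gamma(\epsilon)$ a function of $\epsilon$ alone and hence the overall dependence on $k$ of the advertised form.
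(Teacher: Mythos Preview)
Your approach is essentially the right one and matches how the paper derives the corollary: verify that the two objectives are efficiently computable monotone norms (the paper records these facts in the problem descriptions), then plug into the running time of Theorem~\ref{thm:main}, which is $\exp\bigl(\widetilde{\OO}(k\lambda(\epsilon/10)/\epsilon)\bigr)\cdot\poly(|M|)\cdot T(f)$. The $\widetilde{\OO}$ hides a $\log(k/\epsilon)$ factor coming from the iteration bound (Lemma~\ref{lem:termination}), which is precisely the source of the $\polylog(k)$ in the exponent; there is no additional ``guessing/enumeration'' layer you need to audit.

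One factual correction: your claimed \scatterdim bounds for bounded-treewidth and planar metrics are wrong. The paper proves $\lambda(\epsilon)=\exp\bigl((1/\epsilon)^{\OO(\tw)}\bigr)$ for treewidth-$\tw$ graphs (Theorem~\ref{thm:treewidth}) and $\lambda(\epsilon)=\exp\bigl(\exp(\poly(1/\epsilon))\bigr)$ for planar graphs (Theorem~\ref{thm:planar}); neither is polynomial in $1/\epsilon$. This does not break the corollary, because the statement only asserts an exponent of the form $h(1/\epsilon)\cdot k\cdot\polylog(k)$ for \emph{some} function $h$, and any bound $\lambda(\epsilon)$ depending on $\epsilon$ alone (with the structural parameter fixed) is absorbed into $h$. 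So your ``secondary point to double-check'' would have come back negative, but harmlessly so; just drop the polynomial claims and keep the weaker assertion that $\lambda$ is a function of $\epsilon$ only.
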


Prior to our results, the existences of EPASes for all these problems were open (except for \kmns , \kc , and \km). Beyond these known problems, we also obtain EPASes for the new, generalized problems introduced above and depicted in Figure~\ref{fig:prob-diagram}. Rather surprisingly, in contrast to the poly-time approximation regime, the complexities of symmetric and asymmetric norm clustering problems ``collapse'' in the parameterized approximation regime.

\begin{figure}[t]
    \begin{center}
    \includegraphics[width=0.6\textwidth]{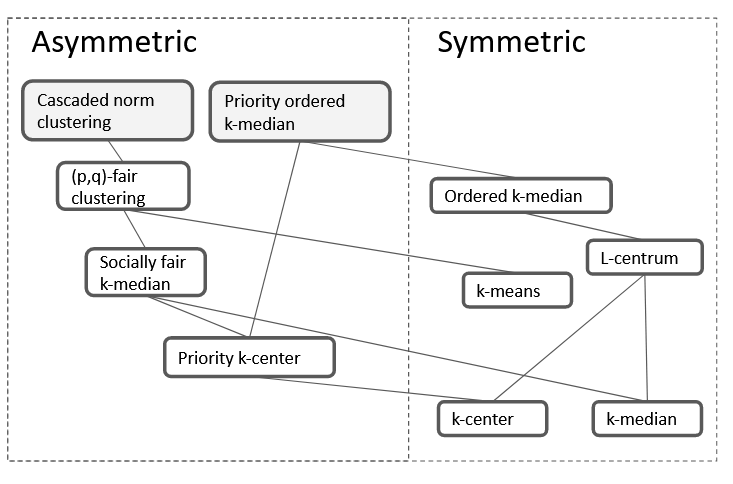}
    \end{center} 

    
    \caption{Selected clustering objectives that can be formulated as monotone norm minimization. The line illustrates generalization  (bottom is a special case of top).  \label{fig:prob-diagram} } 
\end{figure}

\subsection{Our Conceptual and Technical Contributions}

Our main contributions have two parts: (i) a new concept of metric dimension and (ii) our main technical result showing EPASes for all the aforementioned clustering problems. 

\subsubsection*{Unifying Metric Spaces via Scatter Dimension} 

Our key conceptual contribution is a new notion of bounded metric space dimension that relaxes  the standard requirement of bounded doubling dimension so that the metric spaces mentioned in Theorem~\ref{thm:main-informal} all ``live'' in a finite dimension. 
We first explain why existing notions of dimensions are not suitable for such purpose. 

There are multiple dimensionality notions that appear in the literature of  metric spaces. Most familiar in the algorithmic community is perhaps  the  \textit{doubling dimension} (a.k.a. Assouad dimension). Roughly, the doubling dimension of metric $(M,\delta)$ is $\OO(d)$ iff at most $(\nicefrac{1}{\epsilon})^{\OO(d)}$ balls of radius $\nicefrac{\epsilon}{2}$ can be packed into a unit ball (this is called an $\epsilon$-packing). 
Such property can often be computationally leveraged, leading to efficient algorithms for many geometric optimization problems (often with the running time depending exponentially on the dimension). 
However, the doubling dimension (as well as any other popular notions of dimensions~\cite{clarkson2006nearest}) would not be suitable for us due to the following reasons: (i) The doubling dimension can be as large as $\Omega(n)$ in high-dimensional Euclidean space, and (ii) they do not very well ``exploit'' structured graph metrics, i.e., even stars have unbounded dimension.\footnote{In an $n$-node star rooted at $r$, a unit ball $\ball(r,1)$ includes the whole graph. There exists an $\epsilon$-packing of size $(n-1)$ by choosing the non-root nodes.}    In sum, any algorithms that exploit existing notions of dimensions are unlikely to lead to our desired results.

We introduce the notion of \emph{\scatterdim}. Given metric $M = (P,F,\delta)$,  the sequence $(x_1,p_1),\ldots, (x_\ell,p_{\ell}) \in F\times P$ is said to be an $\epsilon$-scattering if, whenever $(x,p)$ appears before $(x',p')$ in the sequence, then $\delta(x,p)$ and $\delta(x',p')$ are larger than $1+\epsilon$ each, while $\delta(x',p) \leq 1$. 
The \scatterdim of $M$ is then defined as the length of the longest scatter, minus one. 

There are two natural interpretations. The first interpretation is as a   game between two players: The \textit{center player} who tries to claim she can cover all the points with a unit ball and the \textit{point player} who present a counterexample. In the first round, the center player picks a center $x_1 \in F$ and the point player \textit{refutes} the claim by presenting a point $p_1 \in P$ which is at least a factor $1+\epsilon$ away from the (closed) unit ball around $x_1$, that is, $p_1 \not\in \ball(x_1,1+\epsilon)$.
The game continues this way: In the $i$-th round, the center player presents $x_i$ such that $\{p_1,\ldots, p_{i-1}\} \subseteq \ball(x_i,1)$, and the point player gives $p_i \not \in \ball(x_i,1+\epsilon)$.  Both players are interested in prolonging the game as much as possible. The \scatterdim is the length of the longest possible game.  
In the second interpretation, one can view such sequence as a pair of $\epsilon$-packings  that are required to be sufficiently distanced: It is easy to verify (simply using triangle inequalities) that $P^* = \{p_1,p_2,\ldots, p_{\ell-1}\}$ and $F^*= \{x_2,\ldots, x_{\ell}\}$ are $\epsilon$-packings of the unit (closed) balls around $x_{\ell}$ and $p_1$, respectively. This view immediately implies that \scatterdim is bounded in a bounded doubling metric. 

\begin{theorem}
\label{thm:doubling-metric}
For $\epsilon \in (0,1)$, any metric of doubling dimension $d$ has \scatterdim $(1/\epsilon)^{\OO(d)}$.  
\end{theorem}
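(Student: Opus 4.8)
The plan is to use the ``pair of $\epsilon$-packings'' interpretation mentioned above together with the standard packing bound available in doubling metrics. I would fix an $\epsilon$-scattering $(x_1,p_1),\dots,(x_\ell,p_\ell)\in F\times P$; since the \scatterdim equals $\ell-1$, it suffices to show $\ell \le (1/\epsilon)^{\OO(d)}$ for every such sequence, where $d$ is the doubling dimension of the metric on $P\cup F$.

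First I would observe that the centers $x_2,\dots,x_\ell$ all lie in the closed unit ball $\ball(p_1,1)$: for every $j\ge 2$ the pair $(x_1,p_1)$ appears before $(x_j,p_j)$, so the definition of an $\epsilon$-scattering directly gives $\delta(x_j,p_1)\le 1$. Then I would show that these centers are pairwise more than $\epsilon$ apart. Take $2\le i<j\le \ell$; because $(x_i,p_i)$ appears before $(x_j,p_j)$, the definition yields $\delta(x_i,p_i)>1+\epsilon$ and $\delta(x_j,p_i)\le 1$, hence by the triangle inequality
\[
\delta(x_i,x_j)\ \ge\ \delta(x_i,p_i)-\delta(x_j,p_i)\ >\ (1+\epsilon)-1\ =\ \epsilon .
\]
So $F^{*}=\{x_2,\dots,x_\ell\}$ is a set of $\ell-1$ points contained in $\ball(p_1,1)$ with pairwise distances exceeding $\epsilon$. (Symmetrically one could run the whole argument with $P^{*}=\{p_1,\dots,p_{\ell-1}\}\subseteq\ball(x_\ell,1)$, which is $\epsilon$-separated for the same reason.)

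Finally I would bound $|F^{*}|$ using the doubling property. Covering $\ball(p_1,1)$ by $2^d$ balls of radius $1/2$ and iterating $t:=\lceil\log_2(2/\epsilon)\rceil$ times, one covers it by at most $2^{dt}\le (4/\epsilon)^d$ balls of radius at most $\epsilon/2$; each such ball contains at most one point of $F^{*}$, since any two points in a ball of radius $\epsilon/2$ are within distance $\epsilon$. Therefore $\ell-1=|F^{*}|\le (4/\epsilon)^d=(1/\epsilon)^{\OO(d)}$, which is the desired bound on the \scatterdim.

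The argument is essentially bookkeeping; the only points that need a little care are getting the direction of the triangle inequality right in the separation estimate $\delta(x_i,x_j)>\epsilon$, and counting the number of halving steps correctly in the packing bound. I do not expect a genuine obstacle here.
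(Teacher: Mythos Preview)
Your proposal is correct and follows essentially the same approach as the paper: the paper's Observation~\ref{obs:scatttering-sequence} is exactly your claim that $\{x_2,\dots,x_\ell\}$ is an $\epsilon$-packing contained in the unit ball $\ball(p_1,1)$, proved via the same triangle-inequality step, and then the paper invokes the standard doubling packing bound that you spell out explicitly. The only cosmetic difference is that the paper cites the $(1/\epsilon)^{\OO(d)}$ packing bound as a well-known fact rather than deriving it by iterated halving.
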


We proceed to study the \scatterdim of bounded treewidth graphs. 

\begin{theorem}
\label{thm:treewidth}
For $\epsilon \in (0,1)$, the \scatterdim is  $\exp\left((1/\epsilon)^{\OO(\tw)}\right)$ for treewidth-$\tw$ graphs. 
\end{theorem}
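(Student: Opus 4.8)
The plan is to pass to a bounded‑radius, finitely‑scaled instance and then run an induction on the treewidth, with balanced separators of a tree decomposition driving the recursion.

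\emph{Step 1: normalisation.} Take $P=F=V(G)$ and let $\delta$ be the shortest-path metric of the treewidth-$\tw$ graph $G$. Let $\sigma=((x_1,p_1),\dots,(x_\ell,p_\ell))$ be an $\epsilon$-scattering of maximum length. From the definition, a repetition $x_i=x_j$ or $p_i=p_j$ with $i<j$ forces $\delta(x_j,p_i)\le 1$, contradicting $\delta(x_i,p_i)>1+\epsilon$ or $\delta(x_j,p_j)>1+\epsilon$, so all $x_i$ are distinct and all $p_i$ are distinct. Discarding the first and last pair, the triangle inequality (through $p_1$ and through $x_\ell$) puts every remaining $x_i,p_i$ inside $\ball(p_1,2)$, and since a shortest path between two points of $\ball(p_1,2)$ has length at most $4$ it stays inside $\ball(p_1,6)$; hence we may replace $G$ by $G[\ball(p_1,6)]$, which preserves all relevant distances and does not increase the treewidth. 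Thus $G$ has radius $O(1)$ about a root $r$. Finally, rounding every distance up to the nearest multiple of $\epsilon^2/c$ for a suitable constant $c$ changes $\epsilon$ and $\ell$ by at most constant factors and leaves only $T=O(1/\epsilon^2)$ distinct distance values (``scales''). All of this costs at most a constant factor in $\ell$, so it suffices to bound scatterings in this normalised regime.

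\emph{Step 2: induction on treewidth.} Let $s(\tw)=s(\tw,\epsilon)$ be the maximum scattering length over normalised graphs of treewidth at most $\tw$. The base case $\tw\le 1$ (forest metrics) is handled directly: confined to the unique path between two vertices of a bounded-radius tree, a scattering behaves like a scattering on a line, and a short argument gives $s(1)=\poly(1/\epsilon)$. For the inductive step, fix a width-$\tw$ tree decomposition of $G$ and pick a bag $B$ with $|B|\le\tw+1$ that balances the $2\ell$ endpoints $\{x_i,p_i\}_i$, so that every component of $G-B$ holds at most $\ell$ of them. Split the pairs into \emph{internal} ones (both $x_i$ and $p_i$ in the same component of $G-B$) and \emph{crossing} ones (the rest, in particular whenever an endpoint lies in $B$). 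For a fixed component $C$ the subsequence of internal pairs is again an $\epsilon$-scattering, now in the metric of $G[C\cup B]$ with the $B$-clique added carrying its $G$-distances; since this subinstance has at most half as many endpoints, recursing on the pieces (and absorbing the bag $B$, which is the step that must lower the width by one) controls the number of internal pairs. For a crossing pair, a shortest path from $x_i$ to $p_i$ meets $B$, so $\delta(x_i,p_i)=\delta(x_i,b)+\delta(b,p_i)$ for some $b\in B$; pigeonholing over the $\le\tw+1$ hubs $b$ and the $T=O(1/\epsilon^2)$ possible values of $\delta(x_i,b)$, a $1/((\tw+1)T)$-fraction of the crossing pairs use a common hub at a common radius, and such a single-hub single-radius family reduces to a one-dimensional (star-like) scattering of length $\poly(1/\epsilon)$. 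Combining internal and crossing contributions yields a recurrence of the shape $s(\tw)\le s(\tw-1)^{\poly(1/\epsilon)}$, which unrolls to $s(\tw)\le\exp((1/\epsilon)^{O(\tw)})$; undoing the constant-factor normalisation gives the claimed bound on the $\epsilon$-scatter dimension.

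\emph{Main obstacle.} The crux is the internal recursion: the piece $G[C\cup B]$ literally still has treewidth $\tw$, so one must show that the $\le\tw+1$ separator vertices can be absorbed (for instance by carrying $B$ as a set of terminals and peeling off one per recursion level, or via a potential that combines the balanced-separator progress with the hub/scale pigeonhole) so that the recursion closes after $\tw$ levels, each multiplying the exponent by $\poly(1/\epsilon)$. A second delicate point is the one-dimensional collapse of a single-hub family: unlike the doubling case of Theorem~\ref{thm:doubling-metric}, a sphere around a vertex in a bounded-treewidth graph need not be a bounded-doubling metric (already a star is not), so one must genuinely exploit the ordering condition of an $\epsilon$-scattering---that later centers must lie within $1$ of earlier points---rather than a packing bound.
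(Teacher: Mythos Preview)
Your plan is a natural balanced-separator induction, but it is genuinely different from the paper's argument and, as written, has real gaps beyond the two you flag.

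\textbf{What the paper actually does.} The paper never inducts on treewidth via separators. Instead it defines a \emph{spider} on a set $X$ as a small-radius ball with $|X|$ internally disjoint paths to $X$, and shows (Lemma~\ref{lem:bound-spider}) that $k{+}1$ pairwise disjoint spiders on a common set $|X|>3k$ certify treewidth $>k$. The main structural theorem (Theorem~\ref{thm:structural thm for tw}) extracts such spiders \emph{iteratively} from a long scattering: a shortest-path tree from a root is coarsened into an auxiliary tree of depth $O(1/\epsilon)$ whose nodes are $\epsilon/3$-balls; a high-degree node in this tree is a spider on a large subset $X''$ of the canonical packing; after deleting the spider, a careful ``same annulus'' argument shows that a $\Theta(\epsilon)$-fraction of the scattering survives in the residual graph. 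Repeating $k{+}1$ times yields the disjoint spiders. There is no separator-based recursion at all.

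\textbf{Where your plan breaks.} You already name the treewidth-decrease problem, but the proposed fix (carry $B$ as terminals and peel one off per level) does not close: each recursion level introduces a fresh separator of size $\tw{+}1$ while removing only one old terminal, so the terminal budget grows rather than shrinks. More seriously, the recurrence itself is underspecified. Even granting $\tw\!\to\!\tw{-}1$, balance only gives that \emph{each} component holds at most half the endpoints; nothing bounds the \emph{number} of components of $G-B$, so summing the internal contributions does not yield $s(\tw)\le s(\tw-1)^{\poly(1/\epsilon)}$ or any absolute bound. For the crossing pairs, your pigeonhole fixes a hub $b$ and the value $\delta(x_i,b)$ along the $x_i$--$p_i$ geodesic, but the \emph{covering} constraints $\delta(x_j,p_i)\le 1$ for $j>i$ need not route through $b$ (or through $B$ at all, if $x_j$ happens to lie in $p_i$'s component); so the residual family is not a one-hub ``star'' problem and the one-dimensional collapse does not apply as stated. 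Finally, Step~1's ``rounding every distance'' to $O(1/\epsilon^2)$ values is not a legal operation on a shortest-path metric of a bounded-treewidth graph; at best you can discretise $\delta(x_i,b)$ for the pigeonhole, but you cannot assume the whole metric has finitely many values.
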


This proof is based on a (delicate) combinatorial argument that, given graph $G$, parameter $t$ and an \scattering sequence of length at least doubly exponential in $t$, produces a ``certificate'' to the fact that the treewidth of $G$ is greater than $t$. The proof can be found in Section~\ref{sec:graph-metrics}. 

Next, we present a tool that allows ``bootstrapping'' of graph classes having bounded \scatterdim. This is done via a simple connection between \scatterdim and low-treewidth embedding (an active area of metric space embedding)~\cite{filtser2022low,fox2019embedding,cohen2020light}. This connection would allow us to reduce the question of bounding \scatterdim in a certain graph class to that in  bounded treewidth graphs (thereby invoking our Theorem~\ref{thm:treewidth}.) 

\begin{theorem}[informal, formal statement in Section~\ref{sec:planar}]
\label{thm:embedding and scatter dim}
The \scatterdim is bounded for any graph class ${\mathcal G}$ that admits an $\eta$-additive distortion embedding (error $\pm \eta \Delta$ where $\Delta$ is the diameter of the graph) into a graph whose treewidth only depends on $\eta$.    
\end{theorem}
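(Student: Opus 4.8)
The plan is to reduce bounding the \scatterdim of a graph $G\in\mathcal G$ to the bounded‑treewidth case already settled by Theorem~\ref{thm:treewidth}: take an \scattering of $G$, push it through the given low‑treewidth embedding, and invoke Theorem~\ref{thm:treewidth} on the host graph. The one genuine difficulty is a \emph{scale mismatch}. An \scattering may sit at an arbitrarily small scale (the nominal thresholds $1$ and $1+\epsilon$ in the definition scale freely), whereas the embedding of $G$ only controls the additive error by $\pm\eta\Delta$, where $\Delta$ is the \emph{global} diameter of $G$; if the scattering lives at a scale $\ll\Delta$, this error is useless. So the first move is to \emph{localize} the scattering into a bounded‑diameter induced subgraph that still belongs to $\mathcal G$.

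Concretely, fix an \scattering $(x_1,p_1),\dots,(x_\ell,p_\ell)$ of $G$ (here $\delta=d_G$ is the shortest‑path metric). First I would drop the two ``outlier'' pairs $(x_1,p_1)$ and $(x_\ell,p_\ell)$: it is immediate that $(x_2,p_2),\dots,(x_{\ell-1},p_{\ell-1})$ is again an \scattering, and two applications of the triangle inequality show that all of $x_2,\dots,x_{\ell-1},p_2,\dots,p_{\ell-1}$ lie in the ball $B:=\ball_G(x_\ell,2)$ --- using that $(x_i,p_i)$ precedes $(x_\ell,p_\ell)$, so $\delta(x_\ell,p_i)\le 1$, and that $(x_1,p_1)$ precedes everything, so $\delta(x_i,x_\ell)\le\delta(x_i,p_1)+\delta(p_1,x_\ell)\le 2$. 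Second, set $G':=G[\ball_G(x_\ell,4)]$. Provided $\mathcal G$ is closed under induced subgraphs --- which holds for every minor‑closed family, in particular for planar and bounded‑genus graphs --- we get $G'\in\mathcal G$; one also checks that $G'$ is connected, that $\operatorname{diam}(G')\le 8$, and that every shortest path of $G$ that is relevant to the trimmed scattering (in particular every $x_j$--$p_i$ path of length $\le 1$, whose vertices stay within distance $2+1$ of $x_\ell$) lies inside $\ball_G(x_\ell,4)$, so that $d_{G'}$ agrees with $\delta$ on those pairs. Hence $(x_2,p_2),\dots,(x_{\ell-1},p_{\ell-1})$ is an \scattering of $G'$ as well.

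Next I would apply the hypothesized embedding to $G'$ with parameter $\eta:=\epsilon/80$, obtaining a map $f\colon V(G')\to V(H)$ with $H$ of treewidth $t(\eta)$ and $\bigl|d_H(f(u),f(v))-d_{G'}(u,v)\bigr|\le\eta\cdot\operatorname{diam}(G')\le\epsilon/10$. Pushing the \scattering through $f$ and folding the $\epsilon/10$ error into the two thresholds, a short computation (using $(1+0.9\epsilon)/(1+0.1\epsilon)\ge 1+\epsilon/2$ for $\epsilon\in(0,1)$) shows that $(f(x_2),f(p_2)),\dots,(f(x_{\ell-1}),f(p_{\ell-1}))$ is an $(\epsilon/2)$-scattering of $H$ at scale $1+\epsilon/10$; rescaling all edge lengths of $H$ by $1/(1+\epsilon/10)$ (which leaves treewidth unchanged) makes it an $(\epsilon/2)$-scattering, at the nominal scale $1$, of a graph of treewidth $t(\epsilon/80)$, and its length is $\ell-2$. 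Theorem~\ref{thm:treewidth} then gives $\ell-2\le\exp\!\bigl((2/\epsilon)^{\OO(t(\epsilon/80))}\bigr)$, so $\ell$, and therefore the \scatterdim of $G$, is bounded by a function of $\epsilon$ alone; plugging in the known quantitative low‑treewidth embeddings for planar and bounded‑genus graphs makes this explicit, and Section~\ref{sec:planar} would carry out exactly this composition.

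I expect the crux to be the localization step, not the error‑chasing that follows it. One must decide what to discard --- the first center $x_1$ can be arbitrarily far from its own cluster, and the last point $p_\ell$ arbitrarily far from $x_\ell$, so both of their pairs must go --- and then pick the radius of the induced ball carefully: large enough that no shortest path used by the trimmed scattering escapes it (so that distances are preserved \emph{exactly}, not merely up to the one‑sided subgraph inequality), yet still a constant multiple of the scattering scale (so that the embedding's $\eta\cdot\operatorname{diam}$ error is negligible against $\epsilon$). Once the scattering is trapped inside such a $G'\in\mathcal G$, the remainder is bookkeeping with the definition of \scattering.
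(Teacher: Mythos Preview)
Your proposal is correct and follows the same overall strategy as the paper: push the \scattering through the low-treewidth embedding and invoke Theorem~\ref{thm:treewidth} on the host graph, then rescale. The one substantive difference is your localization step. The paper's proof in Section~\ref{sec:planar} applies the embedding directly to $G$ and writes $\delta_H(\phi(x_j),\phi(p_i))\le 1+2\eta$, which implicitly treats the diameter of $G$ as at most~$2$; you make this rigorous by first dropping the two extremal pairs and passing to the induced subgraph $G'=G[\ball_G(x_\ell,4)]$ of diameter $\le 8$, at the price of the extra hypothesis that $\mathcal G$ is closed under induced subgraphs (which holds in all intended applications, in particular planar and minor-closed classes). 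Your proof thus patches a point the paper glosses over. The rescaling arithmetic that follows is identical in spirit, with your constants $\eta=\epsilon/80$ and final slack $\epsilon/2$ in place of the paper's $\eta=\epsilon/10$ and $\epsilon/3$.
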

 
Such a connection, combined with the  embedding result of~\cite{fox2019embedding},  implies the following.

\begin{theorem}
\label{thm:planar}
For $\epsilon \in (0,1)$, the \scatterdim is  $\exp\left(\exp({\sf poly}(\nicefrac{1}{\epsilon}))\right)$ for planar graphs. 
\end{theorem}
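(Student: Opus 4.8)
The plan is to obtain Theorem~\ref{thm:planar} by combining three ingredients: (a) the planar-to-bounded-treewidth embedding of Fox, Epstein, Klein, and Schild~\cite{fox2019embedding}; (b) the connection between additive-distortion embeddings and \scatterdim captured by Theorem~\ref{thm:embedding and scatter dim}; and (c) the quantitative bound of Theorem~\ref{thm:treewidth} for bounded-treewidth metrics. Concretely, I will use the embedding in the form: for every planar graph $G$ of diameter $\Delta$ and every $\eta\in(0,1)$ there is an edge-weighted graph $G_\eta$ with $\tw(G_\eta)=\poly(1/\eta)$ and a map $\phi\colon V(G)\to V(G_\eta)$ such that $\delta_G(u,v)\le \delta_{G_\eta}(\phi(u),\phi(v))\le \delta_G(u,v)+\eta\Delta$ for all $u,v\in V(G)$; crucially the treewidth depends only on $\eta$, not on $n$. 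Given this, the theorem amounts to instantiating (the quantitative version of) Theorem~\ref{thm:embedding and scatter dim} on the class of planar graphs and bookkeeping how the parameters compose.

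The one subtlety --- which is exactly what Theorem~\ref{thm:embedding and scatter dim} encapsulates --- is that a scattering is a \emph{local} object, so the globally scaled error $\eta\Delta$ can be neutralized even though $\Delta$ is unbounded. Let $(x_1,p_1),\dots,(x_\ell,p_\ell)$ be an \scattering in a planar graph $G$. Discarding the first and last pair (losing $2$ in the length), the remaining sub-sequence is still an \scattering, and by the defining inequalities every $x_i$ and $p_i$ occurring in it lies in $\ball(p_1,2)$, while every pair whose distance the scattering constrains has distance at most $3$ in $G$. Hence, passing to the induced subgraph $H:=G[\ball(p_1,5)]$ --- still planar, of diameter at most $10$ --- leaves $\delta_G$ unchanged on all relevant pairs (every vertex on a shortest path between two such vertices is within distance $5$ of $p_1$), so we have an \scattering of length $\ell-2$ inside a \emph{bounded-diameter} planar graph. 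Applying the embedding to $H$ with $\eta:=\epsilon/30$, the ``far'' distances $>1+\epsilon$ can only grow, while each ``covering'' distance $\le 1$ grows to at most $1+10\eta\le 1+\epsilon/3$; rescaling the reweighted target graph by $1/(1+\epsilon/3)$ (which does not change its treewidth) then turns $(\phi(x_i),\phi(p_i))$ into an $(\epsilon/2)$-scattering of length $\ell-2$ in a graph of treewidth $\poly(1/\eta)=\poly(1/\epsilon)$, since $(1+\epsilon)/(1+\epsilon/3)\ge 1+\epsilon/2$ for $\epsilon\in(0,1)$.

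It remains to invoke Theorem~\ref{thm:treewidth}, which bounds the $(\epsilon/2)$-scatter dimension of a treewidth-$\tw$ graph by $\exp\big((1/\epsilon)^{\OO(\tw)}\big)$; with $\tw=\poly(1/\epsilon)$ this is $\exp\big((1/\epsilon)^{\poly(1/\epsilon)}\big)=\exp\big(\exp(\poly(1/\epsilon)\log(1/\epsilon))\big)=\exp(\exp(\poly(1/\epsilon)))$, so $\ell-2$ --- and hence the original scattering length and the \scatterdim of $G$ --- is bounded by $\exp(\exp(\poly(1/\epsilon)))$. I expect the main obstacle to be pinning down the precise form of the embedding of~\cite{fox2019embedding} that the argument needs: it must produce treewidth \emph{polynomial} in $1/\eta$ with \emph{no} dependence on $n$, because an exponential dependence on $1/\eta$, or even a stray $\polylog(n)$ factor, would degrade the final bound to a triple exponential or make it instance-dependent. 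The remaining points --- choosing the localization radius so that the induced-subgraph metric exactly agrees with $\delta_G$ on the constrained pairs, and the rescaling arithmetic --- are routine once $\eta$ is taken to be a small constant multiple of $\epsilon$.
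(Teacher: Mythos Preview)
Your proposal is correct and follows essentially the same route as the paper: localize the scattering to a bounded-diameter planar subgraph, apply the Fox--Epstein--Klein--Schild embedding with $\eta=\Theta(\epsilon)$, rescale to obtain an $\Theta(\epsilon)$-scattering in a graph of treewidth $\poly(1/\epsilon)$, and finish with Theorem~\ref{thm:treewidth}. In fact you are more careful than the paper about the localization step: the paper's proof of the embedding-to-scatter-dimension lemma writes the additive error as $2\eta$ without explicitly restricting to a bounded-diameter subgraph, whereas you spell out why passing to $G[\ball(p_1,5)]$ preserves all constrained distances and has diameter at most~$10$.
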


Moreover, further progresses in the area of low-treewidth embedding would  lead to even wider classes of graphs that have bounded \scatterdim, e.g., it seems plausible that minor-free graphs admit such an embedding~\cite{filtser2022low}.  

Unfortunately, the bounded dimensionality does not hold in the high-dimensional (continuous) Euclidean metric.\footnote{To see this, consider the  sequence $(x_1,p_1)\dots, \allowbreak (x_{d-1},p_{d-1})$ where, for each $i \in [d-1]$, the point $x_i\in\bR^d$ has $i$-th coordinate  $1/\sqrt{2}$ and all other coordinates are zero. Define points $p_i=-x_i$ for all $i \in [d-1]$. 
It is easy to check that this sequence is a  \scattering[$(\sqrt{2}-1)$].
This example  implies that the \scatterdim of continuous Euclidean metrics ${\mathbb R}^d$ can be at least $d-1$ (unbounded in $\epsilon$). In fact, the \scatterdim is as high as $(\nicefrac{1}{\epsilon})^{\Omega(d)}$.} To handle the high-dimensional continuous Euclidean setting, we present a stronger version of \scatterdim, that we call  \emph{\algscatterdim}. The setting of the game is the same except that the center player would \emph{optimize} to end the game early, while the point player would  be interested in prolonging the game indefinitely. This means, they play against each other. 
A \textit{centering strategy} is a function $\sigma\colon 2^{P} \rightarrow F$ that specifies how the center $x_i = \sigma(\{p_1,\ldots, p_{i-1}\})$ would be chosen by the center player, given the points $p_1,\ldots, p_{i-1}$ played in the preceding rounds. 
The \scatterdim[$(\sigma,\epsilon)$] is the maximum number of rounds when the center player always plays strategy $\sigma$, and the \algscatterdim is the minimum \scatterdim[$(\sigma,\epsilon)$] over all strategies $\sigma$. We remark that our actual definition is more involved, as it considers a weighted version of the game.

\begin{restatable}[Bounding Algorithmic Scatter Dimension]{theorem}{conteuclscatter}\label{thm:scatter-cont-eucl}
The continuous Euclidean space $(P,F,\delta)$, that is, $P\subsetneq\bR^d$ finite, and $F=\bR^d$, has \algscatterdim $\OO(\nicefrac{1}{\epsilon^4}\log\nicefrac{1}{\epsilon})$.
\end{restatable}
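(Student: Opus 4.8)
The plan is to exhibit one explicit centering strategy $\sigma$ and to bound the length of every $\epsilon$-scattering consistent with it, with a bound independent of the dimension $d$. Because $F=\bRd$, the right choice for $\sigma$ is purely geometric (in particular oblivious to the norm objective): given the points $p_1,\dots,p_{j-1}$ refuted so far, let $\sigma(\{p_1,\dots,p_{j-1}\})$ be the center $c_j$ of their \emph{minimum enclosing ball} $B_j$, of radius $r_j$; in the weighted version of the game the refuted pairs also carry radii and weights, and one takes the center of the smallest ball covering each $p_i$ within its own radius after a global rescaling. This strategy produces a legal move whenever any legal move exists: the minimum enclosing ball is the inclusion-minimal ball through $p_1,\dots,p_{j-1}$, so if some $x\in F$ has $\{p_1,\dots,p_{j-1}\}\subseteq\ball(x,1)$ then $r_j\le 1$ and $c_j$ is itself such a center. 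Since a minimum enclosing ball in $\bRd$ can be computed exactly in polynomial time, or to within a $(1+\epsilon)$ factor in time $\poly(n,d,1/\epsilon)$ via a standard core-set, $\sigma$ is an admissible strategy witnessing the algorithmic scatter dimension.

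The engine of the analysis is the classical Euclidean ball-growth inequality. If $c$ is the center of the minimum enclosing ball, of radius $r$, of a finite set $S\subseteq\bRd$, then $c$ lies in the convex hull of the boundary points $\{q\in S:\|q-c\|=r\}$; hence for every $c'\in\bRd$ there is a boundary point $q$ with $\langle q-c,\,c'-c\rangle\le 0$, and therefore $\|c'-q\|^2\ge r^2+\|c'-c\|^2$. If in addition a point $p$ satisfies $\|p-c\|=r+\delta$ with $\delta>0$, then also $\|c'-p\|\ge r+\delta-\|c'-c\|$. Writing $t=\|c'-c\|$, the new center must satisfy $r'\ge\max\{\sqrt{r^2+t^2},\,r+\delta-t\}$; minimizing the right-hand side over $t\ge 0$ yields $r'\ge r+\tfrac{\delta^2}{2(r+\delta)}$. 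In words: refuting a point that sits a little outside the current minimum enclosing ball forces that ball to grow by a quantifiable amount.

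I would first run this in the \emph{unweighted} game. In round $j$ the strategy plays $c_j$ with $r_j\le1$, and the refuted point has $\|c_j-p_j\|>1+\epsilon$, so $\delta:=\|c_j-p_j\|-r_j>\epsilon$. If the game reaches round $j+1$, then $c_{j+1}$ is within distance $1$ of every $p_i$ with $i\le j$; comparing $c_j$ and $c_{j+1}$ through $p_1$ via the triangle inequality gives $\|c_j-p_j\|\le 3$, hence $r_j+\delta\le 3$ and $r_{j+1}\ge r_j+\tfrac{\epsilon^2}{6}$. Since every $r_j$ lies in $[0,1]$, the game lasts $\OO(1/\epsilon^2)$ rounds.

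The main obstacle --- and the source of the extra $\tfrac{1}{\epsilon^2}\log\tfrac1\epsilon$ factor in the statement --- is the \emph{weighted} game, where the refutation radii $R_j$ range over many scales and are constrained only through an aggregate weighted budget (after normalization, roughly $\sum_j w_jR_j\le\sum_j w_j$), so a single potential $r_j$ no longer lives in a constant-width window. The plan there is: (i) discard the $\OO(\epsilon)$-fraction of the total weight carried by rounds whose radius is smaller than the current effective radius by more than a $\poly(1/\epsilon)$ factor, which confines the relevant scales to a multiplicative window of width $\poly(1/\epsilon)$ --- this is exactly where the $\log(1/\epsilon)$ comes from; and (ii) within each of the $\OO(\log(1/\epsilon))$ dyadic scale classes, rescale that class to unit scale and re-run the minimum-enclosing-ball potential, charging each ``far'' refutation to a $(1+\Omega(\epsilon^2))$ multiplicative increase of the rescaled radius while carrying the coarser-scale points as additional covering constraints on the ball. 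Making the cross-scale interactions and the weighted budget line up without losing more than a $\poly(1/\epsilon)$ factor is the delicate part, and it is what turns the clean $\OO(1/\epsilon^2)$ of the unweighted game into the claimed $\OO(\epsilon^{-4}\log(1/\epsilon))$.
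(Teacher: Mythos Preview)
Your single-scale analysis via the minimum-enclosing-ball potential is essentially the Kumar--Yildirim result the paper cites (Lemma~\ref{lem:ky}), so that part is fine. The gap is your model of the ``weighted'' game. You assume requests carry both radii $R_j$ and weights $w_j$ subject to a budget $\sum_j w_j R_j \le \sum_j w_j$, and your plan rests on discarding an $O(\epsilon)$-weight fraction of tiny-radius rounds. But Definition~\ref{def:algscattercomp} has no weights and no budget: an algorithmic scattering is just a sequence $(x_i,p_i,r_i)$ with $x_i=\cC(\{(p_j,r_j):j<i\},\epsilon/2)$ and $\delta(x_i,p_i)>(1+\epsilon)r_i$, the $r_i$ otherwise completely unconstrained, and the quantity to be bounded is the number of triples \emph{sharing one fixed radius value}, not the total length. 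Your step (i) therefore addresses a constraint that does not exist, and step (ii) is vague exactly where the real work lies.

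The paper's multi-scale reduction is different and needs no weight accounting. It wraps $\cC_{\textnormal{KY}}$ in an outer \pc algorithm (Algorithm~\ref{alg:aspect-ratio-alg-scatter-comp}) that forwards to $\cC_{\textnormal{KY}}$ only the requests whose radii are within a factor $O(1/\epsilon)$ of the current minimum radius $\rho$. The key fact (Lemma~\ref{lem:cluster-implication}) is that if a feasible center exists at all, then any request $(p',r')$ with $r'\gg \rho/\epsilon$ is \emph{automatically} $(1+\eta)$-satisfied by whatever satisfies a small-radius request, just by the triangle inequality---so large-radius requests can be dropped outright, not merely down-weighted. This reduces everything to aspect ratio $\tau=O(1/\epsilon^2)$, giving $O(1/\epsilon^4)$ triples per phase; a fixed radius value remains in the active window for only $O(\log(1/\epsilon))$ phases as $\rho$ shrinks, yielding the claimed bound (Lemma~\ref{lem:critertion-algscattercomp}). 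The idea missing from your proposal is precisely this implication: once the small-radius requests are handled, the large-radius ones come for free, so no cross-scale potential bookkeeping is needed.
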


\subsubsection*{EPAS for General Norm Clustering: Bypassing Coresets}
Now we are ready to explain our main technical result that would allow us to obtain EPAS for all metrics having bounded \scatterdim.

A generic tool whose existence immediately implies an EPAS is an $\epsilon$-coreset 
---a ``compression'' of an input instance $(P,F,\delta)$ into a much smaller instance so that the cost of any solution is preserved within a factor of $(1\pm \epsilon)$. The existence of an $\epsilon$-coreset of size depending only on $\epsilon$ and $k$ would immediately imply an EPAS (but not vice versa): First, use the $\epsilon$-coreset to compress the instance $(P,F,\delta)$ to $(P',F', \delta')$ where $|P'| \leq \gamma(\epsilon,k)$. Then enumerate all possible partitionings of $P'$ into $k$ sets $P'_1,\ldots, P'_k$ (there are at most $k^{\gamma(\epsilon,k)}$ such partitions). For each set $i \in [k]$, compute the optimal center for $P'_i$. We choose the partition that gives the lowest total cost.

This generic method, unfortunately, faces a serious  information-theoretic limitation, that is, even for \kc,  $\epsilon$-coresets of desirable sizes do not exist in high-dimensional Euclidean spaces~\cite{feldman2011unified}. Such lower bounds imply that one cannot hope to prove our (unified) results via the coreset route: While coresets are known for \kzc for constant $z$~\cite{cohen2021new}---allowing to handle \kmns and \km in a uniform fashion---it is impossible to extend this approach to \kc.
For more complex clustering objectives, such EPASes were in fact not known even for low dimension. For example, the coreset of Braverman et al.~\cite{braverman2019coresets} for \okm in $\bRd$ has size $\OO_{\epsilon,d}(k^2 \log^2 n)$ and therefore does not give an EPAS even in low dimension.

Badoiu, Har-Peled, and Indyk~\cite{badoiu-etal:approximate-clustering-coresets} presented an EPAS for \kc in high-dimensional Euclidean spaces (bypassing coresets in the above sense).  
Therefore, an obvious open question is whether their techniques can be extended to give an EPAS for any other clustering objective. Unfortunately, this is not even known for simple objectives such as \pkc. In fact, even 
the known  EPASes for \kmns~\cite{kumar2010linear} and \kc~\cite{badoiu-etal:approximate-clustering-coresets} are conceptually very different; to our knowledge, no approximation schemes   handle \kmns and \kc in a modular way. 

Our main technical result is presented in the following theorem. 
We remark that our techniques do not rely on any coreset constructions (thus bypassing the coreset lower bounds for \kc).

\begin{restatable}{theorem}{thmmain}\label{thm:main}\label{thm:main-semi-formal}

Let $\cM$ be a class of metric spaces that is closed under scaling distances by a positive constant. There is a randomized algorithm that  computes for any \knc instance $\cI=(M,f,k)$ with metric $M=(P,F,\delta)\in \cM$,  and any $\epsilon\in (0,1)$, with high probability a $(1+\epsilon)$-approximate solution if the following two conditions are met.

 \begin{enumerate}[(i)]
      \item There is an efficient algorithm evaluating for any distance vector $\vc{x}\in\bRp$ the objective $f(\vc{x})$ in time $T(f)$. 

     \item There exists a function $\lambda\colon \bR_{+} \rightarrow \bR_{+}$, such that for all $\epsilon >0$, the \algscatterdim of $\cM$ is at most  $\lambda(\epsilon)$.
 \end{enumerate}
 The running time of the algorithm is $\exp{\left(\widetilde{\OO}\left(\frac{k\lambda(\nicefrac{\epsilon}{10})}{\epsilon}\right)\right)}\cdot\poly(|M|)\cdot T(f)$.

\end{restatable}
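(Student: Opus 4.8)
The plan is to reduce the problem to a purely \emph{pointwise} approximation task, solve that by simulating the scatter game of condition~(ii), and realize the moves of the game by random sampling.

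\textbf{Step 1 (reduction to pointwise approximation).} Fix an optimal solution $X^{*}$ and set $\rho_p:=\delta(p,X^{*})$ for $p\in P$. Because $f$ is a \emph{monotone} norm, any $X$ with $|X|=k$ and $\delta(p,X)\le(1+\epsilon)\rho_p$ for all $p$ satisfies $f(\vcd(P,X))\le f\bigl((1+\epsilon)\vcd(P,X^{*})\bigr)=(1+\epsilon)\opt$ by monotonicity and homogeneity. So it suffices to compute such an $X$. We guess $\opt$ up to a factor $1+\epsilon$ (one of $\poly(|M|)$ values, using bounded bit-complexity of the input distances) and, using that $\cM$ is closed under scaling, rescale so that $\opt\in[1,2)$; condition~(i) then lets us verify a candidate solution against the guessed value.

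\textbf{Step 2 (the combined scatter game).} Let $\sigma$ be a centering strategy witnessing the bound $\lambda(\epsilon/10)$ of condition~(ii). Maintain sets $S_1,\dots,S_k\subseteq P$, initially empty, with $c_j:=\sigma(S_j)$ and $X:=\{c_1,\dots,c_k\}$. Repeat: if some $p$ is \emph{uncovered}, i.e.\ $\delta(p,X)>(1+\epsilon/10)\rho_p$, pick such a $p$, let $j$ be the index of a nearest center $c^{*}_j$ of $X^{*}$ to $p$, add $p$ to $S_j$, and update $c_j$; otherwise halt and output $X$. The key claim is that this halts within $k\lambda(\epsilon/10)$ iterations. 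Indeed, consider the subsequence of iterations modifying a fixed $S_j$: every point ever placed in $S_j$ lies in the optimal cluster $C_j$ of $c^{*}_j$, so $c^{*}_j$ covers all of them within their weights $\rho_q=\delta(q,c^{*}_j)$ and hence $\sigma$ is being played legally; moreover when $p$ is added we have $\delta(c_j,p)\ge\delta(p,X)>(1+\epsilon/10)\rho_p$, which is exactly the condition making the sequence of (center played, point added) pairs a legal $(\epsilon/10)$-scattering for the weighted instance $(C_j,c^{*}_j,(\rho_q))$. Thus $|S_j|\le\lambda(\epsilon/10)$. On termination every point is covered, so by Step~1 the cost of $X$ is at most $(1+\epsilon/10)\opt\le(1+\epsilon)\opt$.

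\textbf{Step 3 (making Step 2 algorithmic).} Step~2 uses two things the algorithm lacks: the values $\rho_p$ (to detect uncovered points and to choose $j$) and a usable description of $\sigma$. For $\sigma$: an efficiently computable optimal strategy is always available — either $M$ has bounded ordinary \scatterdim, in which case \emph{any} polynomial-time rule returning a center that covers the current points (one exists, namely $c^{*}_j$) already realizes the bound, or the proof bounding the \algscatterdim of $\cM$ (e.g.\ Theorem~\ref{thm:scatter-cont-eucl}) supplies an explicit efficient strategy. For the $\rho_p$: replace the two non-constructive choices by a sampling step and a branching step. At the start of each iteration, test via condition~(i) whether $f(\vcd(P,X))\le(1+\epsilon)\opt$ already, outputting $X$ if so; otherwise sample $p$ from a cost-proportional distribution and guess $j\in[k]$. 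Concretely, let $g\ge 0$ be a functional in the unit ball of the dual norm $f^{*}$ with $\langle g,\vcd(P,X)\rangle=f(\vcd(P,X))$ (which exists since $f$ is a monotone norm, and whose entries can be estimated with polynomially many evaluations of $f$), and sample $p$ with probability proportional to $g_p\,\delta(p,X)$. Since the covered points contribute at most $(1+\epsilon/10)\sum_p g_p\rho_p\le(1+\epsilon/10)\langle g,\vcd(P,X^{*})\rangle\le(1+\epsilon/10)\opt$ to $\sum_p g_p\,\delta(p,X)=f(\vcd(P,X))$, whenever $f(\vcd(P,X))>(1+\epsilon)\opt$ the uncovered points carry an $\Omega(\epsilon)$ fraction of the sampling mass; hence $p$ is uncovered with probability $\Omega(\epsilon)$ and the guessed $j$ is correct with probability $1/k$. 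Over the at most $k\lambda(\epsilon/10)$ iterations a run succeeds with probability $\bigl(\Omega(\epsilon/k)\bigr)^{O(k\lambda(\epsilon/10))}$; repeating it $\exp\bigl(\widetilde{\OO}(k\lambda(\epsilon/10)/\epsilon)\bigr)$ times (and over the $\poly(|M|)$ guesses of $\opt$), and returning the best feasible solution found, succeeds with high probability. Each run performs $O(k\lambda(\epsilon/10))$ iterations, each costing $\poly(|M|)\cdot T(f)$ (evaluations of $f$ plus one call to $\sigma$), giving the stated running time.

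\textbf{Main obstacle.} The crux is Step~3: turning the purely \emph{existential} guarantee of condition~(ii) into an \fpt\ algorithm with no knowledge of $X^{*}$. The two points needing care are (a) showing the uncovered points always carry a non-negligible share of the (dual-weighted) cost — exactly where monotonicity of the norm, via its dual representation, is used — and (b) verifying that interleaving the $k$ per-cluster games does not push the total iteration count past $k\lambda(\epsilon/10)$, which is the per-$S_j$ scattering argument of Step~2. A minor alternative, if the dual-weighted sampling is awkward to implement for some norm, is to guess in each iteration a geometric scale of $\rho_p$ and sample uniformly among the points currently that far from $X$, at the cost of an extra logarithmic factor in the exponent that the $\widetilde{\OO}$ absorbs.
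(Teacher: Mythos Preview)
There is a genuine gap in your termination argument. In Step~2 you conclude $|S_j|\le\lambda(\epsilon/10)$ by calling the per-cluster sequence ``a legal $(\epsilon/10)$-scattering for the weighted instance $(C_j,c^{*}_j,(\rho_q))$''. But look at Definition~\ref{def:algscattercomp}: the \algscatterdim bounds only the number of triples \emph{with the same radius value}, not the total length of an \algscattering. (The non-algorithmic Definition~\ref{def:scattercomp} is even more restrictive: it is about \emph{unit} radii only.) Your per-cluster sequence has radii $\rho_p=\delta(p,X^{*})$, which vary arbitrarily across the cluster, so condition~(ii) gives you no bound on $|S_j|$ whatsoever. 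The paper's overview (Section~\ref{sec:overview}, paragraph ``\wkc with Arbitrary Weights'') gives an explicit counterexample already in $\mathbb{R}^1$: requests $(p^{(i)},(1+2\epsilon)^{-i})$ at (essentially) the origin, with centers $x^{(i)}=(1+2\epsilon)^{1-i}$, form a valid scattering of unbounded length. Your Step~2 game, run on a cluster whose optimal distances $\rho_p$ span many scales, is exactly this situation.

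This is not a minor patch; it is the main technical content of the paper's proof. What the paper does, and what your proposal is missing, is a mechanism to control the \emph{aspect ratio} of the radii that ever appear as requests to a fixed cluster. Concretely: it computes norm-based initial upper bounds $u(p)$ (line~\ref{step:compute-initial-upper-bounds}), greedily seeds the cluster constraints so that $\delta(p,X)\le 4u(p)$ holds throughout (Lemma~\ref{lem:anchors-satisfied}), and then \emph{restricts sampling} to the set $A=\{p:\delta(p,X)\ge \epsilon u(p)/(1000k)\}$. The delicate Lemma~\ref{lem:sample-witness} shows that witnesses in $A$ still carry $\Omega(\epsilon)$ of the subgradient-weighted mass; Lemma~\ref{lem:aspect-ratio} then shows that all request radii to a fixed cluster lie within a factor $\OO(k/\epsilon^2)$; finally Lemma~\ref{lem:slice-alg-scatter-dim} converts ``bounded per-radius count'' plus ``bounded aspect ratio'' into a bound of $\OO(\lambda(\epsilon/10)\log(k/\epsilon)/\epsilon)$ on the sequence length. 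Your Step~3 has the subgradient sampling idea (which matches the paper), but without the $u(p)$ bounds and the restriction to $A$ you cannot bound the number of iterations, and the running-time claim does not follow.
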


Note that the complexity of computing $f$ appears only as a linear factor in the running time. For instance, for \sfm, the number $m$ of groups affect only the computational cost of $f$, and therefore the running time is polynomial in $m$.

Our algorithm is clean, simple, and entirely oblivious to both the objective and the structure of the input metric.

The dependency on $k$ in the exponent of our running time is  singly exponential ($\exp({\widetilde{\OO}}_{\epsilon}(k))$). In terms of $k$, we therefore match the running time of the fastest known EPAS for the highly restrictive special case of high-dimensional \kmns~\cite{kumar2010linear}. Moreover,  the dependency on $\epsilon$ in the exponent could be improved by proving better bounds on the \scatterdim of a metric space of interest, e.g.,  $\lambda(\epsilon) = {\sf poly}(1/\epsilon)$ implies the EPAS running time $\exp(\widetilde\OO(k) \cdot {\sf poly}(1/\epsilon))$.  

\section{Overview of Techniques} 
\label{sec:overview}

In this section, we give an informal overview of the technical ideas appearing in the paper. The main result will be built step by step: we believe that it is already interesting to understand our main result specialized to \wkc and \wkm. 
Our starting point is the EPAS of Bad\u{o}iu et al.~\cite{badoiu-etal:approximate-clustering-coresets} for unweighted \kc that works on high-dimensional Euclidean spaces. We redesign and change this algorithm in order to be able to present it with a clean division into two parts: a simple branching algorithm and a bound on the abstract concept of (algorithmic) \scatterdim. This way, we obtain a sharp separation between the branching algorithm, which is specific to the objective and the bound on \scatterdim, which is specific to the metric. This can be contrasted with techniques based on coresets, which are inherently specific both to a single objective and to a single metric. The main message of the paper is that, with the right combination of additional ideas, this framework can be significantly generalized both in terms of objectives and metric spaces.

This section presents the main algorithmic ideas in three steps.

\begin{enumerate}
\item The algorithm for unweighted \kc can be generalized to \wkc  in a not completely obvious way.
\item Building on the algorithm for \wkc, we can solve \wkm with a preprocessing and a random selection step.
\item The \wkm algorithm can be generalized to arbitrary monotone norms by considering infinitely many \wkm instances defined by the subgradients.
    \end{enumerate}

While some of the challenges on the way may appear to have other approaches promising at first glance, we want to emphasize that it is nontrivial to find the combination of ideas that can be integrated together to obtain our main result. In particular, for \wkm the initial upper bounds have to be defined carefully in a way that allows, at the same time, an efficient random selection step and generalization to arbitrary monotone norms.

\newcommand{\iletter}{\kappa}

\paragraph{\wkc with Bounded Number of Different Weights.}
Our starting point is a simple branching algorithm that is inspired by the EPAS  of Bad\u{o}iu et al.~\cite{badoiu-etal:approximate-clustering-coresets} for unweighted \kc. Instead of branching, it will be more convenient for us to present it as a randomized algorithm. Furthermore, we consider the more general setting of \wkc: the objective is to find a set $O$ of $k$ centers that minimizes $\max_{p}w(p)\delta(p,O)$. Let us first present the algorithm with the simplifying assumption that $w$ is a weight function on the points whose range contains only at most $\tau$ different values. The unweighted problem corresponds to $w(p)=1$ for every $p\in P$ and hence $\tau=1$. It will be convenient to assume that we (approximately) know the value of $\opt$.

We start with $k$ arbitrarily chosen candidates $X=\{x_1$, $\dots$, $x_k\}$ for the $k$ centers. We additionally introduce $k$ sets of \emph{requests} $Q_1$, $\dots$, $Q_k$, where each request is of the form $(p,r)$ with a point $p\in P$ and radius $r>0$. For every $\iletter\in[k]$, we impose the \emph{cluster constraint} requiring that, for every $(p,r)\in Q_\iletter$, center $x_\iletter$ should be at distance at most $r$ from~$p$. Initially, we set $Q_\iletter=\emptyset$ for every $\iletter$, which means that these conditions are trivially satisfied. If we have $\max_{p}w(p)\delta(p,X)> (1+\epsilon)\opt$,  then we can stop, as we have a $(1+\epsilon)$-approximate solution at our hands. Otherwise, we have a point $p$ with $\delta(p,X)\ge (1+\epsilon)\opt/w(p)$, while it is at distance at most $\opt/w(p)$ from some center of a hypothetical optimum solution $O$. Thus the algorithm selects a $\iletter\in [k]$ uniformly at random, hoping it to be the index of the center that is at distance at most $\opt/w(p)$ from $p$ in the optimum solution $O$. Then we introduce the request $(p,\opt/w(p))$ into the set $Q_\iletter$ and select $x_\iletter$ to be a center that satisfies the cluster constraint defined by all the requests in the updated $Q_\iletter$. 
Observe that if every random choice was compatible with the hypothetical optimum solution $O$, then the algorithm is always able to find such a center, as the requests in $Q_\iletter$ are always satisfied by the $\iletter$-th center of the optimum solution $O$. 

We claim that if the \scatterdim of the metric is bounded, then this algorithm stops after a bounded number of steps, either by finding an approximate solution or by failing to find a center satisfying the cluster constraints of some $Q_\iletter$.
Let $x^{(1)}_\iletter$, $\dots$, $x^{(\ell)}_\iletter$ be the different candidates for the $\iletter$-th center throughout this branch. Let $(p^{(1)}_\iletter,r^{(1)}_\iletter)$, $\dots$, $(p^{(\ell)}_\iletter,r^{(\ell)}_\iletter)$ be the requests introduced to $Q_\iletter$: that is, for $1\le j\le \ell$, the center $x^{(j)}_\iletter$ was chosen to be at distance at most $r^{(i)}_\iletter$ from every $p^{(i)}_\iletter$ for $1 \le i < j$,  but later was found to be at distance at least $(1+\epsilon)r^{(j)}_{\iletter}$ from $p^{(j)}_{\iletter}$.
As there are at most $\tau$ different weights in the input, at least $\ell'=\ell/\tau$ of these requests have the same radius. That is, there is a subsequence $(x^{(s_1)}_\iletter,p^{(s_1)}_\iletter,r^{(s_1)}_\iletter)$, $\dots$, $(x^{(s_{\ell'})}_\iletter,p^{(s_{\ell'})}_\iletter,r^{(s_{\ell'})}_\iletter)$ where every $r^{(s_{j)}}_\iletter$ for $j\in[\ell']$ is the same value $r\ge 0$.
 This means that we have a subsequence $(\bar x_1,\bar p_1)$, $\dots$, 
$(\bar x_{\ell'},\bar p_{\ell'})$ with the property that $\delta(\bar x_i,\bar p_i)>(1+\epsilon)r$, but $\delta(\bar x_i,\bar p_j)\le r$ for every $i<j$. By scaling down every distance by a factor of $r$, this is precisely an \scattering of length $\ell'$. If we consider a class of metrics closed under scaling where the \scatterdim is $\lambda(\epsilon)$, then this sequence cannot have length longer than $\lambda(\epsilon)$, implying that $\ell\le \tau \cdot \lambda(\epsilon)$. We can conclude that the algorithm can introduce at most $\tau \cdot \lambda(\epsilon)$ requests into each $Q_\iletter$, hence the algorithm cannot perform more than $k\cdot \tau\cdot \lambda(\epsilon)$ iterations. 

If every step of the algorithm randomly chooses an index $\iletter\in [k]$ that is consistent with the optimum solution $O$, then the only way it can stop is by finding an approximate solution. Therefore, the algorithm is successful with probability at least $q=k^{-k \cdot \tau\cdot \lambda(\epsilon)}$. The success probability can be boosted to be a constant arbitrarily close to 1 by the standard technique of repeating the algorithm $\OO(1/q)$ times, leading to a running time of $k^{k \cdot \tau\cdot \lambda(\epsilon)}\cdot \poly(n)$.

\paragraph{\wkc with Arbitrary Weights.}
  We show now how the algorithm can be extended to work in the weighted setting with arbitrary weights. Let us observe first that if there is no bound on the number $\tau$ of different weights, then we cannot bound the number of requests to a given $Q_\iletter$, even in very simple metric spaces such as $\mathbb{R}^1$. Suppose for example that the requests arriving to $Q_\iletter$ are $(p^{(i)},(1+2\epsilon)^{-i})$ for $i=1,2,\ldots$, where every $p^{(i)}$ is at the origin (or maybe within a very small radius of the origin). Then a center $x^{(i)}$ at $(1+2\epsilon)^{1-i}$ satisfies the first $i-1$ requests, but violates the constraint of the $i$-th by more than a $(1+\epsilon)$-factor.
  This sequence can be arbitrarily long, and the existence of such a sequence shows that we cannot bound the number of requests arriving to $Q_\iletter$ if we don't have a bound on the number of different weights. Nevertheless, we show that the number of requests can be bounded if we start the algorithm by carefully seeding the initial requests. Let us remark that we know other simple modifications that achieve such a bound, but the technique described below turns out to be the one that can be extended further for $\wkm$ and general norms.

The main idea is to bootstrap our algorithm with a constant-factor approximation. A simple greedy 3-approximation can be obtained following the ideas of Plesn\'\i k \cite{plesnik1987heuristic}. Let us consider all the balls $\ball(p,\opt/w(p))$ for every $p\in P$. Let us consider these balls in a nondecreasing order of radius, and mark each ball that does not intersect any of the balls marked earlier; let $\ball(p_\iletter,\opt/w(p_\iletter))$, $1\le \iletter\le k'$ be the marked balls. We should have $k'\le k$: otherwise, we have more than $k$ pairwise disjoint balls and each of them has to contain a center of the solution, contradicting the assumption that value $\opt$ can be achieved with $k$ centers. 
For $1\le \iletter \le k'$, let $x_i$ be any center in $\ball(p_\iletter,\opt/w(p_\iletter))$ and let $Q_\iletter=\{(p_\iletter,\opt/w(p_\iletter)\}$. For $k'<\iletter \le k$, we choose $x_i$ arbitrarily and let $Q_\iletter=\emptyset$. Let us observe that with this definition of the $Q_\iletter$'s, we have $\delta(p,X)\le 3\opt/w(p)$ during \emph{every iteration} of our algorithm. Indeed, if the ball of $p$ was marked, then $X$ always contains a center in $\ball(p_\iletter,\opt/w(p_\iletter))$; if the ball of $p$
was unmarked, then it intersects a marked ball with not larger radius that contains a center of $X$.

The main claim is that the ratio between the radii of two requests appearing in $Q_\iletter$ can be bounded by $\OO(1/\epsilon)$.
Suppose that $(p,r)$ and $(p',r')$ are two requests in $Q_{\iletter}$ (introduced in any order) and we have $r'<\epsilon r/4$. A center of the optimum solution satisfies both request, hence we have $\delta(p,p')\le r+r'$.  As shown above, at every step of the algorithm there is a center in $X$ at distance at most $3r'$ from $p'$; let $y$ be such a center at the step when request $(p,r)$ was introduced. 
Then we have 
  \[
\delta(p,y)\le \delta(p,p')+ \delta(p',y) \le r+r'+3r' \le (1+\epsilon)r, 
\]
contradicting the need for the first request.
 
We can use the standard assumption that every weight is of the form $(1+\epsilon)^i$ for some integer~$i$: by rounding down every weight to the largest number of this form, we change the objective only by a factor of $1+\epsilon$. If every weight is of the form $(1+\epsilon)^i$, then the $\OO(1/\epsilon)$ bound proved above implies that the requests introduced into $Q_{\iletter}$ for some fixed $\iletter\in [k]$ have $\OO(1/\epsilon \cdot \log 1/\epsilon)$ different radii. Therefore, we can bound the total number of requests (and hence the number of iterations) by $\OO(\lambda(\epsilon)\cdot k/\epsilon \cdot \log 1/\epsilon)$. 
This leads to a $k^{\OO(\lambda(\epsilon)\cdot k/\epsilon \cdot \log 1/\epsilon)}\cdot \poly(n)$ time randomized algorithm with constant success probability.

\paragraph{From \wkc to \wkm.} Towards our goal of understanding general norms, let us consider now the \wkm problem, where 
the objective is to find a set $O$ of $k$ centers that minimize $\sum_{p}w(p)\delta(p,O)$. We will try to solve this problem by interpreting it as a \wkc problem on a weighted point set that we dynamically discover during the course of the algorithm.

We would like to turn the \emph{linear constraint} $\sum_{p}w(p)\delta(p,X)\le \opt$ of \wkm into a \emph{distance constraint:} some point $p$ should be at distance at most $r$ to the solution. 
Let $X$ be the current solution and suppose that $\sum_{p}w(p)\delta(p,X)> (1+\epsilon)\opt$. 
The intuition is that $\sum_pw(p)\delta(p,X)> (1+\epsilon)\sum_pw(p)\delta(p,O)$ for an optimum solution $O$ implies that a nontrivial fraction of the points should satisfy $\delta(p,X)> (1+\epsilon/3)\delta(p,O)$, that is, their distances to the solution has to be improved by more than a factor of $1+\epsilon/3$. More precisely, an easy averaging argument shows if we select  a point $p$ with probability proportional to $w(p)\delta(p,X)$, then $p$ satisfies $\delta(p,X)> (1+\epsilon/3)\delta(p,O)$ with probability $\Omega(\epsilon)$. We call such a point $p$ an \emph{$\epsilon/3$-witness,} certifying that the current solution has to be improved.

Assuming that the sampled point $p$ is indeed a $\epsilon/3$-witness, we proceed as in the case of \wkc. We randomly choose an index $\iletter$ and introduce the request $(p,\delta(p,X)/(1+\epsilon/3))$ into $Q_\iletter$, 
to update the cluster constraint by requiring that $x_\iletter$ should be closer to $p$ than in the current solution. If there is a center satisfying all the requests in $Q_\iletter$, then we update $x_{\iletter}$. These steps are repeated until we arrive to a solution  $P$ with $\sum_{p}w(p)\delta(p,X)\le (1+\epsilon)\opt$.

In each step, with probability $\Omega(\epsilon/k)$, the algorithm chooses an $\epsilon/3$-witness $p$ and a center $\iletter$ that is consistent with some hypothetical optimum solution $O$. However, it is not clear how to bound the running time of the algorithm. It can happen that the requests arriving to $Q_\iletter$ have smaller and smaller radii. As we have seen for \wkc, in such a scenario we cannot bound the number of steps even in $\mathbb{R}^1$
It is crucial to have some control on the sequence of radii that appear in the requests. Therefore, next we show how to ensure that the radii in the requests to center $\iletter$ stay within a bounded range.   

\paragraph{Initial Upper Bounds.}
For each point $p$, we compute a weak upper bound $u(p)\ge \delta(p,O)$ on the distance to the optimum solution. Then instead of starting with an arbitrary set of $k$ centers, we bootstrap the algorithm by a solution approximately satisfying all these upper bounds. We argue that this can be done in such a way that ensures that the radii appearing in the requests to each center $\iletter$ stay within a bounded range.

If a point $p^*$ has weight $w(p^*)$, then $u(p^*)=\opt/w(p^*)$ is an obvious upper bound on the distance of $p^*$ to $O$: otherwise, we would have $\sum_{p}w(p)\delta(p,O)\ge w(p^*)\delta(p^*,O)>\opt$. This bound was sufficient for the \wkc problem, but the nature of \wkm allows us to get much stronger upper bounds in many cases. For example, if there are $c$ points of the same weight $w$ roughly at the same position, then each of them should be at distance at most $\opt/(wc)$ from~$O$. Indeed, otherwise the total contribution of these $c$ points to the sum would be greater than~$\opt$. More generally, if there is a radius $r$ such that total weight of the points at distance at most $r$ from $p$ is at least $\opt/r$, then we claim that $p$ is at most distance $2r$ from $O$. Indeed, otherwise all these points would be at distance more than $r$ from $O$, making their total contribution greater than $\opt$. Therefore, we can define 
$u(p)=2r$, where $r$ is the smallest radius with the property that the total weight of the points at distance at most $r$ from $p$ is at least $\opt/r$. Note that $u(p)$ can be determined in polynomial time from the weights of the points and their distance matrix.

Similarly to our \wkc algorithm, we start with a 3-approximation of the constraints given by the upper bounds $u(p)$ for $p\in P$. Let us go through the points in a nondecreasing order of $u(p)$ and let us greedily choose a maximal independent set of the balls $\ball(p,u(p))$. We should find at most $k$ such balls. Let us choose a center in each ball; it is easy to see that every point $p$ has a selected center at distance at most $3u(p)$ from it.  If center $x_\iletter$ was selected to be a center in $\ball(p,u(p))$, then we initialize $Q_\iletter$ with the request $(p,u(p))$. This ensures that during every step of the algorithm, it remains true that every point $p$ is at distance at most $3u(p)$ from the current solution.

We run the algorithm for \wkm with this initial solution. Before analyzing the algorithm, let us make a nontrivial change in the random selection. We have seen that with probability $\Omega(\epsilon/ k)$, we select a random point $p$ and $\iletter\in [k]$ such that $\delta(p,X)\ge (1+\epsilon/3)\delta(p,O)$ for some optimal solution $O$. A key claim of the proof is that with probability $\Omega(\epsilon/k)$, it is also true that $u(p)\le 2k\delta(p,X)/\epsilon$ (see Lemma~\ref{lem:sample-witness}). Intuitively, the total contribution of the $\epsilon/3$-witnesses that are too close to some center $x_\iletter\in X$
cannot be very large, because then all of these witnesses would be in a small ball, implying that the upper bound $u(p)$ should be smaller.
 Note that this is the point in the proof where we crucially utilize the exact definition of $u(p)$. With this claim at hand, we can modify the algorithm such that we are randomly choosing a point $p$ satisfying $u(p)\le 2k\delta(p,X)/\epsilon$, with probability proportional to $w(p)\delta(p,X)$. It remains true that $p$ is an $\epsilon/3$-witness with probability $\Omega(\epsilon/k)$.

Let us analyze now the algorithm and bound the number of times a center $x_{\iletter}$ is updated. We want to argue that the radius in the requests remains in a bounded range. Suppose that we update cluster $\iletter$ with requests $(p,r)$ and $(p',r')$ (in either order) such that $r'\ll \epsilon^2 r/k$. If the algorithm does not fail, then there is a center $x_\iletter$ satisfying both requests. By the triangle inequality, this means that the $\delta(p,p')\le r+r'<r+\epsilon r/6$. Furthermore, by the constraint $u(p')\le 2k\delta(p',X)/\epsilon= 2k (1+\epsilon/3)r'/\epsilon$
on our selection of the random point $p'$, we have that $u(p')$ is much smaller than $\epsilon r/18$. 
 At every step of the algorithm, the upper bound $u(p')$ is 3-approximately satisfied by the current solution $X$. Thus there should be a center in $X$ much closer than $\epsilon r/6$ to $p'$. Together with $\delta(p,p')< r+\epsilon r/6$, it follows that there is always a center in $X$ at distance at most $(1+\epsilon/3)r$ from $p$, contradicting the need for the request $(p,r)$.
 
  Thus the combination of the two facts that (1) the upper bounds are always satisfied approximately and that (2) the radius in the request is not much smaller than the upper bound implies that the radius in the requests stays within a bounded range. Then we can argue as in the case of the \wkc problem. If every weight is rounded to a power of $(1+\epsilon)$, then each cluster is given requests with only a bounded number of different radii. If many requests arrive, then there is a long subsequence of the requests with the same radius. This means that the bound on the \scatterdim can be used to bound the length of this subsequence, and hence the total number of requests to all clusters.

\paragraph{From \wkm to General Norms Using Subgradients.}
Next we show how to solve the clustering problem for an arbitrary monotone norm by interpreting it as collection of \wkm instances that we need to satisfy simultaneously. We will repeatedly solve such \wkm instances that are dynamically discovered during the course of the algorithm. 

It will be convenient to use the notion of {subgradients.} For our purposes, it is sufficient to discuss subgradients in the context of a monotone norm $f\colon \mathbb{R}^n \rightarrow \mathbb{R}$. We say that $\vc{g}$ is a \emph{subgradient} of $f$ at point $\vc{x}$ if $f(\vc{x}) = \vcg^{\intercal} \vc{x}$ and $f(\vc{y}) \ge  \vcg^\intercal \vc{y}$ for every $\vc{y} \in \bRna$. It is known that every monotone norm has a nonnegative subgradient $\vc{g}\ge 0$ at every point $\vc{x}\ge 0$. Checking whether a vector $\vc{g}$ is a subgradient at $\vc{x}$ and finding a subgradient at $\vc{x}$ can be formulated as convex optimization problems, hence can be (approximately) solved using the ellipsoid method if $f$ can be efficiently computed \cite{GLS}.

 Suppose that we have a current solution $X$ and let $\vc{x}\in \bRp$ be the vector representing the distances of the points in $P$ to $X$. Suppose that $X$ is not (approximately) optimal: $f(\vc{x})>(1+\epsilon)\opt$. Let us compute a sugradient $\vc{g}$ of $f$ at $\vc{x}$; we have  $\vcg^{\intercal} \vc{x} =f(x)>(1+\epsilon)\opt$ and  $\vcg^{\intercal} \vc{y}\le f(\vc{y}) = \opt$ for the optimum solution $\vc{y}$. That is, $\vcg^{\intercal} \vc{x}\le \opt$ is a linear constraint satisfied by the optimum solution and violated by the current solution. Then defining the weights $w(p)$ based on the coordinates of $\vc{g}$ gives an instance of \wkm, with $\sum_{p}w(p)\delta(p,X)>(1+\epsilon)\opt$ for the current solution $X$. Now we can proceed as above for the \wkm problem: we randomly choose a point $p$ and cluster $\iletter$, introduce a new request into $Q_{\iletter}$, find a new center~$x_\iletter$, etc., until we arrive to a solution $X$ with $\sum_{p}w(p)\delta(p,X)\le (1+\epsilon)\opt$. If this new solution $X$ is still nonoptimal for the original norm problem, that is, $f(\vc{x})>(1+\epsilon)\opt$, then we can again compute a subgradient, find a violated linear constraint (possibly the same as in the previous step). We repeat this until we find a solution with $f(\vc{x})\le (1+\epsilon)\opt$.

Defining the upper bounds and bootstrapping the algorithm  with a solution approximately satisfying the upper bounds were crucial for the analysis of the \wkm algorithm. For general norms, we can again define the upper bounds once we have the weights $w$ based on the violated linear constraint $\vcg^{\intercal} \vc{x}\le \opt$. However, these upper bounds would not be useful for the analysis, as they would depend on the violated linear constraint, hence would change during the algorithm.

Intuitively, we can see the constraint  $f(\vc{x})\le \opt$ as an infinite number of \wkm instances, corresponding to the linear constraints $\vcg^{\intercal} \vc{x}\le \opt$ for \emph{every} subgradient~$\vc{g}$ of~$f$. We would like to define $u(p)$ to be the smallest possible upper bound that can be assigned to $p$ among all of these infinitely many \wkm instances. Determining this value seems to be a difficult task, but actually the answer is very simple. Recall that $u(p)$ was defined as twice the smallest~$r$ such that $\ball(p,r)$ contains total weight at least $\opt/r$. Thus to define the upper bound $u(p)$, we need to know what the maximum weight of the points in $\ball(p,r)$ can be among the infinitely many instances corresponding to all the subgradients. Let $\vc{b}$ be the characteristic vector of $\ball(p,r)$ (i.e., every coordinate is 1 or 0, depending on whether a point is in or not in the ball). Then the question is to determine the maximum of $\vcg^{\intercal} \vc{b}$ among all subgradients $\vc{g}$. It is easy to see that this maximum is exactly $f(\vc{b})$: if $\vc{g}$ is a subgradient at $\vc{b}$, then  
$\vcg^{\intercal} \vc{b}=f(\vc{b})$; if $\vc{g}$ is a subgradient at an arbitrary point $\vc{y}$, then $\vcg^{\intercal} \vc{b}\le f(\vc{b})$. Thus we can determine the maximum weight of any ball and define the upper bounds accordingly. With these definitions, the analysis of the \wkm algorithm go through for general mononote norms. The two main properties of the upper bounds remain valid: (1) the upper bounds are satisfied by the optimum solution and (2) we can restrict our random choice of $p$ to points where the distance to the solution is not much smaller than $u(p)$.

In summary, the final algorithm consists of the following steps (see Figure~\ref{fig:algomain}). First we compute the upper bounds $u(p)$ and greedily find a 3-approximate solution satisfying these constraints. Then we repeat the following steps until we reach a solution $X$  for which the distance vector $\vc{x}$ satisfies $f(\vc{x})\le (1+\epsilon)\opt$. We compute a subgradient $\vc{g}$ of $f$ at $\vc{x}$ to obtain a violated linear constraint  $\vcg^{\intercal} \vc{x}\le \opt$. We randomly choose a point $p$ (according to the distribution described above) and require that $p$ be at most distance $\delta(p,X)/(1+\epsilon/3)$ from the solution, that is, we obtain a violated distance constraint. Then we randomly choose a cluster $\iletter\in [k]$ and require that this distance constraint be satisfied by center $x_\iletter$. Thus we put the request $(p,\delta(p,X)/(1+\epsilon/3)$ into $Q_{\iletter}$  find a new $x_\iletter$ that satisfy the cluster constraints imposed by the requests in $Q_{\iletter}$, if possible. We repeat these steps until we arrive to a solution $X$ with distance vector $\vc{x}$ satisfying $f(\vc{x})\le (1+\epsilon)\opt$. Our analysis shows that each step is consistent with a hypothetical optimum solution $O$ with probability $\Omega(\epsilon/k)$. Moreover, if \scatterdim is bounded, then the algorithm has to find a solution or fail after a bounded number of iterations.

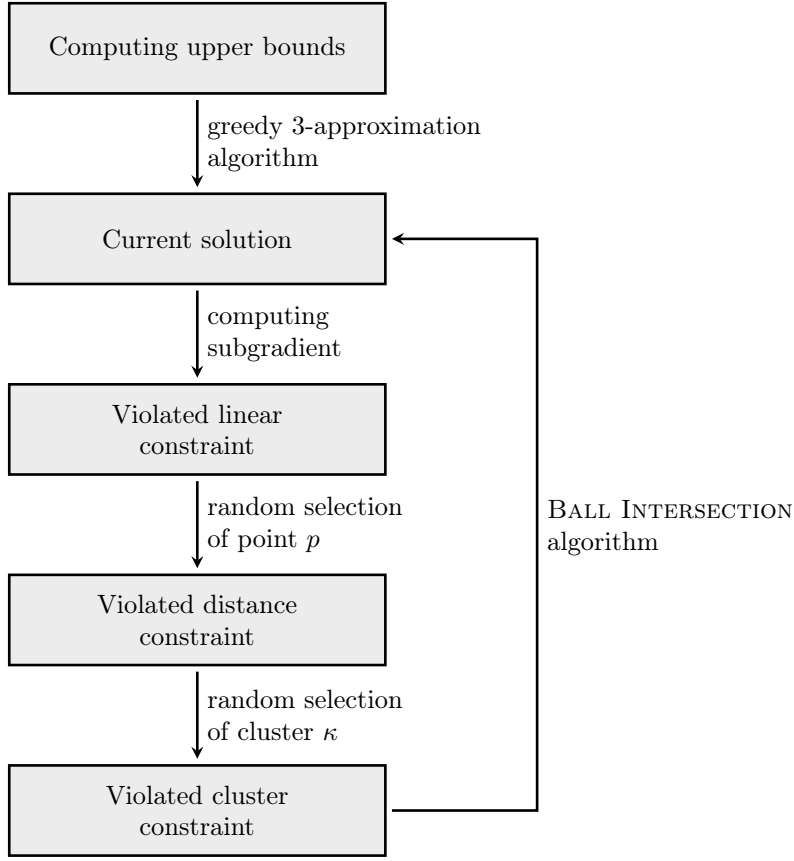
\begin{figure}[t]

\begin{center}
\begin{tikzpicture}[every node/.style={rectangle, line width=1pt,  minimum height=1.2cm},align=center]

  \small
\node (box1) [minimum width=5cm,draw,fill=gray!15] {Computing upper bounds};
\node (box2) [draw,minimum width=5cm,below=1.3cm of box1,fill=gray!15] {Current solution};
\node (box3) [draw,minimum width=5cm,below=1.3cm of box2,fill=gray!15] {Violated linear \\ constraint};
\node (box4) [draw,minimum width=5cm,below=1.3cm of box3,fill=gray!15] {Violated distance\\constraint};
\node (box5) [draw,minimum width=5cm,below=1.3cm of box4,fill=gray!15] {Violated cluster\\ constraint};
\node[align=left,above right=0cm and 2cm of box4] {\pc\\algorithm};
\draw [>=stealth, line width=1pt,->,shorten >=2pt,shorten <=2pt] (box1)  --  (box2) node[align=left,midway,right=0cm] {greedy 3-approximation\\algorithm};
\draw [>=stealth, line width=1pt,->,shorten >=2pt,shorten <=2pt] (box2) -- (box3) node[align=left,midway,right] {computing\\subgradient};
\draw [>=stealth, line width=1pt,->,shorten >=2pt,shorten <=2pt] (box3) -- (box4) node[align=left,midway,right] {random selection\\of point $p$};
\draw [>=stealth, line width=1pt,->,shorten >=2pt,shorten <=2pt] (box4) -- (box5) node[align=left,midway,right] {random selection\\of cluster $\iletter$};
\draw [>=stealth, line width=1pt,->,shorten >=2pt,shorten <=2pt] (box5.east)   -| ([xshift=2cm]box2.east)  -- (box2.east)  ;

\end{tikzpicture}
\caption{Overall structure of the main algorithm.}\label{fig:algomain}
\end{center}
\end{figure}

\paragraph{(Algorithmic) $\epsilon$-Scatter Dimension.} After the general algorithm capable of handling any monotone norm objective, our second main contribution is bounding the \scatterdim of various classes of metrics (Section~\ref{sec:applications}). In the interest of space, we do not go into the details of these (mostly combinatorial) proofs, but give only a brief overview.
\begin{itemize}
    \item \textbf{Bounded Doubling Dimension.} As outlined in the introduction, the set of points as well as the set of centers in an \scattering both form an $\epsilon$-packing of a unit ball implying that any metric of doubling dimension $d$ has \scatterdim $(\nicefrac{1}{\epsilon})^{\OO(d)}$. See Theorem~\ref{thm:doubling-metric}.

\item \textbf{Bounded-Treewidth Graph Metrics.} The \scatterdim bound for metrics defined by the shortest path metric of bounded-treewidth graphs is obtained by a delicate combinatorial proof that exploits both structure of the graph and properties of the \scattering. The bound we obtain is $\tw^{{1/\epsilon}^{\OO(\tw)}}$ for graphs of treewidth $\tw$, that is, double exponential in $\tw$ for fixed $\epsilon$. It remains is an interesting open question if this bound can be improved.

\item \textbf{Planar Graph Metrics.}
As outlined in the introduction, we can employ a known metric embedding result  to reduce the problem of bounding the \scatterdim of planar graphs to bounding the \scatterdim of bounded-treewidth graphs. In particular, the result by Fox-Epstein, Klein, and Schild \cite{fox2019embedding} provides an (approximate) metric embedding of planar metrics into low-treewidth metrics, which can be used to obtain a $2^{2^{\poly(1/\epsilon)}}$ bound on the \scatterdim of planar graph metrics.

\item \textbf{Continuous High-Dimensional Euclidean Space.}
As mentioned in the introduction, the high-dimensional Euclidean space does not have bounded \scatterdim. However, in the continuous Euclidean space, where any point of the space can be a center, we can bound the \emph{\algscatterdim}. Towards this, we replace the center player by an algorithmic ``player'' applying the algorithm by Kumar and Yildirim~\cite{KY09} for \textsc{Weighted 1-Center}. To achieve bounded \algscatterdim, this algorithm would require, however, a bounded aspect ratio of the radii in the input requests. We therefore prove an aspect-ratio condition (which holds even for general metrics) implying that it is sufficient for the algorithm to handle instances with aspect-ratio $\OO(\nicefrac{1}{\epsilon})$. We combine this result with the algorithm by Kumar and Yildirim to prove bounded \algscatterdim for continuous high-dimensional Euclidean space, that is, Theorem~\ref{thm:scatter-cont-eucl}.
\end{itemize}

\section{Preliminaries}\label{sec:prelims}
\paragraph{Classes of Metric Clustering Spaces.}
A \emph{metric clustering space} (or metric space for brevity) is a triple $M=(P,F,\delta)$ where $P$ is a finite set of $n$  \emph{data points}, $F$ is a (possibly infinite) set of potential locations of cluster centers, and $\delta$ is a metric on $P\cup F$. 
Sets $P$ and $F$ are not necessarily disjoint. (For example, it is natural for clustering problems to have $P=F$ or $P\subseteq F$.)
Given any point $u\in P\cup F$ in the metric space and a radius $r\in\mathbb{R}_{+}$, we denote by 
$\ball_{\delta}(u,r) = \{\,v\in P\cup F\mid \delta(u,v)\leq r\,\}$ the ball of radius $r$ centered around $u$. We drop the subscript $\delta$ if the distance function is clear from the context.

By $|M|$ we denote the space needed to represent the metric space $M$ in the memory. If $M$ is finite then $|M|$ is polynomial in $|F|$, $|P|$ and the space needed for storing a point and a center, respectively. If $F$ is infinite (for example, in the continuous Euclidean setting, $F={\mathbb R}^d$), $|M|$ is polynomial in~$|P|$ and the space of storing a point.

A \emph{class} $\cM$ of metric spaces is a (infinite) set of metric spaces. This paper focuses on metric classes that are closed under scaling distances by a constant.
We consider the following classes of metric clustering spaces: 

\begin{itemize}
    \item {\bf Graph Metric:} In the case of graph metric, we are given a (weighted) graph $G=(V,E)$ and the metric $\delta_G$ on $V$ as the shortest path metric, i.e., $\delta_G(u,v)$ is the shortest distance of a path connecting $u$ and $v$. The clustering space $(P, F, \delta_G)$ is given such that $P, F \subseteq V$. 

    \item {\bf Continuous Euclidean Spaces:} In this case, we are allowed to choose centers from the (high-dimensional) continuous Euclidean space $F = {\mathbb R}^d$. The set $P\subsetneq \bRd$ is a finite set of points.

    \item {\bf Doubling Metric:}  The {\it doubling dimension} of a metric space $(X,\delta)$, denoted as $d$, is the smallest $m>0$ such that every ball of radius $r$ in the metric can be covered by $2^m$ balls of radius $\frac{r}{2}$. 
 Note that a $d$-dimensional Euclidean metric has doubling dimension~$\OO(d)$.
\end{itemize}

\paragraph{Treewidth.} A {\em tree decomposition} of a graph $G$ is a pair $(T, \beta)$ where $T$ is a tree, $\beta\colon V(T) \to 2^{V(G)}$, $V(T)$ and $V(G)$ denote the vertices of the tree $T$ and $G$ respectively, with the following properties.

\begin{enumerate}
    \item For each $v\in V(G)$, there exists $t \in V(T)$ such that $v \in \beta(t)$,
    \item For each $(u,v) \in E(G)$, there exists $t \in V(T)$ such that $u,v \in \beta(t)$, and
    \item For each $v \in V(T)$, the subgraph induced by $T$ on $\{t: v \in \beta(t)\}$, is connected.
\end{enumerate}

The {\em width} of the tree decomposition $(T,\beta)$ is $\max_{t \in V(T)} |\beta(t)|-1$. The {\em treewidth} of a graph $G$ is the minimum width over all tree decompositions of $G$.

\paragraph*{Subgradients of Norms.}
We state definitions and summarize basic facts about subgradients of norms that we will use throughout the paper.
\begin{fact}
Any norm is a convex function.
\end{fact}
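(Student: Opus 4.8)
The plan is to verify the defining inequality of convexity directly from the two structural axioms of a norm stated in the excerpt, namely subadditivity (the triangle inequality, property (ii)) and absolute homogeneity (property (iii)). Recall that a function $f\colon\bRna\rightarrow\bR_{\ge 0}$ is convex if for every $\vc{x},\vc{y}\in\bRna$ and every $\lambda\in[0,1]$ we have $f(\lambda\vc{x}+(1-\lambda)\vc{y})\le \lambda f(\vc{x})+(1-\lambda)f(\vc{y})$. So the whole task is to establish this one inequality.

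First I would apply subadditivity to the two vectors $\lambda\vc{x}$ and $(1-\lambda)\vc{y}$, obtaining
\[
f\bigl(\lambda\vc{x}+(1-\lambda)\vc{y}\bigr)\;\le\; f(\lambda\vc{x})+f\bigl((1-\lambda)\vc{y}\bigr).
\]
Next I would invoke absolute homogeneity: since $\lambda\ge 0$ and $1-\lambda\ge 0$, we have $f(\lambda\vc{x})=|\lambda|\,f(\vc{x})=\lambda f(\vc{x})$ and $f((1-\lambda)\vc{y})=|1-\lambda|\,f(\vc{y})=(1-\lambda)f(\vc{y})$. Substituting these two identities into the previous display yields $f(\lambda\vc{x}+(1-\lambda)\vc{y})\le \lambda f(\vc{x})+(1-\lambda)f(\vc{y})$, which is exactly the convexity inequality.

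There is essentially no obstacle here; the only points worth a word of care are (a) that the homogeneity axiom is stated with $|\lambda|$, so one must note $\lambda,1-\lambda\in[0,1]$ to drop the absolute values, and (b) that nonnegativity of $f$ is not even needed for the argument, only the two cited axioms. Hence no case analysis or auxiliary lemma is required, and the fact follows in two lines.
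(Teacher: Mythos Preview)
Your proof is correct and is the standard two-line argument. The paper itself does not supply a proof of this fact; it is stated in the preliminaries as a known result without justification, so there is nothing to compare against.
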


\begin{definition}[Subgradient]
    A subgradient of a convex function  $f\colon \mathbb{R}^n \rightarrow \mathbb{R}$ at any point $\vc{x} \in \bRna$ is any $\vcg \in \mathbb{R}^n$ such that the following holds for every $\vc{y} \in \bRna$
    \[
        f(\vc{y}) \ge f(\vc{x}) + \vcg^\intercal (\vc{y} - \vc{x});
    \]
    we denote by $\grad(\vc{x})$ the set of subgradients of $f$ at $\vc{x}$.
\end{definition}

The following fact summarizes various useful properties of subgradients specialized to norm functions.  Because we are apply norm objectives exclusively to non-negative distance vectors, we call (slightly abusing terminology) a restriction of a norm to $\bRn$ a norm as well.
\begin{fact}[\cite{chakrabarty2019approximation}] \label{fact:chkbty}
Let  $f\colon \bRn \rightarrow \bR_{\geq 0}$ be a norm and $\vc{x} \in \bRn$.
    If $\vcg$ is a subgradient of $f$ at $\vc{x}$, then $f(x) = \vcg^{\intercal} \vc{x}$ and $f(y) \ge  \vcg^\intercal \vc{y}$ for all $\vc{y} \in \bRn$. 
Further, if $f$ is monotone, there exists a subgradient $\vc{g}\in\grad(\vcx)$ such that $\vc{g} \ge 0$.
\end{fact}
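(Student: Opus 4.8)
The plan is to get the first two assertions essentially for free from positive homogeneity of the norm, and to obtain a \emph{nonnegative} subgradient by clipping an arbitrary subgradient to the nonnegative orthant, where monotonicity is exactly what makes the clipped vector still a subgradient. This is not a deep statement (it is essentially the argument of Chakrabarty and Swamy~\cite{chakrabarty2019approximation}), so the write-up would be short; I mainly need to be careful about which vectors I plug in.

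First I would show $f(\vcx)=\vcg^{\intercal}\vcx$ for any $\vcg\in\grad(\vcx)$. Substituting $\vc{y}=2\vcx$ into the defining inequality $f(\vc{y})\ge f(\vcx)+\vcg^{\intercal}(\vc{y}-\vcx)$ and using $f(2\vcx)=2f(\vcx)$ gives $f(\vcx)\ge\vcg^{\intercal}\vcx$; substituting $\vc{y}=\vc{0}$ and using $f(\vc{0})=0$ gives the reverse inequality. Plugging the resulting equality back into the subgradient inequality immediately yields $f(\vc{y})\ge\vcg^{\intercal}\vc{y}$ for all~$\vc{y}$. Since $2\vcx,\vc{0}\in\bRn$, this works verbatim even when $f$ is only the restriction of a norm to $\bRn$.

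For the nonnegativity claim I would first note that a subgradient exists at all: extending $f$ (if it is given only on $\bRn$) to the map $\vc{z}\mapsto f(|\vc{z}|)$ on $\bRna$, which is again a norm by monotonicity together with the triangle inequality, we get a finite convex function on all of $\bRna$, whose subdifferential is nonempty everywhere; restricting such a subgradient to test vectors in $\bRn$ gives a $\vcg\in\grad(\vcx)$. The key observation is then that $g_i<0$ forces $x_i=0$: otherwise $\vc{y}:=\vcx-x_i\vc{e}_i$ lies in $\bRn$ and satisfies $\vc{y}\le\vcx$, so $f(\vc{y})\le f(\vcx)$ by monotonicity, while the subgradient inequality gives $f(\vc{y})\ge f(\vcx)-x_ig_i>f(\vcx)$, a contradiction. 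Now set $\vcg^{+}:=\max(\vcg,\vc{0})$ coordinatewise. Because the negative coordinates of $\vcg$ occur only where $\vcx$ vanishes, $(\vcg^{+})^{\intercal}\vcx=\vcg^{\intercal}\vcx=f(\vcx)$. And for any $\vc{y}\in\bRn$, zeroing out the coordinates of $\vc{y}$ at which $\vcg$ is negative produces a $\vc{y}'$ with $\vc{0}\le\vc{y}'\le\vc{y}$, whence $(\vcg^{+})^{\intercal}\vc{y}=\vcg^{\intercal}\vc{y}'\le f(\vc{y}')\le f(\vc{y})$, where $\vcg^{\intercal}\vc{y}'\le f(\vc{y}')$ is the second assertion proved above and $f(\vc{y}')\le f(\vc{y})$ is monotonicity. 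Combining these two facts, $f(\vc{y})\ge(\vcg^{+})^{\intercal}\vc{y}=f(\vcx)+(\vcg^{+})^{\intercal}(\vc{y}-\vcx)$ for all $\vc{y}\in\bRn$, so $\vcg^{+}\in\grad(\vcx)$ and $\vcg^{+}\ge 0$.

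The only thing I expect to need care — the "obstacle", such as it is — is the domain bookkeeping: ensuring that every auxiliary vector I introduce ($2\vcx$, $\vc{0}$, $\vcx-x_i\vc{e}_i$, and $\vc{y}'$) stays in $\bRn$ so that both the restricted subgradient inequality and the monotonicity hypothesis apply, and being explicit that throughout "subgradient" is taken in the sense appropriate to nonnegative arguments, which is all that the clustering applications require. Everything else is a one-line computation.
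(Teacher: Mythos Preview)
Your argument is correct. The paper itself does not prove this statement: it is recorded as a \emph{Fact} with a citation to Chakrabarty and Swamy~\cite{chakrabarty2019approximation} and is used without proof, so there is no in-paper argument to compare against. Your homogeneity trick for the first two assertions and the ``clip to the nonnegative orthant'' construction for the third are the standard proofs, and your domain bookkeeping (keeping $2\vcx$, $\vc{0}$, $\vcx-x_i\vc{e}_i$, and $\vc{y}'$ inside $\bRn$) is exactly the care that is needed when the norm is only given on the nonnegative orthant.
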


The following observation is an immediate consequence of Fact~\ref{fact:chkbty}.
\begin{observation}\label{obs:norm-max-subgradients}
    Let $\grad= \bigcup_{\vc{y}\in\bRn} \grad(\vc{y})$ be the set of all subgradients of $f$. Then for any $\vc{x}\in\bRn$, we have that
    \begin{displaymath}
        f(\vc{x})=\max_{\vc{g}\in \partial f}\vc{g}^{\intercal}\vc{x}\,.
    \end{displaymath}
\end{observation}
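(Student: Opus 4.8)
\textbf{Proof plan for Observation~\ref{obs:norm-max-subgradients}.} The statement asserts that for a monotone norm $f\colon\bRn\to\bR_{\ge 0}$ and every $\vc{x}\in\bRn$, we have $f(\vc{x})=\max_{\vc{g}\in\partial f}\vc{g}^{\intercal}\vc{x}$, where $\partial f$ ranges over \emph{all} subgradients at \emph{all} points. The plan is to establish the two inequalities separately, both directly from Fact~\ref{fact:chkbty}.

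First I would prove $f(\vc{x})\le\max_{\vc{g}\in\partial f}\vc{g}^{\intercal}\vc{x}$. Since $f$ is a convex function (it is a norm), it has at least one subgradient at the point $\vc{x}$ itself; moreover, because $f$ is monotone, Fact~\ref{fact:chkbty} guarantees a subgradient $\vc{g}_{\vc{x}}\in\partial f(\vc{x})$ with $\vc{g}_{\vc{x}}\ge 0$ (though nonnegativity is not even needed here). By the first part of Fact~\ref{fact:chkbty}, this particular subgradient satisfies $f(\vc{x})=\vc{g}_{\vc{x}}^{\intercal}\vc{x}$. Since $\vc{g}_{\vc{x}}\in\partial f$, it is one of the vectors over which the maximum on the right-hand side is taken, so $f(\vc{x})=\vc{g}_{\vc{x}}^{\intercal}\vc{x}\le\max_{\vc{g}\in\partial f}\vc{g}^{\intercal}\vc{x}$.

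Next I would prove the reverse inequality $f(\vc{x})\ge\max_{\vc{g}\in\partial f}\vc{g}^{\intercal}\vc{x}$, equivalently $\vc{g}^{\intercal}\vc{x}\le f(\vc{x})$ for every $\vc{g}\in\partial f$. Fix an arbitrary $\vc{g}\in\partial f$; by definition $\vc{g}$ is a subgradient of $f$ at some point $\vc{y}\in\bRn$. The second part of Fact~\ref{fact:chkbty} states precisely that if $\vc{g}$ is a subgradient of $f$ at $\vc{y}$, then $f(\vc{z})\ge\vc{g}^{\intercal}\vc{z}$ for all $\vc{z}\in\bRn$; applying this with $\vc{z}=\vc{x}$ gives $f(\vc{x})\ge\vc{g}^{\intercal}\vc{x}$. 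Taking the maximum over all $\vc{g}\in\partial f$ yields $f(\vc{x})\ge\max_{\vc{g}\in\partial f}\vc{g}^{\intercal}\vc{x}$, and the maximum is attained (it is achieved by $\vc{g}_{\vc{x}}$ from the first part), completing the proof.

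There is no real obstacle here — the observation is essentially a restatement of the two halves of Fact~\ref{fact:chkbty} (namely the tightness $f(\vc{x})=\vc{g}^{\intercal}\vc{x}$ at the point of subgradiency, and the global lower bound $f(\vc{y})\ge\vc{g}^{\intercal}\vc{y}$ everywhere), combined with the elementary fact that a convex function defined on all of $\bRn$ has a nonempty subdifferential at every point. The only minor care needed is to note that the maximum on the right-hand side is indeed attained rather than merely a supremum, which follows from exhibiting the witnessing subgradient $\vc{g}_{\vc{x}}\in\partial f(\vc{x})$.
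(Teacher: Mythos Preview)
Your proposal is correct and takes essentially the same approach as the paper, which simply declares the observation to be ``an immediate consequence of Fact~\ref{fact:chkbty}'' without spelling out the two inequalities. Your write-up makes explicit exactly the two halves of Fact~\ref{fact:chkbty} that the paper is implicitly invoking.
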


\begin{definition}[$\epsilon$-Approximate Subgradient]\label{def:apx-subgradient}
Let $f\colon\bRn\rightarrow\bR_{\geq 0}$ be a norm and let $\epsilon>0$. We define the set $\gradapx(\vc{x})$ of \emph{$\epsilon$-approximate subgradients of $f$ at $\vc{x}$} to contain all $\vcg \in \bRn$ such that the following two conditions hold
\begin{enumerate}[(i)]
    \item $f(\vc{y})\geq \vcg^{\intercal} \vc{y}$ for each $\vc{y}\in\bRn$, and
    \item $f(\vc{x})\leq (1+\epsilon)\vcg^{\intercal}\vc{x}$\,.
\end{enumerate}
\end{definition}

It is known that approximate subgradients of convex functions can be computed efficiently
via an (approximate) value oracle for the function through reductions shown by Gr\"{o}tschel, Lovasz and Schrijver in their classic book~\cite{GLS}. While the reduction in~\cite{GLS} appears to take at least $\Omega(n^{10})$ calls to the
oracle, there exist faster methods assuming additional properties of the convex function, for example, see \cite{lovasz2006fast, lee2018efficient}. Specifically for $\ell_p$ norms, closed formulas describing the sets of subgradients are known and used in practice.


\paragraph*{Some Terminology and Notation.}
Let $M=(P,F,\delta)$ be a clustering space on $n=|P|$ data points. Let $\vc{b}\in\bRp$ be an $n$-dimensional vector. We interpret $\vc{b}$ as assigning each point $p\in P$ a non-negative value denoted $b(p)$. That is, $\vc{b}=(b(p))_{p\in P}$. For example, given a subset $X\subseteq F$ of centers, we define the \emph{distance vector} $\vcd(P,X)=(\delta(p,X))_{p\in P}$.  If $B\subseteq P$ is a subset of points then $\vc{1}_B\in\{0,1\}^P$ denotes the \emph{characteristic vector of $B$}, that is, it assigns value $1$ to any $b\in B$ and $0$ to any $p\in P\setminus B$. If $p\in P$ and $\alpha\geq 0$ then we denote by $\vc{1}_{p,\alpha}$ the binary vector $\vc{1}_{\ball(p,\alpha)\cap P}$.

\section{\ScatterDim}\label{sec:scatter-dim}
In this section, we introduce the concept of \scatterdim formally, which plays a  central role in our algorithmic framework. The following definition is a formalization of the ``center-point game'' presented in the introduction.
\begin{definition}[\ScatterDim]\label{def:scattercomp}
We are given a class $\cM$ of finite metric spaces, a space $M=(P,F,\delta)$ in $\cM$, and some $\epsilon\in (0,1)$. 
An \emph{\scattering in $M$} is a sequence $(x_1,p_1)\dots,(x_{\ell},p_{\ell})$ of center-point pairs $x_i\in F$, $p_i\in P$, $i\in[\ell]$ such that
\begin{align*}
    \delta(x_i,p_j)&\leq 1 && \textnormal{ for all } 1\leq j < i\leq\ell & \textnormal{(covering)}\\
    \delta(x_i,p_i)&>1+\epsilon && \textnormal{ for all } i \in [\ell] & \textnormal{($\epsilon$-refutation)}
\end{align*}
The \emph{\scatterdim of $M$} is the maximum length of an \scattering in it. The \emph{\scatterdim of $\cM$} is the supremum of the \scatterdim over all $M\in\cM$.
\end{definition}

Note that for any \scattering $(x_1,p_1), \dots,(x_{\ell},p_{\ell})$, any subsequence $(x_{i_1},p_{i_1}),\dots,\allowbreak (x_{i_{\ell'}},p_{i_{\ell'}})$ where $i_1<\dots <i_{\ell'}$ and $\ell',i_j\in[\ell]$, $j\in[\ell']$ is an \scattering as well.

As described in Theorem~\ref{thm:main}, we show that bounded (algorithmic) \scatterdim is essentially sufficient to yield an EPAS for \knc in the respective metric space. In Section~\ref{sec:graph-metrics} we show that bounded treewidth and planar graph metrics, and bounded doubling metrics have bounded \scatterdim. This allows us to obtain EPASes in all these metrics. To handle high-dimensional Euclidean space, we resort to an algorithmic version of \scatterdim.

\paragraph*{Optimizing the Centering Strategy.} Recall the example from the introduction showing that the \scatterdim of the high-dimensional (continuous) Euclidean space $\mathbb{R}^d$ is be unbounded.  We constructed an \scattering $(x_1,p_1),\dots,\allowbreak(x_{d-1},p_{d-1})$ where $x_i$ is the $i$-th unit vector scaled by $\nicefrac{1}{\sqrt{2}}$ and where $p_i=-x_i$ for all $i \in [d-1]$. Note that in this example the unit ball around the origin contains all the points in the sequence. 
Hence, the above example would collapse if the center player would improve her strategy. This motivates us to consider a variant where we replace the center player with an algorithm that computes centers more prudently. Further, employing algorithms allows us also to handle infinite spaces.

\paragraph*{\pc Problem and Algorithm.} Towards this, we formalize the algorithmic problem the center player has to solve. We adopt and generalize a dual interpretation of the center-point game in which the center player is trying to find a center in the \emph{intersection} of all unit balls around the points played by the point player. In fact, we consider the more general setting of non-uniform balls where each point $p$ in the scattering has its own dedicated radius $r$.

Let $\cM$ be a class of metric spaces $(P,F,\delta)$ with possibly infinite center sets $F$. We define the following search problem.
\medskip

\defproblemout{\pc}{A metric space $M=(P,F,\delta)\in\cM$, a set finite set $Q\subsetneq P\times\bR_{+}$ of \emph{distance constraints}.}{A point $x\in F$ \emph{satisfying} all distance constraints, that is, $\delta(x,p)\leq r$ for each $(p,r)\in Q$, if such a point exists and ``fail'' otherwise.}
\medskip

For finite metric spaces, the \pc problem can be solved efficiently by exhaustively searching the center space $F$.
Unfortunately, we are not aware of exact algorithms for \pc for certain infinite metric spaces such as high-dimensional continuous Euclidean space. We therefore work with approximate algorithms. To define this formally, we say that a center $x\in F$ \emph{$\eta$-satisfies} the distance constraint $(p,r)\in Q$ for some error parameter $\eta>0$, if $\delta(x,p)\leq (1+\eta)r$. Let $\cC_{\cM}$ be a (deterministic) algorithm whose input is an instance of \pc and an error parameter $\eta>0$. The algorithm is called an \emph{approximate \pc algorithm} (or \pc algorithm for short) if it satisfies the following conditions.
\begin{enumerate}[(i)]
    \item The algorithm outputs a center that $\eta$-satisfies all distance constraints or it fails.
    \item If there exists a center satisfying all points distance constraints exactly, then the algorithm does not fail.
    \item The running time of $\cC_{\cM}$ is $\poly(|M|,1/\eta)$.
\end{enumerate}
We remark that there is an approximate \pc algorithm for high-dimensional Euclidean space~\cite{KY09}, which we employ in Section~\ref{sec:high-dim-cont} to prove bounded \algscatterdim of this metric.

\paragraph*{\AlgScatterDim.} The definition of \emph{\algscatterdim} is based on the notion of \emph{\algscattering}, which is a variant of \scattering: Centers are chosen via an (approximate) \pc algorithm $\cC_{\cM}$ rather than by an adversarial center-player. Intuitively, we maintain a dynamic instance of \pc that is augmented by adding distance constraints $(p,r)$ one by one. In the context of \algscattering, we call the distance constraints $(p,r)$ \emph{requests}, which are satisfied by the \pc algorithm sequentially.

\begin{definition}[\AlgScatterDim]\label{def:algscattercomp}
Let $\cM$ be a class of metric spaces with \pc algorithm $\cC_{\cM}$, let $M=(P,F,\delta)$ be a metric in $\cM$, and let $\epsilon\in (0,1)$ 
Moreover, let $p_i\in P$, $x_i\in F$, and $r_i\in\mathbb{R}_+$ for each $i\in [\ell]$ where $\ell$ is a positive integer. The sequence $(x_1,p_1,r_1),\dots,(x_\ell,p_\ell,r_{\ell})$ is called an \emph{\algscattering} if the following two  conditions hold. \begin{enumerate}[(i)]
    \item We have $x_i=\cC_{\cM}(M,\{(p_1,r_1),\dots,(p_{i-1},r_{i-1})\},\nicefrac{\epsilon}{2})$ for each $2\leq i\leq\ell$. (There is no requirement regarding the first center $x_1$ in the sequence.) 
    \item Moreover, $\delta(x_i,p_i)>(1+\epsilon)r_i$ for each $i\in[\ell]$.
\end{enumerate}
We say that $\cM$ has \emph{\algscatterdim[$(\epsilon,\cC_{\cM})$] $\lambda_{\cM}(\epsilon)$} if any \algscattering contains at most $\lambda_{\cM}(\epsilon)$ many triples with the same radius value.  The \emph{\algscatterdim} of~$\cM$ is the minimum \algscatterdim[$(\epsilon,\cC_{\cM})$] over any \pc algorithm $\cC_{\cM}$ for $\cM$.
\end{definition}

When the family $\cM$ is clear from the context we drop the subscript $\cM$ from $\lambda_{\cM}(\epsilon)$ and $\cC_{\cM}$.
Note that, in contrast to the \scatterdim, for \algscatterdim we demand that the number of triples per radius value be bounded rather than the total length of the sequence. In fact, this stronger requirement would not hold for high-dimensional Euclidean spaces whereas the weaker (algorithmic) requirement turns out to be sufficient for our results. Another noteworthy difference is that a subsequence of an \algscattering is not necessarily a \algscattering itself because it may not be consistent with the behavior of algorithm $\cC_{\cM}$.

\paragraph*{Relation Between Algorithmic and non-Algorithmic \ScatterDim.}
The following lemma shows that the \algscatterdim indeed generalizes the \scatterdim for finite metric spaces. 

\begin{lemma}\label{lem:scatterd-dim-bound-by-alg}
Any class of finite, explicitly given, metric spaces with \scatterdim $\lambda(\epsilon)$ has also \algscatterdim $\lambda(\epsilon)$.
\end{lemma}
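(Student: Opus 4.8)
The plan is to construct, for any class $\cM$ of finite metric spaces with \scatterdim $\lambda(\epsilon)$, an explicit \pc algorithm $\cC_{\cM}$ whose associated \algscatterdim is at most $\lambda(\epsilon)$. Since the metric spaces are finite and explicitly given, the natural choice is the \emph{exact} brute-force algorithm: on input $(M,Q,\eta)$, simply iterate over all $x\in F$ and return one with $\delta(x,p)\leq r$ for all $(p,r)\in Q$ (ignoring $\eta$), failing if no such center exists. This runs in time $\poly(|M|)$, always finds a center that \emph{exactly} satisfies all constraints whenever one exists (so in particular it $\eta$-satisfies them), hence it is a valid \pc algorithm in the sense of the three conditions in the excerpt. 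The point of using the exact algorithm is that every center it outputs actually lies in the intersection of the \emph{closed} balls $\ball(p_i,r_i)$, which is exactly the ``covering'' condition we need to extract a genuine \scattering.

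\textbf{Key steps.} First I would fix this $\cC_{\cM}$ and take an arbitrary \algscattering $(x_1,p_1,r_1),\dots,(x_\ell,p_\ell,r_\ell)$ in some $M\in\cM$ with respect to $\cC_{\cM}$ and parameter $\epsilon$. I then restrict attention to a maximal sub-collection of triples sharing a common radius value $r$; say these are the indices $i_1<\dots<i_{\ell'}$, and write $(\hat x_j,\hat p_j):=(x_{i_j},p_{i_j})$ for $j\in[\ell']$. The goal is to show $(\hat x_1,\hat p_1),\dots,(\hat x_{\ell'},\hat p_{\ell'})$, after scaling all distances by $1/r$, is an \scattering in the scaled metric (which still lies in $\cM$ by closure under scaling), which by definition has length at most $\lambda(\epsilon)$, giving $\ell'\leq\lambda(\epsilon)$ as required. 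Two things must be checked. For the $\epsilon$-refutation condition: by property (ii) of \algscattering we have $\delta(x_{i_j},p_{i_j})>(1+\epsilon)r_{i_j}=(1+\epsilon)r$, so after scaling by $1/r$ this is exactly $\delta(\hat x_j,\hat p_j)>1+\epsilon$. For the covering condition, fix $j'<j$ (indices within the restricted subsequence, so $i_{j'}<i_j$); I need $\delta(\hat x_j,\hat p_{j'})\leq r$, i.e. $\delta(x_{i_j},p_{i_{j'}})\leq r$. By property (i) of \algscattering, $x_{i_j}=\cC_{\cM}(M,\{(p_1,r_1),\dots,(p_{i_j-1},r_{i_j-1})\},\nicefrac{\epsilon}{2})$, and since $i_{j'}\leq i_j-1$ the request $(p_{i_{j'}},r_{i_{j'}})=(p_{i_{j'}},r)$ is among the distance constraints fed to $\cC_{\cM}$. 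Because $\cC_{\cM}$ is the exact algorithm, its output satisfies every constraint exactly, so $\delta(x_{i_j},p_{i_{j'}})\leq r_{i_{j'}}=r$, as needed. (Here it is worth noting we do not even use approximate satisfaction, which is why the bound is $\lambda(\epsilon)$ and not something weaker; one should also remark that $x_{i_j}$ is guaranteed to exist and not be a ``fail'' output, since the first center of the optimum-consistent sequence — or rather, any center satisfying the constraints, and one exists precisely because the constraints were introduced consistently — witnesses feasibility, so condition (ii) of the \pc algorithm applies.) After scaling, the displayed conditions of Definition~\ref{def:scattercomp} hold verbatim, so the restricted subsequence is an \scattering, whence $\ell'\leq\lambda(\epsilon)$.

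\textbf{Main obstacle.} The argument is essentially bookkeeping, so there is no deep obstacle; the one subtlety to get right is the direction of the covering condition versus the order in which requests are introduced. In Definition~\ref{def:scattercomp} the covering constraint $\delta(x_i,p_j)\leq 1$ is required for $j<i$, i.e. the center $x_i$ must cover all \emph{earlier} points; correspondingly in the \algscattering, $x_{i_j}$ is produced from the requests $(p_1,r_1),\dots,(p_{i_j-1},r_{i_j-1})$ which includes all $p_{i_{j'}}$ with $i_{j'}<i_j$. So the orientations match, but one must be careful that passing to the subsequence of fixed-radius triples does not break the consistency with $\cC_{\cM}$ — it does not break the \emph{covering} property (that only needs the full prefix of requests to have been fed in, which it was), even though the subsequence itself is no longer a valid \algscattering. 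A second minor point to state cleanly: the scaled metric space $(P,F,\delta/r)$ belongs to $\cM$ by the standing assumption that $\cM$ is closed under scaling distances by a positive constant, which is what lets us invoke the \scatterdim bound $\lambda(\epsilon)$ on it. Finally, since this holds for the radius value achieving the maximum multiplicity and the \algscatterdim is defined as the minimum over \pc algorithms, exhibiting this one $\cC_{\cM}$ with the bound $\lambda(\epsilon)$ suffices. $\hfill\qed$
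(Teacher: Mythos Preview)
Your proposal is correct and takes essentially the same approach as the paper: define $\cC_{\cM}$ as the exact brute-force search over $F$, then observe that any subsequence of an \algscattering with a common radius value becomes an \scattering after scaling by $1/r$. The paper's proof is a two-sentence sketch of precisely this argument; you have simply filled in the verification of the covering and refutation conditions and made the scaling step (and the use of closure of $\cM$ under scaling) explicit, which the paper leaves implicit.
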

\begin{proof}

Let $M=(P,F,\delta)$ be a metric space in the given class along with a set $Q$ of distance constraints. Our \pc algorithm exhaustively searches $F$ to find a center $x$ satisfying all distance constraints \emph{exactly}. If no such point exists the algorithm fails. Let $\cC$ denote this algorithm. Consider any \algscattering. Notice that any sub-sequence of triples with the same radius value forms an \scattering. Hence the sequence contains at most $\lambda(\epsilon)$ many triples for any radius value.
\end{proof}

\paragraph*{Aspect-Ratio Lemma for \AlgScatterDim.}
The following is a handy consequence of bounded \algscatterdim that we use in proving our result. It strengthens the properties of an \algscattering by bounding the number of triples whose radii lie in an interval of bounded aspect-ratio (rather than bounding the number of triples with the same radius value).
\begin{lemma}\label{lem:slice-alg-scatter-dim}
    Let $\cM$ be a class of metric spaces of \algscatterdim $\lambda(\epsilon)$. Then there exists a \pc algorithm $\cC_{\cM}$ with the following property. Given $\epsilon\in (0,1)$, $a>0$, and $\tau\geq 2$, any \algscattering contains $\OO(\lambda(\nicefrac{\epsilon}{2})(\log \tau)/\epsilon)$ many triples whose radii lie in the interval $[a,\tau a]$. 
\end{lemma}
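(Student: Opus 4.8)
The plan is to derive Lemma~\ref{lem:slice-alg-scatter-dim} from the definition of \algscatterdim by a simple geometric-series (dyadic bucketing) argument that groups the radii in $[a,\tau a]$ into $\OO(\log\tau)$ rounded levels, combined with the observation that all requests at a fixed \emph{rounded} radius behave, after rescaling, like requests at a common radius with a slightly worse scattering parameter. Concretely, I would first fix a rounding scale: round every radius $r$ down to the nearest power of $(1+\nicefrac{\epsilon}{2})$; the number of distinct rounded values inside $[a,\tau a]$ is $\OO((\log\tau)/\log(1+\nicefrac{\epsilon}{2})) = \OO((\log\tau)/\epsilon)$ for $\epsilon\in(0,1)$. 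So it suffices to show that for each fixed rounded value $\rho$, the number of triples in the \algscattering whose radius rounds to $\rho$ is at most $\lambda(\nicefrac{\epsilon}{2})$.

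To do this I would use the \pc algorithm $\cC_{\cM}$ that is guaranteed to \emph{achieve} the bound $\lambda(\nicefrac{\epsilon}{2})$ in Definition~\ref{def:algscattercomp}, and argue that we may as well feed it \emph{rounded} radii. More precisely, given an \algscattering $(x_1,p_1,r_1),\dots,(x_\ell,p_\ell,r_\ell)$ (with respect to parameter $\epsilon$ and error $\nicefrac{\epsilon}{2}$), let $\tilde r_i$ be $r_i$ rounded down to the nearest power of $(1+\nicefrac{\epsilon}{2})$, so $\tilde r_i \le r_i \le (1+\nicefrac{\epsilon}{2})\tilde r_i$. One checks that the sequence $(x_1,p_1,\tilde r_1),\dots,(x_\ell,p_\ell,\tilde r_\ell)$ is an \algscattering with respect to the parameter $\epsilon' := (1+\epsilon)/(1+\nicefrac{\epsilon}{2}) - 1 = \Theta(\epsilon)$ and error $\nicefrac{\epsilon}{2}$: the refutation condition $\delta(x_i,p_i) > (1+\epsilon)r_i \ge (1+\epsilon)\tilde r_i = (1+\epsilon')(1+\nicefrac{\epsilon}{2})\tilde r_i \ge (1+\epsilon')\cdot(\text{something})$ needs to be matched with the \pc algorithm's error guarantee, so the bookkeeping has to be done carefully — this is where the factor $\nicefrac{\epsilon}{2}$ inside $\lambda$ (rather than $\epsilon$) comes from, and it is the one place one must be precise. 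Since the rounded sequence is an \algscattering whose triples take only $\OO((\log\tau)/\epsilon)$ distinct radius values in the relevant range, and \algscatterdim $\lambda(\cdot)$ bounds the number of triples per radius value, summing over the $\OO((\log\tau)/\epsilon)$ buckets yields $\OO(\lambda(\nicefrac{\epsilon}{2})(\log\tau)/\epsilon)$ triples with radii in $[a,\tau a]$.

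I expect the main obstacle to be purely technical rather than conceptual: making the rescaling/rounding argument compatible with the fact that, unlike a plain \scattering, an \algscattering is tied to the \emph{behavior} of $\cC_{\cM}$, so one cannot freely pass to subsequences or replace radii without checking that the resulting sequence is still a legal \algscattering for \emph{some} (possibly modified) \pc algorithm and some (possibly slightly worse) parameter. Concretely, I would resolve this by defining a new \pc algorithm $\cC'_{\cM}$ that, on input request set $Q$, first replaces each radius by its rounded-down power of $(1+\nicefrac{\epsilon}{2})$, then calls $\cC_{\cM}$ with error parameter tuned so that $(1+\eta')(1+\nicefrac{\epsilon}{2}) \le 1+\eta$ for the desired output error $\eta$; one verifies conditions (i)--(iii) of an approximate \pc algorithm for $\cC'_{\cM}$ (condition (ii), exact satisfiability, uses that rounding radii down only shrinks the feasible region by a controlled amount, which is absorbed into $\eta$). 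Then any \algscattering with respect to $\cC'_{\cM}$ has all its radii equal to powers of $(1+\nicefrac{\epsilon}{2})$, the number of such values in $[a,\tau a]$ is $\OO((\log\tau)/\epsilon)$, and the \algscatterdim bound $\lambda$ applies verbatim per value. The clean statement of the lemma — which asserts the existence of \emph{some} \pc algorithm with the stated slicing property — is exactly what lets us do this substitution without any loss.
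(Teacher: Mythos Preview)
Your approach matches the paper's: wrap the dimension-achieving \pc algorithm inside a new one that rounds all input radii to a geometric grid, so that a $(\cC',\epsilon)$-scattering through the wrapper induces a scattering through the inner algorithm on only $\OO((\log\tau)/\epsilon)$ distinct radius values in $[a,\tau a]$, and then sum the per-value bound $\lambda(\cdot)$.

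Two small technical glitches to repair. First, a \pc algorithm receives only the error parameter $\eta$, not the scattering parameter $\epsilon$, so your rounding grid must be written in terms of $\eta$; when the wrapper is later invoked inside an $\epsilon$-scattering with $\eta=\epsilon/2$, this becomes the grid you want. The paper rounds to powers of $1+\eta/50$. Second, rounding radii \emph{down} breaks condition~(ii) of an approximate \pc algorithm: exact feasibility of the original requests does not imply exact feasibility of the strictly smaller rounded requests, so the inner call to $\cC_{\cM}$ may legitimately fail on a satisfiable instance. Your ``absorbed into $\eta$'' does not rescue this, since condition~(ii) concerns the exact-satisfiability precondition on the input side, not the output error. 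The fix (and what the paper does) is to round \emph{up}: feasibility is then preserved, the extra $(1+\eta/50)$ factor is absorbed on the output side, and the refutation inequality for the rounded sequence degrades from parameter $\epsilon$ to $\epsilon/2$, which is exactly why $\lambda(\epsilon/2)$ appears in the bound.
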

\begin{proof}
    It suffices to show the weaker claim that the number of requests in the interval $[a,(1+\nicefrac{\epsilon}{100})a]$ is at most $2\lambda(\nicefrac{\epsilon}{2})$. This claim implies the lemma because the interval $[a,\tau a]$ can be covered with $\OO((\log\tau)/\epsilon)$ many intervals of the form $[(1+\nicefrac{\epsilon}{100})^j,(1+\nicefrac{\epsilon}{100})^{j+1}]$, $j\in\bZ$.
    
    Let $\cA_{\cM}$ be an \pc algorithm such that the \algscatterdim[$(\epsilon,\cA_{\cM})$] is $\lambda(\epsilon)$. Let $\eta\in (0,1)$ be the input error parameter. Consider the \pc algorithm $\cC_{\cM}$ that works as follows. For any of the input requests $(p,r)$ we round $r$ to $r'$, which is the smallest power of $1+\nicefrac{\eta}{50}$ larger than $r$. We then invoke $\cA_{\cM}$ on the rounded requests with error parameter $\eta/2$ and output the center returned by $\cA_{\cM}$. Clearly, this algorithm is an $(1+\eta)$-approximate \pc algorithm (for the original requests). 

    Consider any algorithmic \algscattering $(x_1,p_1,r_1),\dots,(x_\ell,p_\ell,r_\ell)$. Let $r_i'$, $i\in[\ell]$ be the rounded radii computed by $\cC_{\cM}$. Let $\epsilon'=\epsilon/2$. Let $1\leq j<i\leq\ell$. We have $\delta(x_i,p_i)>(1+\epsilon)r_i\geq (1+\epsilon)/(1+\epsilon/100)r_i'\geq (1+\epsilon')r_i'$. Moreover, we have $\delta(x_i,p_j)\leq (1+\epsilon'/2)r_i'$. Hence, the sequence $(x_1,p_1,r_1'),\dots,(x_\ell,p_\ell,r_\ell')$ is an algorithmic \algscattering[$(\cC_{\cM},\epsilon')$]. The radii in the requests $(p_i,r_i)$, $i\in[\ell]$ that lie in the interval $[a,(1+\nicefrac{\epsilon}{100})a]$ are rounded by $\cC_{\cM}$ to at most two distinct radius values because $\cC_{\cM}$ is invoked with error parameter $\eta=\epsilon/2$. Hence the (unrounded) sequence contains at most $2\lambda(\epsilon')=2\lambda(\epsilon/2)$ many triples with radii in the interval $[a,(1+\nicefrac{\epsilon}{100})a]$. This completes the proof of the claim and therefore of the lemma.
\end{proof}

\section{Framework for Efficient Parameterized Approximation Schemes}
\label{sec:framework}

\paragraph*{Main Result.}
We are now ready to state our main result. In the remainder of this section, we prove the following theorem, restated from the introduction. In Section~\ref{sec:algorithm}, we describe the EPAS and give some intuition. In Section~\ref{sec:analysis}, we give a full, technical analysis.

\thmmain*

\subsection{Algorithm}\label{sec:algorithm}
Our algorithm is stated formally in Algorithm~\ref{alg:framework-knc}.
We informally summarize the key steps of our algorithm, which we also outlined partially in the technical overview. We also give some intuition of the analysis.

Using standard enumeration techniques, we assume that we know (a sufficiently exact approximation of) the optimum objective function value $\opt$. Our goal is to satisfy the \emph{convex constraint} $f(\vc{x})\leq(1+\epsilon)\opt$ imposed on the distance vector $\vcx\in\bRp$ (which represents the distance vector $\vcd(P,X)$ induced by the feasible solution $X\subseteq F$).
By Observation~\ref{obs:norm-max-subgradients}, this constraint is equivalent to (infinitely many) \emph{linear constraints} $\vcw^{\intercal}\vcx\leq(1+\epsilon)\opt$ where $\vcw\in\grad$ is any subgradient of $f$.

To illustrate the main idea, we first describe a highly simplified, but failed attempt. We consider in each iteration of the while-loop (lines~\ref{step:loop}--\ref{step:loop-end}) a candidate solution $X$. If $f(\vc{x})\leq (1+\epsilon)\opt$, then we are done. Otherwise, we compute an ($\nicefrac{\epsilon}{10}$-approximate) subgradient $\vcw$ of $f$ at $\vcx$ in line~\ref{step:apxsg}. Since $\vcw^{\intercal}\vcx=f(\vcx)>(1+\epsilon)\opt$, this constitutes a \emph{violated} linear constraint. Consider sampling a point $p\in P$ with probability proportional to its contribution $w(p)\delta(p,X)$ to the objective $f(\vc{x})=\vcw^{\intercal}\vcx$ (line~\ref{alg:sample-witness}). An averaging argument shows that with probability $\Omega(\epsilon)$, the sampled point $p$ satisfies $\delta(p,X)>(1+\nicefrac{\epsilon}{3})\delta(p,O)$ for some fixed hypothetical optimum solution $O$. In this event, we identified a violated \emph{distance constraint}, and call $p$ an \emph{$\nicefrac{\epsilon}{3}$-witness} for $X$. We assign $p$ to a cluster $\iletter\in[k]$ picked uniformly at random, which equals the correct cluster of $p$ in $O$ with probability~$\nicefrac{1}{k}$. Assuming that both events occur, this allows us to add the \emph{request} $(p,r)$ with radius value $r=\delta(p,X)/(1+\nicefrac{\epsilon}{3})$ to the \emph{cluster constraint} $Q_\iletter$ imposed on the cluster with index $\iletter$. (See lines~\ref{alg:augment-cluster} and~\ref{alg:update-center}.) Here, we refer to the set $Q_\iletter$ of requests for cluster $\iletter$ as cluster constraint of $\iletter$.

Fix cluster index $\iletter\in[k]$. Let $(p_\iletter^{(1)},r_\iletter^{(1)}),\dots,(p_\iletter^{(\ell)},r_\iletter^{(\ell)})$ be the sequence of requests added to the cluster constraint associated with cluster $\iletter$. Let $x_{\iletter}^{(i)}$, $i\in [\ell]$ be the center of cluster $\iletter$ just before adding the request $(p_\iletter^{(i)},r_\iletter^{(i)})$ to $Q_\iletter$. The key observation is that the sequence of triples $(x_{\iletter}^{(1)},p_\iletter^{(1)},r_\iletter^{(1)}),\dots,(x_{\iletter}^{(\ell)},p_\iletter^{(\ell)},r_\iletter^{(\ell)})$ forms an algorithmic \scattering. We would like to argue that the length of this sequence is bounded because the \algscatterdim is bounded. Unfortunately, the scatter dimension bounds only the number of triples per radius value but not the overall length of the sequence.

To address this issue, we compute in line~\ref{step:compute-initial-upper-bounds} an initial upper bound $u(p)$ on the radius of any point $p\in P$. We (approximately) satisfy these initial distance constraints for all points in a greedy pre-processing step (see lines~\ref{step:start-preprocess}--\ref{step:end-preprocess}). We maintain the distance constraints during the main phase by adding them as initial requests (see line~\ref{step:initial-requests}). The upper bound $u(p)$ is a rough estimate of the smallest radius $r$ that may be imposed on $p$ as part of any request $(p,r)$. We modify the sampling process in the main phase (see line~\ref{alg:sample-witness}) to sample only from a subset of points whose distance to $X$ is not much smaller than their initial upper bound $u(p)$. We show via a careful argument that every request $(p,r)$ we make is consistent with $O$ with probability $\Omega(\nicefrac{\epsilon}{k})$. We argue, moreover, that all radii of requests made for a particular cluster are within a factor $\OO(\nicefrac{k}{\epsilon^2})$ of each other. The initial upper bounds are computed by detecting ``dense'' balls (line~\ref{step:compute-initial-upper-bounds}) in the input instance in the sense that they would receive high weight by some subgradient of the objective norm and would therefore require that any near-optimal solution must place a center in the vicinity of that dense ball.

\medskip
\begin{algorithm}[H]

\SetKwFunction{algo}{algo}
\caption{Framework for \knc}\label{alg:framework-knc}
  \KwData{Instance $\cI=((P,F,\delta), k, f\colon\bRp \rightarrow \bR_{\ge 0})$ of \knc, error parameter $\epsilon\in (0,1)$, $\opt>0$, \pc algorithm $\cC$ according to Lemma~\ref{lem:slice-alg-scatter-dim}}
  \KwResult{Solution $X$ of cost at most $(1+\epsilon)\opt$ if solution of cost at most $\opt$ exists}
  For each $p \in P$, compute $u(p)= \min\{\,\alpha > 0 \mid  f(\vc{1}_{p,\nicefrac{\alpha}{3}})\ge 3\opt/\alpha\,\}$\;\label{step:compute-initial-upper-bounds}
  Sort $P$ in non-decreasing order of $u(p)$\;\label{step:start-preprocess}
  Mark $p_i\in P$ if $\ball(p_i,u(p_i))$ is disjoint from  $\ball(p_j,u(p_j)$ for every $j<i$\;\label{step:mark-points}
  Let $p^{(1)}$, $\dots$, $p^{(k')}$ be the marked points.\tcp*{Lemma~\ref{lem:anchors-satisfied} shows that $k'\leq k$}

  Let $Q_\iletter=\{(p^{(\iletter)},u(p^{(\iletter)}))\}$ for all $\iletter \in [k']$\;\label{step:initial-requests}
  Let $Q_\iletter=\emptyset$ for all $\iletter$ with $k'<\iletter\leq k$\;
  Let $X=(x_1,\dots,x_k)$ be any set of centers where $x_\iletter$ satisfies the requests in $Q_\iletter$\;\label{step:end-preprocess}

  \While{$f(\vcd(P,X))>(1+\epsilon)\opt$}{\label{step:loop}
           $\vcw \gets \text{$\nicefrac{\epsilon}{10}$-subgradient of $f$ at $\vcd(P,X)$}$\;\label{step:apxsg}
     $A\gets\left\{\,p\in P\mid \delta(p,X)\geq \frac{\epsilon u(p)}{1000k}\,\right\}$\label{alg:admissable-set}\;
     Sample an element $p\in A$ where $\prob{p=a}=\frac{w(a)\delta(a,X)}{\sum_{b\in A}w(b)\delta(b,X)}$ for any $a\in A$\;\label{alg:sample-witness}
     Pick cluster $\iletter\in[k]$ for $p$ uniformly at random\;
     $Q_\iletter\gets Q_\iletter\cup \{(p,\delta(p,X)/(1+\nicefrac{\epsilon}{3}))\}$\;\label{alg:augment-cluster}
     $x_\iletter\gets$ $\cC(Q_\iletter,\nicefrac{\epsilon}{10})$
    \lIf{no $x_i$ was found}{
         \textbf{fail}
     }\label{alg:update-center}
  }
  \label{step:loop-end}
  \Return{$X$\;}
   
\end{algorithm}

\subsection{Analysis}\label{sec:analysis}
\paragraph*{Overview.} The analysis consists of establishing the following three facts. First, if the algorithm terminates without failure, it computes a $(1+\epsilon)$-approximation. Second, the algorithm terminates---with or without failure---after a number of iterations that depends on $k$ and $\epsilon$ only. Third, the algorithm does not fail with a probability that depends only on $k$ and $\epsilon$ as well.

The first step of the analysis follows immediately from the stopping criterion (line~\ref{step:loop}) of the while loop.
\begin{observation}[Correctness]\label{obs:apx-guarantee}
    If the algorithm terminates successfully (that is, without failure), then it outputs a $(1+\epsilon)$-approximate solution.
\end{observation}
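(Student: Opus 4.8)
The plan is to read the claim directly off the control flow of Algorithm~\ref{alg:framework-knc}, so essentially no new argument is needed. The only place the algorithm produces an output other than ``fail'' is the final \textbf{return} statement, and this line is reached only after the \textbf{while} loop of lines~\ref{step:loop}--\ref{step:loop-end} has terminated. Since we are assuming the run terminates \emph{without} failure, the loop must have exited because its guard became false; that is, the solution $X$ held at that moment satisfies $f(\vcd(P,X))\le (1+\epsilon)\opt$.

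It then only remains to note that the returned $X$ is a legitimate feasible solution, so that this bound is meaningful. This is immediate from inspection of the pseudocode: $X$ is initialized in the pre-processing phase (lines~\ref{step:start-preprocess}--\ref{step:end-preprocess}) as a $k$-tuple of points of $F$, and inside the main loop it is only modified in line~\ref{alg:update-center}, where a single coordinate $x_\iletter$ is overwritten by the point returned by the \pc algorithm $\cC$, which by definition lies in $F$ (the alternative, that $\cC$ fails, is the case we have excluded). Hence at every point of the execution $X\subseteq F$ with $|X|\le k$.

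Putting the two observations together, the returned $X$ is a feasible choice of at most $k$ centers whose induced cost $f(\vcd(P,X))$ is at most $(1+\epsilon)\opt$; since (via the standard enumeration preprocessing mentioned before the algorithm) $\opt$ is taken to be the optimal objective value, this is precisely a $(1+\epsilon)$-approximate solution. I do not expect any obstacle here — the step is purely syntactic bookkeeping about where and under what condition the algorithm can return. The substantive parts of the analysis, namely bounding the number of loop iterations in terms of $k$ and $\epsilon$ (by recognizing the per-cluster request sequences as algorithmic $\epsilon$-scatterings and invoking Definition~\ref{def:algscattercomp}) and lower-bounding the probability of not failing (via the witness-sampling argument sketched in the overview), are deferred to the subsequent lemmas.
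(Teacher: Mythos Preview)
Your proposal is correct and takes essentially the same approach as the paper, which simply notes that the observation follows immediately from the stopping criterion of the while loop. You add some extra feasibility bookkeeping about $X\subseteq F$ and $|X|\le k$, which is fine but more detail than the paper deems necessary for what is presented as a one-line observation.
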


The second step of the analysis is summarized in the following lemma, which we prove in Subsection~\ref{sec:bounding-iterations}.
\begin{lemma}[Runtime bound]\label{lem:termination}
    The algorithm terminates after $\OO\left(\frac{k(\log\nicefrac{k}{\epsilon})\lambda(\nicefrac{\epsilon}{10})}{\epsilon}\right)$ iterations---with or without failure.
\end{lemma}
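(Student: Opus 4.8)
The plan is to bound separately, for each cluster index $\iletter \in [k]$, the number of requests that can be added to the cluster constraint $Q_\iletter$, and then sum over $\iletter$. Fix $\iletter$ and let $(p_\iletter^{(1)},r_\iletter^{(1)}),\dots,(p_\iletter^{(\ell)},r_\iletter^{(\ell)})$ be the sequence of requests added to $Q_\iletter$ over the run of the algorithm (including the initial request from line~\ref{step:initial-requests}, if any), and let $x_\iletter^{(i)}$ be the center assigned to cluster $\iletter$ immediately before the $i$-th request is added. The first step is to observe, as indicated in Section~\ref{sec:algorithm}, that $(x_\iletter^{(1)},p_\iletter^{(1)},r_\iletter^{(1)}),\dots,(x_\iletter^{(\ell)},p_\iletter^{(\ell)},r_\iletter^{(\ell)})$ is an algorithmic \scattering for the \pc algorithm $\cC$: condition~(i) of Definition~\ref{def:algscattercomp} holds because $x_\iletter^{(i)}=\cC(Q_\iletter \text{ so far}, \nicefrac{\epsilon}{10})$ in line~\ref{alg:update-center}, and condition~(ii) holds because the request $(p_\iletter^{(i)},r_\iletter^{(i)})$ is only added in line~\ref{alg:augment-cluster} when $p_\iletter^{(i)}$ was sampled as a witness violating the current distance, so $\delta(x_\iletter^{(i)},p_\iletter^{(i)})=\delta(p_\iletter^{(i)},X)=(1+\nicefrac{\epsilon}{3})r_\iletter^{(i)}>(1+\epsilon/10)r_\iletter^{(i)}$ (here one uses $\cC$ with error $\epsilon/10$ so the refutation slack $\epsilon/10$ matches what Lemma~\ref{lem:slice-alg-scatter-dim} needs — I would double-check the precise constants against the hypotheses of that lemma and rescale $\epsilon$ if necessary).

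The second, and central, step is to argue that all radii $r_\iletter^{(1)},\dots,r_\iletter^{(\ell)}$ lie within a bounded aspect-ratio window, specifically within a factor $\OO(k/\epsilon^2)$ of each other. This is exactly the ``initial upper bounds keep the radii in range'' argument sketched in the technical overview, and I expect it to be the main obstacle of the proof. The plan is: every request radius $r_\iletter^{(i)}=\delta(p_\iletter^{(i)},X)/(1+\nicefrac{\epsilon}{3})$, and by the admissibility filter in line~\ref{alg:admissable-set} the sampled point satisfies $\delta(p_\iletter^{(i)},X)\ge \frac{\epsilon u(p_\iletter^{(i)})}{1000k}$, giving a lower bound $r_\iletter^{(i)} \ge \Omega(\epsilon u(p_\iletter^{(i)})/k)$. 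For the upper bound, suppose two requests $(p,r),(p',r')$ both lie in $Q_\iletter$ with $r' \ll \epsilon^2 r / k$. Any center satisfying both (e.g. a center of the hypothetical optimum, which satisfies all requests exactly since the algorithm only makes requests consistent with $O$ — this is where the success-probability analysis and the runtime analysis interact, but here we only need that whenever $\cC$ does not fail the triangle-inequality consequences hold) forces $\delta(p,p')\le r+r'$; combined with the fact that the initial upper bound $u(p')$ is maintained $3$-approximately throughout (Lemma~\ref{lem:anchors-satisfied}, or its analogue), so $X$ always has a center within $3u(p')\le 3\cdot O(k r'/\epsilon) \ll \epsilon r$ of $p'$, we get a center of $X$ within $(1+\nicefrac{\epsilon}{3})r$ of $p$ at the moment request $(p,r)$ was made — contradicting that $p$ was a witness with radius $r$. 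Hence $r'\ge \Omega(\epsilon^2 r/k)$, i.e.\ the aspect ratio of the radii in $Q_\iletter$ is $\tau = \OO(k/\epsilon^2)$. I would state and prove this as a standalone claim, being careful that the $3$-approximate maintenance of the initial upper bounds is invariant under adding further requests (this needs that each newly added request $(p,\delta(p,X)/(1+\epsilon/3))$ with $p$ admissible does not destroy the approximate satisfaction of $u(p)$, which follows again from the admissibility filter $\delta(p,X)\ge \epsilon u(p)/(1000k)$ and the fact that the new center still $(1+\epsilon/10)$-satisfies all of $Q_\iletter$).

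The third step applies Lemma~\ref{lem:slice-alg-scatter-dim} with this aspect ratio: using the \pc algorithm $\cC$ guaranteed by that lemma, the number of triples in an algorithmic \scattering whose radii lie in an interval $[a,\tau a]$ is $\OO(\lambda(\nicefrac{\epsilon}{2})(\log\tau)/\epsilon)$. Plugging in $\tau=\OO(k/\epsilon^2)$ gives $\log\tau = \OO(\log(k/\epsilon))$, so $\ell = \OO\!\left(\lambda(\nicefrac{\epsilon}{2})\log(\nicefrac{k}{\epsilon})/\epsilon\right)$ requests per cluster — and I'd note the statement uses $\lambda(\nicefrac{\epsilon}{10})$, so I would simply run $\cC$ with the appropriate error parameter so that the invoked scatter-dimension bound is $\lambda(\nicefrac{\epsilon}{10})$, absorbing constant factors. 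Since each iteration of the while-loop adds exactly one request to exactly one of the $k$ clusters, the total number of iterations is at most $k$ times the per-cluster bound, namely $\OO\!\left(k\lambda(\nicefrac{\epsilon}{10})\log(\nicefrac{k}{\epsilon})/\epsilon\right)$, which is the claimed bound. Finally, if at some iteration the \pc call in line~\ref{alg:update-center} fails, the algorithm terminates immediately (with failure), which only shortens the run; so the bound holds with or without failure, completing the proof. The one routine check I would be careful about throughout is the bookkeeping of the several $\epsilon$-rescalings ($\epsilon/3$ witness slack, $\epsilon/10$ subgradient/\pc error, $\epsilon/2$ inside Lemma~\ref{lem:slice-alg-scatter-dim}, $\epsilon/1000k$ admissibility) so that each inequality chain genuinely closes.
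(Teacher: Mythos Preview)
Your proposal is correct and follows essentially the same three-step structure as the paper's proof: (1) verify that the per-cluster request sequence forms an algorithmic \scattering[$(\cC,\nicefrac{\epsilon}{5})$], (2) bound the aspect ratio of the radii in any $Q_\iletter$ by $\OO(k/\epsilon^2)$ via the interaction of the admissibility filter and the maintained initial upper bounds (the paper isolates this as Lemma~\ref{lem:aspect-ratio}), and (3) apply Lemma~\ref{lem:slice-alg-scatter-dim} and multiply by $k$. Two small cleanups: exclude the initial request from line~\ref{step:initial-requests} from the scattering sequence (it is satisfied, not refuted, so condition~(ii) of Definition~\ref{def:algscattercomp} fails for it; the paper counts only requests added in line~\ref{alg:augment-cluster}); and in step~2 the aspect-ratio contradiction does not need the hypothetical optimum at all---it uses only that if $\cC$ does \emph{not} fail then the returned center $(1+\nicefrac{\epsilon}{10})$-satisfies both requests, which is exactly how the paper's Lemma~\ref{lem:aspect-ratio} is phrased (``if the ratio is too large then the algorithm fails'').
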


With these two insights at hand, we are left with the third step summarized by the following lemma, which we prove in Subsection~\ref{sec:bounding-prob}.
\begin{lemma}[Probability bound]\label{lem:success-prob}
    The algorithm terminates successfully (that is, without failure) with probability $ \exp{\left(-\widetilde{\OO}\left(\frac{k\lambda(\nicefrac{\epsilon}{10})}{\epsilon}\right)\right)}$.
\end{lemma}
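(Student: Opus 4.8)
Fix once and for all a hypothetical optimum solution $O=(o_1,\dots,o_k)$ together with the cluster assignment it induces on $P$ (ties broken arbitrarily but consistently), so that for each $p$ there is an index $\iletter(p)$ with $\delta(p,o_{\iletter(p)})=\delta(p,O)$. Call an iteration of the while loop \emph{consistent} if the point $p$ drawn in line~\ref{alg:sample-witness} is an $\nicefrac{\epsilon}{3}$-witness, i.e.\ $\delta(p,X)>(1+\nicefrac{\epsilon}{3})\delta(p,O)$, and the cluster index $\iletter$ drawn for it equals $\iletter(p)$. The whole argument reduces to two claims: \textbf{(a)} if every iteration the algorithm actually performs is consistent, then it terminates successfully; and \textbf{(b)} conditioned on any history in which all earlier iterations were consistent and the loop has not yet terminated, the current iteration is consistent with probability $\Omega(\nicefrac{\epsilon}{k})$. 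Granting (a) and (b), and using that by Lemma~\ref{lem:termination} the algorithm performs at most $N=\OO\bigl(k(\log(k/\epsilon))\lambda(\nicefrac{\epsilon}{10})/\epsilon\bigr)$ iterations regardless of its random choices, the chain rule over the (at most $N$) iterations gives
\[
\Pr[\text{success}]\ \ge\ \Pr[\text{all performed iterations consistent}]\ \ge\ \bigl(\Omega(\nicefrac{\epsilon}{k})\bigr)^{N}\ =\ \exp\bigl(-\OO(N\log(k/\epsilon))\bigr)\ =\ \exp\Bigl(-\widetilde{\OO}\bigl(k\lambda(\nicefrac{\epsilon}{10})/\epsilon\bigr)\Bigr),
\]
since the extra $\log^2(k/\epsilon)$ factor is absorbed into $\widetilde{\OO}$. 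This is exactly the claimed bound, so it remains to establish (a) and (b).

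\textbf{Consistency implies success (claim (a)).} I would prove by induction on the iterations that, under the hypothesis ``all iterations so far were consistent'', there is a relabelling of $O$ such that each optimum center $o_\iletter$ \emph{exactly} satisfies every request currently in $Q_\iletter$. For the base case one needs that the initial upper bounds are valid, $u(p)\ge\delta(p,O)$ for all $p$ (this is precisely where the definition $u(p)=\min\{\alpha>0:f(\vc{1}_{p,\nicefrac{\alpha}{3}})\ge 3\opt/\alpha\}$ is used; see Lemma~\ref{lem:anchors-satisfied}), so each anchor ball $\ball(p^{(\iletter)},u(p^{(\iletter)}))$ contains a center of $O$; as the anchor balls are pairwise disjoint by line~\ref{step:mark-points}, these centers are distinct, which both yields $k'\le k$ and the desired injective matching. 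For the inductive step, a consistent iteration adds to $Q_{\iletter(p)}$ the request $\bigl(p,\delta(p,X)/(1+\nicefrac{\epsilon}{3})\bigr)$, and $\delta(o_{\iletter(p)},p)=\delta(p,O)<\delta(p,X)/(1+\nicefrac{\epsilon}{3})$ by the witness property, so $o_{\iletter(p)}$ still satisfies $Q_{\iletter(p)}$ exactly. Hence, by property (ii) of the \pc algorithm $\cC$ (it never fails when an exactly feasible center exists), line~\ref{alg:update-center} never triggers \textbf{fail}. Since moreover Lemma~\ref{lem:termination} forces the loop to stop after at most $N$ iterations, it must stop via its stopping condition, i.e.\ successfully (and then the output is a $(1+\epsilon)$-approximation by Observation~\ref{obs:apx-guarantee}).

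\textbf{One iteration is consistent with probability $\Omega(\epsilon/k)$ (claim (b)).} Condition on a history in which the loop has not terminated, so $f(\vcd(P,X))>(1+\epsilon)\opt$, and let $\vcw$ be the $\nicefrac{\epsilon}{10}$-approximate subgradient of line~\ref{step:apxsg}. Then $\sum_p w(p)\delta(p,X)\ge f(\vcd(P,X))/(1+\nicefrac{\epsilon}{10})\ge(1+\Omega(\epsilon))\opt$, while $\sum_p w(p)\delta(p,O)\le f(\vcd(P,O))=\opt$ because $\vcw$ is a genuine subgradient. An averaging argument then shows that the $w\delta$-mass carried by non-witnesses (points with $\delta(p,X)\le(1+\nicefrac{\epsilon}{3})\delta(p,O)$) is at most $(1+\nicefrac{\epsilon}{3})\opt$, so witnesses carry an $\Omega(\epsilon)$ fraction of the total $w\delta$-mass. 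The delicate point---and what I expect to be the main obstacle---is that restricting the sample to the admissible set $A$ of line~\ref{alg:admissable-set} must not destroy this: one has to show that the witnesses lying \emph{outside} $A$ (those surprisingly close to the current $X$) carry only an $o(\epsilon)$-fraction of the mass, because a cluster of such close witnesses would be confined to a tiny ball, which would force a smaller value of $u$ on them---contradiction. This is exactly the content of the sampling lemma (Lemma~\ref{lem:sample-witness}), whose proof again hinges on the precise form of $u$. Granting it, the point $p$ drawn from the $w\delta$-weighted distribution on $A$ in line~\ref{alg:sample-witness} is an $\nicefrac{\epsilon}{3}$-witness with probability $\Omega(\epsilon)$; independently, the uniformly random cluster index equals $\iletter(p)$ with probability $\nicefrac{1}{k}$; multiplying gives $\Omega(\nicefrac{\epsilon}{k})$, as required.

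\textbf{Conclusion.} Combining (a), (b), and the deterministic iteration bound of Lemma~\ref{lem:termination} through the displayed chain-rule estimate yields the stated probability, completing the proof. The plan is thus to first isolate the two structural properties of the upper bounds $u(p)$ (validity, used in (a), and the ``dense-ball'' bound, used in the sampling lemma feeding (b)) as standalone lemmas, and then assemble the probability bound exactly as above.
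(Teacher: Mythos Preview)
Your proposal is correct and follows essentially the same approach as the paper: establish initial consistency via the disjoint anchor balls and the validity of $u(p)$, maintain consistency in each iteration with probability $\Omega(\epsilon/k)$ using Lemma~\ref{lem:sample-witness}, and combine with the iteration bound of Lemma~\ref{lem:termination} via the chain rule. One minor correction: the validity of the upper bounds ($\delta(p,O)\le u(p)$) is Lemma~\ref{lem:anchors}, not Lemma~\ref{lem:anchors-satisfied}; the latter gives $k'\le k$ and $\delta(p,X)\le 4u(p)$.
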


We repeat the algorithm $\exp{\left(\widetilde{\OO}\left(\frac{k\lambda(\nicefrac{\epsilon}{10})}{\epsilon}\right)\right)}$ many times and hence succeed with high probability by Lemma~\ref{lem:success-prob}.

 The remainder of this section is dedicated to proving Lemmas~\ref{lem:termination} and~\ref{lem:success-prob},  thereby completing proof of the main Theorem~\ref{thm:main}.

 \subsubsection{Bounding the Number of Iterations}\label{sec:bounding-iterations}
In this subsection, we prove Lemma~\ref{lem:termination}. The proof consists in three steps. First, we argue that the initial upper distance bounds $u(p)$ that we compute for each point $p\in P$ are (i) consistent with any optimum solution (Lemma~\ref{lem:anchors}), and (ii) approximately satisfied throughout the algorithm (Lemma~\ref{lem:anchors-satisfied}). Second, we establish that the radii in the requests made for any particular cluster are within a bounded factor (aspect ratio) of each other (Lemma~\ref{lem:aspect-ratio}). The third step consists in proving that, for any particular cluster, the sequence of requests along with the corresponding centers constitute an algorithmic \algscattering of bounded aspect ratio. Hence we can use Lemma~\ref{lem:slice-alg-scatter-dim} to bound the length of the sequence and thus the number of iterations by a function of $k$ and $\epsilon$, thereby completing the proof of Lemma~\ref{lem:termination}.

\paragraph*{Initial Upper Bounds.} We first show that the initial upper bounds we calculate in the algorithm are conservative in the sense that they are also respected by an optimal solution.

\begin{lemma}\label{lem:anchors}
    If $O$ is an optimal solution then $\delta(p,O)\leq u(p)$ for any $p\in P$, where $u(p)$ is the initial upper bound computed in line~\ref{step:compute-initial-upper-bounds} of Algorithm~\ref{alg:framework-knc}.
\end{lemma}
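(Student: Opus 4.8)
The statement to prove is Lemma~\ref{lem:anchors}: for any optimal solution $O$ and any point $p\in P$, the inequality $\delta(p,O)\le u(p)$ holds, where $u(p)=\min\{\alpha>0\mid f(\vc{1}_{p,\alpha/3})\ge 3\opt/\alpha\}$. The plan is to argue by contradiction: suppose there is a point $p$ with $\delta(p,O)>u(p)$, and derive that the cost of $O$ exceeds $\opt$, contradicting optimality.

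First I would unpack the definition of $u(p)$. Write $\alpha^*=u(p)$ and let $r=\alpha^*/3$, so that by the definition of the minimum, $f(\vc{1}_{p,r})\ge 3\opt/\alpha^* = \opt/r$. Here $\vc{1}_{p,r}=\vc{1}_{\ball(p,r)\cap P}$ is the characteristic vector of the ball of radius $r$ around $p$. Let $B=\ball(p,r)\cap P$ be the set of points within distance $r$ of $p$. Now suppose $\delta(p,O)>\alpha^*=3r$. Then by the triangle inequality, every point $q\in B$ satisfies $\delta(q,O)\ge \delta(p,O)-\delta(p,q)>3r-r=2r>r$. So the distance vector $\vcd(P,O)$ dominates (coordinatewise) the vector $r\cdot\vc{1}_{B}$ restricted to the coordinates in $B$, and hence $\vcd(P,O)\ge r\cdot\vc{1}_{p,r}$ coordinatewise (the coordinates outside $B$ of $r\cdot\vc{1}_{p,r}$ are zero).

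Next I would invoke monotonicity and homogeneity of the norm $f$ (conditions guaranteed since $f$ is a monotone norm by the \knc definition). Since $\vcd(P,O)\ge r\cdot\vc{1}_{p,r}\ge\vc{0}$, monotonicity gives $f(\vcd(P,O))\ge f(r\cdot\vc{1}_{p,r}) = r\cdot f(\vc{1}_{p,r})\ge r\cdot(\opt/r)=\opt$. But this only gives $\ge\opt$, not a strict violation, so I need to be slightly more careful: I would use a strict inequality somewhere, e.g., note that the minimum defining $u(p)$ is attained, or argue with a value $\alpha$ slightly smaller than $u(p)$ where the dense-ball condition fails and push the argument through, or simply observe that if $\delta(p,O)>u(p)$ strictly then there is some $\alpha$ with $\delta(p,O)>\alpha>u(p)$ is false — rather, since $u$ is defined as a minimum over $\alpha$ where the condition $f(\vc{1}_{p,\alpha/3})\ge 3\opt/\alpha$ holds, and this set is ``upward closed enough'' in a suitable sense (larger $\alpha$ makes the ball bigger and the threshold smaller), one can take $\alpha$ with $u(p)<\alpha<\delta(p,O)$ for which the condition still holds, giving $f(\vcd(P,O))\ge (\alpha/3)f(\vc{1}_{p,\alpha/3})\ge (\alpha/3)(3\opt/\alpha)=\opt$ with the earlier ball-domination now strict on at least the coordinate $p$ itself (since $p\in B$ always and $\delta(p,O)>\alpha>\alpha/3$), yielding $f(\vcd(P,O))>\opt$, the desired contradiction.

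\textbf{Main obstacle.} The delicate point is getting a \emph{strict} inequality out of what is naturally a non-strict domination argument; $f$ being merely monotone (not strictly monotone) means bumping one coordinate up need not strictly increase $f$. The clean way around this is to exploit the gap between $u(p)$ (a minimum, hence attained) and the strictly larger quantity $\delta(p,O)$: I would pick any $r'$ with $u(p)/3 < r' < \delta(p,O)/3$, note the dense-ball condition $f(\vc{1}_{p,r'})\ge \opt/r'$ still holds at such $r'$ (monotonicity of the function $\alpha\mapsto f(\vc{1}_{p,\alpha/3})\cdot\alpha/(3\opt)$ in the relevant range — it is nondecreasing since both the ball grows and we multiply by a larger $\alpha$), and observe every $q\in\ball(p,r')\cap P$ has $\delta(q,O)>\delta(p,O)-r'>2r'$, so $\vcd(P,O)\ge 2r'\cdot\vc{1}_{p,r'}$, whence $f(\vcd(P,O))\ge 2r' f(\vc{1}_{p,r'})\ge 2\opt>\opt$. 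This comfortably beats $\opt$ and contradicts optimality of $O$, completing the proof.
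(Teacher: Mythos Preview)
Your proposal is correct and follows essentially the same approach as the paper: assume $\delta(p,O)>u(p)$, use the triangle inequality to show every point in $\ball(p,u(p)/3)$ is far from $O$, and then apply monotonicity and homogeneity of $f$ to derive $f(\vcd(P,O))\ge 2\opt$. The only difference is that the paper does this directly at $\alpha=u(p)$, obtaining $\vcd(P,O)\ge (2\alpha/3)\cdot\vc{1}_{p,\alpha/3}$ and hence $f(\vcd(P,O))\ge 2\opt$ in one step; your detour through an auxiliary $r'\in(u(p)/3,\delta(p,O)/3)$ is unnecessary, since you already observed in your first attempt that every $q\in B$ satisfies $\delta(q,O)>2r$ (not merely $>r$), which immediately gives the factor of~$2$ you were looking for.
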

\begin{proof}
    Let $\alpha=u(p)$. For the sake of a contradiction, assume that $\delta(p,O)>\alpha$. By triangle inequality, any point $p'\in\ball(p, \alpha/3)$ has distance at least $2\alpha/3$ to $O$. Hence we have $\vcd(P,O)\geq (2\alpha/3)\cdot \vc{1}_{p,\nicefrac{\alpha}{3}}$ and thus $f(\vcd(P,O))\geq f((2 \alpha/3)\cdot \vc{1}_{p,\nicefrac{\alpha}{3}})=(2 \alpha/3) f(\vc{1}_{p,\nicefrac{\alpha}{3}})\geq (2\alpha/3)\cdot\frac{3\opt}{\alpha}=2\opt$, which is a contradiction.
\end{proof}
 
The following lemma says that throughout the algorithm we approximately satisfy all upper bounds. We remark that the initialization (lines~\ref{step:start-preprocess}--\ref{step:end-preprocess}) as well as the analysis is a variant of Plesn\'\i k's algorithm~\cite{plesnik1987heuristic} for \pkc when applied to point set $P$ with radii $u(p)$, $p\in P$.

\begin{lemma}\label{lem:anchors-satisfied}
The number $k'$ of points marked in line~\ref{step:mark-points} in Algorithm~\ref{alg:framework-knc} is at most $k$. Moreover, at any time during the execution of the while loop (lines~\ref{step:loop}--\ref{step:loop-end}), we have that $\delta(p,X)\leq 4u(p)$. For any request $(p,r)$ added to some cluster constraint, we have $r\leq 4u(p)$.
\end{lemma}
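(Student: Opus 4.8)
The plan is to mirror Plesn\'\i k's analysis for \pkc applied to the point set $P$ with the radii $u(p)$. First I would bound $k'$. Suppose $p^{(1)},\dots,p^{(k')}$ are the marked points, listed in the non-decreasing order of their $u$-values as they were processed. By the marking rule, the balls $\ball(p^{(1)},u(p^{(1)})),\dots,\ball(p^{(k')},u(p^{(k')}))$ are pairwise disjoint. By Lemma~\ref{lem:anchors}, an optimal solution $O$ places, for each marked point $p^{(\iletter)}$, a center within distance $u(p^{(\iletter)})$ of $p^{(\iletter)}$, i.e.\ a center inside $\ball(p^{(\iletter)},u(p^{(\iletter)}))$. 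Since these balls are disjoint, these are $k'$ distinct centers of $O$, so $k'\le |O| = k$. This is the easy part.

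Next I would establish the $\delta(p,X)\le 4u(p)$ invariant. The key point is that for \emph{every} point $p\in P$ there is a marked point $p^{(\iletter)}$ with $u(p^{(\iletter)})\le u(p)$ whose ball $\ball(p^{(\iletter)},u(p^{(\iletter)}))$ intersects $\ball(p,u(p))$: indeed, if $p$ itself was marked take $p^{(\iletter)}=p$; otherwise, at the moment $p$ was considered in the sorted order it failed the marking test, meaning its ball intersected the ball of some earlier (hence smaller or equal $u$-value, by the sorting in line~\ref{step:start-preprocess}) marked point. Fix such a marked $p^{(\iletter)}$, with witness $q\in \ball(p,u(p))\cap\ball(p^{(\iletter)},u(p^{(\iletter)}))$. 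By construction, $Q_\iletter$ is initialized with $(p^{(\iletter)},u(p^{(\iletter)}))$ in line~\ref{step:initial-requests}, and the algorithm only ever \emph{adds} requests to $Q_\iletter$; moreover, whenever the center $x_\iletter$ is updated (line~\ref{alg:update-center}) it is produced by the \pc algorithm $\cC$ with error $\nicefrac{\epsilon}{10}$ on the current $Q_\iletter\supseteq\{(p^{(\iletter)},u(p^{(\iletter)}))\}$, and by property~(ii) of an approximate \pc algorithm it does not fail on those branches that are consistent with the optimum (since $O$ satisfies the initial request $(p^{(\iletter)},u(p^{(\iletter)}))$ by Lemma~\ref{lem:anchors}, and we will see all later requests are also $O$-consistent on the good branch). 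Hence on a non-failing execution, at every step $x_\iletter$ is a center $\nicefrac{\epsilon}{10}$-satisfying $(p^{(\iletter)},u(p^{(\iletter)}))$, i.e.\ $\delta(x_\iletter,p^{(\iletter)})\le (1+\nicefrac{\epsilon}{10})u(p^{(\iletter)})$. Then by the triangle inequality through $p^{(\iletter)}$ and $q$,
\[
\delta(p,X)\le \delta(p,x_\iletter)\le \delta(p,q)+\delta(q,p^{(\iletter)})+\delta(p^{(\iletter)},x_\iletter)\le u(p)+u(p^{(\iletter)})+(1+\tfrac{\epsilon}{10})u(p^{(\iletter)})\le 4u(p),
\]
using $u(p^{(\iletter)})\le u(p)$ and $\epsilon<1$. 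Finally, for any request $(p,r)$ added in line~\ref{alg:augment-cluster}, we have $r=\delta(p,X)/(1+\nicefrac{\epsilon}{3})\le \delta(p,X)\le 4u(p)$ by the invariant just proved (applied at the moment the request is created); and for the initial requests $(p^{(\iletter)},u(p^{(\iletter)}))$ trivially $r=u(p^{(\iletter)})=u(p^{(\iletter)})\le 4u(p^{(\iletter)})$. This proves the last claim.

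The main obstacle is the circularity lurking in the invariant argument: to guarantee $\cC$ does not fail (so that $x_\iletter$ is well-defined and $\nicefrac{\epsilon}{10}$-satisfies all of $Q_\iletter$) one must know that the requests in $Q_\iletter$ are consistent with some fixed optimum $O$, but that consistency is exactly what the success-probability analysis (Lemma~\ref{lem:success-prob}) is about. I would resolve this by stating the invariant \emph{conditionally}: on any execution of the while loop in which $\cC$ never returns ``fail'', every $x_\iletter$ $\nicefrac{\epsilon}{10}$-satisfies $Q_\iletter$ (this is immediate from property~(i) of the \pc algorithm, \emph{regardless} of $O$-consistency), and hence the displayed triangle-inequality chain goes through whenever $x_\iletter$ exists. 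Since the lemma's conclusion is only needed along non-failing executions (it is used in the runtime bound of Lemma~\ref{lem:termination}, where we analyze the length of the request sequence before a possible failure), this conditional form suffices, and we never actually need $O$-consistency here — only the two \pc-algorithm guarantees and the greedy disjointness/covering structure from lines~\ref{step:start-preprocess}--\ref{step:mark-points}.
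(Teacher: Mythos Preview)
Your proposal is correct and follows essentially the same approach as the paper: Plesn\'\i k-style disjointness of the marked balls combined with Lemma~\ref{lem:anchors} gives $k'\le k$; the persistence of the initial request $(p^{(\iletter)},u(p^{(\iletter)}))$ in $Q_\iletter$ forces $\delta(x_\iletter,p^{(\iletter)})\le(1+\nicefrac{\epsilon}{10})u(p^{(\iletter)})$ throughout, and the triangle-inequality chain through the marked neighbor yields $\delta(p,X)\le 4u(p)$. Your extended discussion of the ``circularity'' is more than the paper offers but ultimately unnecessary: property~(i) of an approximate \pc algorithm already guarantees that any non-fail output $\nicefrac{\epsilon}{10}$-satisfies all current requests, so the invariant holds at every step prior to a failure without any appeal to $O$-consistency---which is exactly the conditional form you settle on.
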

\begin{proof}
    By Lemma~\ref{lem:anchors} each of the balls $\ball(p^{(\iletter)},u(p^{(\iletter)}))$ with marked $p^{(\iletter)}$, $\iletter\in [k']$ contains at least one point from some hypothetical optimum solution~$O$. On the other hand, these balls are pairwise disjoint by construction. Hence $k'\leq |O|\leq k$. This also implies that the algorithm can initialize $X=(x_1,\dots,x_k)$ in line~\ref{step:end-preprocess} with centers satisfying all initial cluster constraints. For example, it may pick the $k'$ centers in $F$ closest to $p^{(\iletter)}$, $\iletter\in[k']$ and $k-k'$ many additional arbitrary centers.
   
    Because these initial requests are never removed, they are passed to the \pc algorithm (with error parameter $\nicefrac{\epsilon}{10}$; see line~\ref{alg:update-center}) whenever we make a change in the respective cluster. Hence,  we have $\delta(p,X)\leq (1+\nicefrac{\epsilon}{10})u(p)\leq 3u(p)/2$ for any marked point $p$ throughout the execution of the while loop. For any point $p'$ \emph{not} marked, $\ball(p',u(p'))$ intersects $\ball(p,u(p))$ for some marked $p$. Because  the points are processed in line~\ref{step:mark-points} in non-decreasing order of $u(\cdot)$, we must have $u(p)\leq u(p')$. As argued before, $\ball(p,3u(p)/2)$ is guaranteed to contain a center in $X$ at any time during the while loop. This center has distance at most $u(p')+2\cdot 3u(p)/2\leq 4u(p')$ from $p'$ by triangle inequality. For the second claim, notice that $r<\delta(p,X)\leq 4u(p)$ at the time this request is processed in line~\ref{alg:update-center} for the first time.
\end{proof}

\paragraph*{Bounding the Aspect-Ratio of Requests.} The following lemma establishes that the radii of any two requests made for the same cluster are within a factor $\OO(\nicefrac{k}{\epsilon^2})$ from each other. The intuition is as follows. We ensure in the algorithm (see line~\ref{alg:admissable-set}) that we only sample points whose radii are within a factor $\OO(\nicefrac{k}{\epsilon})$ from $u(p)$. Assume that the radii, and thus the initial bounds $u(p)$, $u(p')$, in two request $(p,r)$, $(p',r')$ to the same cluster were very far from each other, say $r'\ll r$ and $u(p')\ll u(p)$. This would then imply that $p$ was already (essentially) within radius $r$ from some center before requesting $(p,r)$ since there must be a center within radius $4u(p')\ll \epsilon r/3$ from $p'$ by Lemma~\ref{lem:anchors-satisfied}. This contradicts the assumption that we requested $(p,r)$ in the first place.

\begin{lemma}\label{lem:aspect-ratio}
   Let $(p,r)$ and $(p',r')$ be requests added (in either order) to the same cluster constraint $Q_\iletter$, $\iletter\in[k]$ in line~\ref{alg:augment-cluster} of Algorithm~\ref{alg:framework-knc}. If $r'\leq \epsilon^2\cdot r/(10^4k)$ then the algorithm fails in line~\ref{alg:update-center} upon making the second of the two requests.
\end{lemma}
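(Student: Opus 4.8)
The plan is to assume, for contradiction, that the second request is successfully satisfied in line~\ref{alg:update-center}, and to derive that the \emph{earlier} of the two requests was actually unnecessary — i.e., the center of cluster $\iletter$ was already (approximately) close enough to the requested point before that request was made — contradicting the fact that a request is only issued when the current solution fails the corresponding distance constraint. Concretely, write $r' \leq \epsilon^2 r/(10^4 k)$. There are two cases depending on the order in which $(p,r)$ and $(p',r')$ were added; the interesting case is when $(p,r)$ is added first (if $(p',r')$ is added first, a symmetric but easier argument applies, using that the later request $(p,r)$ with the \emph{larger} radius cannot be violated once a center satisfying $(p',r')$ is present and $p,p'$ are forced close — actually the cleaner statement is that in either order we reach a contradiction, so I would just phrase everything in terms of ``the request with the smaller radius $r'$ being present/active when the request with radius $r$ is evaluated,'' handling both sub-cases uniformly).

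First I would record the three ingredients. (1) If the second request is satisfied, then there is a center $x_\iletter \in X$ with $\delta(x_\iletter, p) \leq (1+\epsilon/10) r$ and $\delta(x_\iletter, p') \leq (1+\epsilon/10) r'$ — because the \pc algorithm $\cC$ is invoked with error parameter $\epsilon/10$ on the cluster constraint $Q_\iletter$ which contains \emph{both} $(p,r)$ and $(p',r')$. Hence, by triangle inequality, $\delta(p,p') \leq (1+\epsilon/10)(r + r') \leq (1 + \epsilon/10)(1 + \epsilon/2) r \leq (1+\epsilon) r$ (using $r' \leq \epsilon r$, crudely). (2) By the sampling restriction in line~\ref{alg:admissable-set}, when the request $(p',r')$ was issued we had $\delta(p', X^{\text{then}}) \geq \epsilon u(p')/(1000 k)$; since the requested radius is $r' = \delta(p',X^{\text{then}})/(1+\epsilon/3) \geq \delta(p', X^{\text{then}})/2$, we get $u(p') \leq (1000k/\epsilon)\cdot 2 r' = 2000 k r'/\epsilon$. (3) By Lemma~\ref{lem:anchors-satisfied}, at \emph{every} time during the while loop there is a center in $X$ within distance $4 u(p')$ of $p'$; in particular at the moment the request $(p,r)$ was about to be issued, some center $y \in X$ had $\delta(y, p') \leq 4 u(p') \leq 8000 k r'/\epsilon$.

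Now I combine them. At the moment just before $(p,r)$ was issued, by triangle inequality,
\[
\delta(y, p) \;\leq\; \delta(y,p') + \delta(p',p) \;\leq\; \frac{8000 k r'}{\epsilon} + (1+\epsilon) r.
\]
Using $r' \leq \epsilon^2 r/(10^4 k)$, the first term is at most $(8000 k/\epsilon)\cdot \epsilon^2 r/(10^4 k) = (8/10)\epsilon r < \epsilon r$, so $\delta(y,p) < (1+\epsilon) r + \epsilon r = (1 + 2\epsilon) r$. I would tighten the numeric constants (they are generous — $10^4$ against factors like $8000$, $1000$, $3$) so that this bound comes out as $\delta(y,p) \leq (1+\epsilon/3)\,\delta(p, X^{\text{then}})$ — wait, more precisely: the request $(p,r)$ is issued only because at that moment $\delta(p, X^{\text{then}}) > (1+\epsilon/3) r$ (that is exactly the condition that $p$ was an $\epsilon/3$-witness and $r = \delta(p,X^{\text{then}})/(1+\epsilon/3)$). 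So if I can show $\delta(y,p) \leq (1+\epsilon/3) r$ with $y \in X^{\text{then}}$, I contradict $\delta(p, X^{\text{then}}) > (1+\epsilon/3) r$. The only adjustment is to be careful in ingredient (1): there $r$ appeared with factor $(1+\epsilon)$, but I actually want factor close to $1$; since $r' \leq \epsilon^2 r/(10^4 k) \ll \epsilon r$, the bound $\delta(p,p') \leq r + r' \leq (1 + \epsilon^2/(10^4 k)) r$ already suffices, and then $\delta(y,p) \leq (8/10)\epsilon r + (1 + \epsilon^2/(10^4k)) r < (1+\epsilon/3) r$, the contradiction.

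The main obstacle — and the only real subtlety — is bookkeeping the \emph{temporal} ordering: ensuring that ingredient (3) (a center within $4u(p')$ of $p'$) is valid at the precise moment we need it, which requires that by then the initial request for $p'$'s ``anchor'' ball is already present, and that Lemma~\ref{lem:anchors-satisfied}'s invariant ($\delta(p',X) \leq 4u(p')$ holds throughout the while loop) genuinely applies at that iteration regardless of whether $(p,r)$ or $(p',r')$ came first. Since Lemma~\ref{lem:anchors-satisfied} is stated as a global invariant over the whole while loop, this is clean: the invariant holds at \emph{every} iteration, so in particular at the iteration issuing the smaller-radius-relevant request and at the iteration issuing $(p,r)$. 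The other care point is the direction of the $u(p') \leq O(k/\epsilon) r'$ bound: it only holds because $p'$ was sampled from the \emph{admissible set} $A$, so if $(p',r')$ is the initial (anchor) request rather than a while-loop request, the bound $u(p') \leq O(k/\epsilon) r'$ does not come from line~\ref{alg:admissable-set} — but then $r' = u(p')$ by construction, which is even better. So I would split on ``is $(p',r')$ an initial request or a while-loop request,'' note both give $u(p') = O(k r'/\epsilon)$, and then run the single computation above. With the constants chosen as in the statement ($10^4 k$ in the denominator), everything fits with room to spare.
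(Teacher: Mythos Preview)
Your proposal is correct and follows essentially the same route as the paper: assume a center $x_\iletter$ satisfies both requests, bound $\delta(p,p')\le(1+\epsilon/10)(r+r')$, use $p'\in A$ to get $u(p')=\OO(kr'/\epsilon)$, invoke Lemma~\ref{lem:anchors-satisfied} to place a center within $4u(p')$ of $p'$ at the moment $(p,r)$ is issued, and conclude $\delta(p,X)\le(1+\epsilon/4)r<(1+\epsilon/3)r$, contradicting the definition of~$r$. Two minor points: your case split on the temporal order and on ``initial vs.\ while-loop request'' for $(p',r')$ is unnecessary---the lemma only concerns requests added in line~\ref{alg:augment-cluster}, and Lemma~\ref{lem:anchors-satisfied} is a global invariant over the entire while loop, so the argument is order-agnostic as written in the paper.
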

\begin{proof}
 Assume for the sake of a contradiction that the algorithm does not fail but finds a center $x_\iletter$ with $\delta(p,x_\iletter)\leq (1+\nicefrac{\epsilon}{10})r$ and $\delta(p',x_\iletter)\leq (1+\nicefrac{\epsilon}{10})r'$. Hence $\delta(p,p')\leq (1+\nicefrac{\epsilon}{10})(r+r')$ by triangle inequality. By Lemma~\ref{lem:anchors-satisfied}, we have $r\leq 4u(p)$ and $r'\leq 4u(p')$. Because we sample points from the set $A$ defined in line~\ref{alg:admissable-set}, we have $r\geq \epsilon u(p)/(200k)$ and $r'\geq\epsilon u(p')/(200k)$. 

Suppose $r'\leq \epsilon^2 r/(10^4k)$. At the time of adding $(p,r)$ to $Q_\iletter$ the current candidate solution $X$ satisfies $\delta(p',X)\leq 4u(p')\leq 1000kr'/\epsilon$ by Lemma~\ref{lem:anchors-satisfied}. Hence 
\begin{align*}
\delta(p,X) & \leq \delta(p,p')+\delta(p',X)\\
            & \leq (1+\nicefrac{\epsilon}{10})(r+r')+1000kr'/\epsilon\\
            & \leq (1+\nicefrac{\epsilon}{4})r\,.
\end{align*}
However, this is a contradiction because $\delta(p,X)=(1+\nicefrac{\epsilon}{3})r$ when requesting $(p,r)$ to $Q_\iletter$ as can be seen from line~\ref{alg:update-center}.
\end{proof}

\paragraph*{Leveraging Bounded Algorithmic \ScatterDim.} To complete the proof of Lemma~\ref{lem:termination}, we fix some cluster and consider the sequence of triples $(x,p,r)$ where $(p,r)$ is a request made for this cluster and where $x$ is the center of the cluster just before the request was made. We establish that this sequence constitutes an algorithmic \algscattering  and use Lemma~\ref{lem:aspect-ratio} to bound the aspect ratio of the radii in this sequence by $\OO(\nicefrac{k}{\epsilon^2})$. We complete the proof via the aspect-ratio lemma~\ref{lem:slice-alg-scatter-dim}.

\begin{proof}[Proof of Lemma~\ref{lem:termination}]
     Fix a cluster index $\iletter\in[k]$. Let  $(p_{\iletter}^{(1)},r_{\iletter}^{(1)}),\dots,(p_{\iletter}^{(\ell)},r_{\iletter}^{(\ell)})$ be the sequence of requests in the order in which they are added to $Q_\iletter$ in line~\ref{alg:augment-cluster}. For any $i\in[\ell]$, let $x_\iletter^{(i)}$ be the center of cluster $\iletter$ at the time just before requesting $(p_{\iletter}^{(i)},r_{\iletter}^{(i)})$. Since $r_{\iletter}^{(i)}=\delta(p_{\iletter}^{(i)},X)/(1+\nicefrac{\epsilon}{3})\leq \delta(p_{\iletter}^{(i)},x_\iletter^{(i)})/(1+\nicefrac{\epsilon}{3})$ and since $x_{\iletter}^{(i)}$ is computed by invoking $\cC$ on $\{(p_{\iletter}^{(1)},r_{\iletter}^{(1)}),\dots,(p_{\iletter}^{(i-1)},r_{\iletter}^{(i-1)})\}$ and error parameter $\nicefrac{\epsilon}{10}$, the sequence $(x_\iletter^{(1)},p_{\iletter}^{(1)},r_{\iletter}^{(1)}),\dots,(x_\iletter^{(1)},p_{\iletter}^{(\ell)},r_{\iletter}^{(\ell)})$    
    is an algorithmic \algscattering[$\nicefrac{\epsilon}{5}$].

    By Lemma~\ref{lem:aspect-ratio}, $r_{\iletter}^{(i)}\in R_\iletter=\left[r_{\min},\frac{10^4kr_{\min}}{\epsilon^2}\right]$ for every $i\in[\ell]$ where $r_{\min}$ denotes the smallest radius in any request for cluster $\iletter$. Applying Lemma~\ref{lem:slice-alg-scatter-dim} to the interval $R_\iletter$, the length of the sequence is $\OO((\log \nicefrac{k}{\epsilon})\lambda(\nicefrac{\epsilon}{10})/\epsilon)$. Since our algorithm adds in each iteration one request to some cluster constraint, the overall number of iterations is $\OO(k(\log\nicefrac{k}{\epsilon})\lambda(\nicefrac{\epsilon}{10})/\epsilon)$.
\end{proof}

\subsubsection{Bounding the Success Probability}\label{sec:bounding-prob}
The proof of Lemma~\ref{lem:success-prob} consists of two key steps: First, we argue that the algorithm terminates with success (that is, without failure) if the random choices made by the algorithm are ``consistent'' (to be defined more precisely below) with some hypothetical optimum solution. Second, we argue that consistency is maintained with sufficiently high probability in each iteration. Together with our upper bound on the number of iterations from Lemma~\ref{lem:termination}, this completes the proof of the main result,  Theorem~\ref{thm:main}.

\paragraph*{Consistency.} Informally speaking, we mean by consistency that  (i) the points in the requests are assigned to the correct (optimal) cluster and (ii) the radius $r$ in any request $(p,r)$ is justified, that is, not smaller than the distance of $p$ to its optimal center.

\begin{definition}\label{def:consistency}
    Consider a fixed hypothetical optimum solution $O=(o_1,\dots,o_k)$. We say that the current state of execution (specified by $(X,Q_1,\dots,Q_k)$) of Algorithm~\ref{alg:framework-knc} is \emph{consistent with $O$} if for any request $(p,r)\in Q_\iletter$, $\iletter\in[k]$, we have that $\delta(p,o_\kappa)\leq r$.
\end{definition}

If the current state is consistent with the optimum solution $O$, then $O$ certifies existence of solution to the cluster constraints $(Q_1,\dots,Q_k)$ currently imposed. Therefore, the following observation is straightforward.
\begin{observation}\label{obs:consistency-non-failure}
    If the state of the algorithm is consistent with $O$ before executing line~\ref{alg:update-center} in any iteration, then the algorithm does not fail during this iteration.
\end{observation}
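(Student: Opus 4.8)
The plan is to exhibit a single center in $F$ that satisfies, \emph{exactly} (in the sense of the \pc problem), every distance constraint currently present in the cluster constraint $Q_\iletter$ on which line~\ref{alg:update-center} is about to be invoked, and then to appeal to property~(ii) of an approximate \pc algorithm (non-failure whenever an exact solution exists). First I would note that ``before executing line~\ref{alg:update-center}'' the freshly created request $(p,\delta(p,X)/(1+\nicefrac{\epsilon}{3}))$ has already been inserted into $Q_\iletter$ in line~\ref{alg:augment-cluster}, so the consistency hypothesis of the observation applies to this \emph{updated} set $Q_\iletter$.

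Next I would unpack Definition~\ref{def:consistency}: consistency of the current state with the fixed hypothetical optimum $O=(o_1,\dots,o_k)$ means precisely that $\delta(p',o_\iletter)\leq r'$ for every $(p',r')\in Q_\iletter$. Hence the center $o_\iletter\in F$ satisfies every distance constraint in $Q_\iletter$ exactly. Since $\cC$ is the \pc algorithm furnished by Lemma~\ref{lem:slice-alg-scatter-dim}, which is in particular an approximate \pc algorithm, property~(ii) of such algorithms guarantees that $\cC(Q_\iletter,\nicefrac{\epsilon}{10})$ returns a center rather than failing whenever some center satisfies all constraints exactly; applying this with the witness $o_\iletter$ shows that no failure occurs in line~\ref{alg:update-center}. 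As the \textbf{fail} branch occurs only in line~\ref{alg:update-center}, the iteration completes successfully.

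The only point requiring a moment's care is that the concrete algorithm $\cC$ from Lemma~\ref{lem:slice-alg-scatter-dim} internally rounds each radius \emph{up} to a nearby power of $1+\Theta(\epsilon)$ before forwarding the instance to the base algorithm $\cA_{\cM}$; since rounding can only enlarge radii, $o_\iletter$ still satisfies the rounded constraints exactly, so property~(ii) of $\cA_{\cM}$ still applies and the conclusion is unaffected. I do not anticipate any genuine obstacle here: the entire content of the observation lies in combining the definition of consistency with the exact-feasibility guarantee of the \pc algorithm.
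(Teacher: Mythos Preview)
Your proposal is correct and matches the paper's own (one-line) justification: consistency with $O$ means $o_\iletter$ exactly satisfies every request in $Q_\iletter$, so property~(ii) of the approximate \pc algorithm guarantees non-failure. The paper simply states this as ``straightforward'' without the extra care you take regarding the rounding inside Lemma~\ref{lem:slice-alg-scatter-dim}, but that added remark is harmless and indeed correct.
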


\paragraph*{Probability of Maintaining Consistency.} If the state of execution is consistent with $O$ at the beginning of some iteration, then it remains consistent under the following two conditions. First, the point $p$ sampled in this iteration is (randomly) assigned to the correct cluster. Second, the distance of $p$ to the current candidate solution is sufficiently larger than its distance to $O$, thereby justifying the request made in line~\ref{alg:augment-cluster}. This second condition motivates the following definition.

\begin{definition}\label{def:witness}
    Given a candidate solution $X$ with $f(\vcd(P,X))>(1+\epsilon)\opt$, a point $p\in P$ is called an \emph{$\epsilon$-witness} if $\delta(p,X)>(1+\epsilon)\delta(p,O)$.
\end{definition}

The following lemma implies that the request made in any iteration for the sampled point is justified with probability $\Omega(\epsilon)$. It is a key part of our analysis as it links the specific way of (i) computing the initial upper bounds and (ii) sampling a witness based on these upper bounds. It is ultimately this interplay that allows us to bound the aspect ratio of the radii in the requests for a particular cluster and therefore the overall number of requests per cluster in terms of $k$ and $\epsilon$.
\begin{lemma}\label{lem:sample-witness}Consider a fixed iteration of the while loop of Algorithm~\ref{alg:framework-knc} and let $X$ be the candidate solution at the beginning of this iteration. The point sampled in line~\ref{alg:sample-witness} is then an $\nicefrac{\epsilon}{3}$-witness for~$X$ with probability $\Omega(\epsilon)$. In particular, the set $A$ computed in line ~\ref{alg:admissable-set} is not empty.
\end{lemma}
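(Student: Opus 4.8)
My plan is to decompose the objective of the current (bad) solution $X$ according to the subgradient weights $w(p)$ and isolate the contribution coming from $\nicefrac{\epsilon}{3}$-witnesses, then bound the contribution of the ``non-admissible'' and ``non-witness'' points separately. Write $\vcx=\vcd(P,X)$ and let $\vc{o}=\vcd(P,O)$ be the distance vector of a fixed hypothetical optimum $O$. Since $\vcw$ is an $\nicefrac{\epsilon}{10}$-approximate subgradient, we have on one hand $\sum_{p}w(p)x(p)=\vcw^{\intercal}\vcx\geq f(\vcx)/(1+\nicefrac{\epsilon}{10})>(1+\epsilon)\opt/(1+\nicefrac{\epsilon}{10})\geq (1+\nicefrac{2\epsilon}{3})\opt$, say, and on the other hand $\vcw^{\intercal}\vc{o}\leq f(\vc{o})=\opt$. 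So the total ``weighted distance mass'' of $X$ strictly exceeds that of $O$ by an additive $\Omega(\epsilon)\opt$.

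\textbf{Step 1: the witness mass is $\Omega(\epsilon)\opt$.} Partition $P$ into witnesses $W=\{p:\delta(p,X)>(1+\nicefrac{\epsilon}{3})\delta(p,O)\}$ and non-witnesses. For a non-witness $p$ we have $w(p)x(p)\leq (1+\nicefrac{\epsilon}{3})w(p)o(p)$, so $\sum_{p\notin W}w(p)x(p)\leq (1+\nicefrac{\epsilon}{3})\vcw^{\intercal}\vc{o}\leq (1+\nicefrac{\epsilon}{3})\opt$. Subtracting from $\vcw^{\intercal}\vcx>(1+\nicefrac{2\epsilon}{3})\opt$ gives $\sum_{p\in W}w(p)x(p)\geq (\nicefrac{\epsilon}{3})\opt=\Omega(\epsilon)\cdot\opt$. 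Also $\vcw^{\intercal}\vcx\leq f(\vcx)\leq(1+\epsilon)\opt+$ (whatever the stopping analysis allows) $=\OO(\opt)$ once $\opt$ is the guessed value, so the witness mass is an $\Omega(\epsilon)$ fraction of the total mass $\sum_p w(p)x(p)$.

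\textbf{Step 2: witnesses have large distance relative to $u(p)$, so they lie in $A$.} This is the crux and the step I expect to be the main obstacle. I want to show $\sum_{p\in W\setminus A}w(p)x(p)$ is only a small (say $\leq \nicefrac{\epsilon}{6}\cdot\opt$) fraction of $\opt$, where $A$ is as in line~\ref{alg:admissable-set}. For $p\in W\setminus A$ we have $\delta(p,X)<\nicefrac{\epsilon u(p)}{1000k}$ and, being a witness with $X$ within $4u(p)$ by Lemma~\ref{lem:anchors-satisfied}, $\delta(p,O)<\delta(p,X)<\nicefrac{\epsilon u(p)}{1000k}$; these points are all clustered extremely close to some center of $X$. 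The plan is to group the points of $W\setminus A$ by which center $x_\iletter\in X$ is nearest, and to argue that within each group the weighted mass is controlled. Concretely: if a ball of small radius $\rho$ around $x_\iletter$ contained points of total $\vcw$-weight $\geq \opt/\rho$, then by the definition of $u(\cdot)$ (line~\ref{step:compute-initial-upper-bounds}), combined with Observation~\ref{obs:norm-max-subgradients} which says $\vcw^{\intercal}\vc{1}_{p,\cdot}\leq f(\vc{1}_{p,\cdot})$ for any $p$ in that ball, we would get $u(p)\leq \OO(\rho)$ for those points $p$ — contradicting $\rho<\nicefrac{\epsilon u(p)}{1000k}$ once we choose the scales correctly. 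This forces each $x_\iletter$'s ball of radius $\rho$ to carry $\vcw$-weight $<\opt/\rho$, and then $\sum_{p\in W\setminus A,\ p\text{ near }x_\iletter} w(p)x(p)<\opt/\rho\cdot\rho=\opt$ is too crude; I will need to do a dyadic sum over the radius scales $\rho\in\{\nicefrac{\epsilon u}{1000k}\cdot 2^{-j}\}$ and use the $1/k$ factor so that, summed over the $\leq k$ centers, the total is $\OO(\epsilon)\opt/\text{(large constant)}$. Getting the constants to line up (the $1000$, the $k$, the geometric-series factor) is the delicate bookkeeping here.

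\textbf{Step 3: assembling the probabilities.} From Steps 1--2, $\sum_{p\in W\cap A}w(p)\delta(p,X)\geq \Omega(\epsilon)\opt$ and this is an $\Omega(\epsilon)$ fraction of $\sum_{b\in A}w(b)\delta(b,X)\leq \OO(\opt)$; in particular $A\neq\emptyset$, and the point sampled in line~\ref{alg:sample-witness} with probability proportional to $w(p)\delta(p,X)$ lands in $W$ (equivalently is an $\nicefrac{\epsilon}{3}$-witness) with probability $\Omega(\epsilon)$. That the denominator $\sum_{b\in A}w(b)\delta(b,X)$ is $\OO(\opt)$ follows since it is at most $\vcw^{\intercal}\vcx\leq f(\vcx)$, which is $\OO(\opt)$ because, although the while-loop condition only guarantees $f(\vcx)>(1+\epsilon)\opt$, the preprocessing plus Lemma~\ref{lem:anchors-satisfied} bound $\delta(p,X)\leq 4u(p)$ pointwise and hence $f(\vcx)\leq \OO(f((u(p))_p))\leq \OO(\opt)$ by monotonicity of $f$ and the defining property of $u(\cdot)$. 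This last $\OO(\opt)$ upper bound on $f(\vcx)$ throughout the loop is a fact I may also need to state as a small auxiliary claim before this lemma, or inline it here.
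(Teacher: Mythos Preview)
Your overall approach---bounding the contribution of $W\setminus A$ and subtracting---differs from the paper's and can be made to work, but the execution has two genuine gaps. Below, write $C_S=\sum_{p\in S}w(p)\delta(p,X)$ for $S\subseteq P$.

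\textbf{The bound $f(\vcx)=\OO(\opt)$ is false.} You invoke it in Step~1 (to convert $C_W\geq(\nicefrac{\epsilon}{3})\opt$ into $C_W=\Omega(\epsilon)\,C_P$) and again in Step~3 (to bound the sampling denominator $C_A$). Lemma~\ref{lem:anchors-satisfied} gives $\vcx\leq 4\vc{u}$, but $f(\vc{u})=\OO(\opt)$ does \emph{not} follow from the defining property of $u$: for \km with $k$ far-apart equal-size groups one checks $f(\vc{u})=\Theta(k\,\opt)$, and nothing rules out worse. Fortunately you do not need this bound. From $C_{P\setminus W}\leq (1+\nicefrac{\epsilon}{3})\opt$ and $C_P\geq (1+\nicefrac{2\epsilon}{3})\opt$ you get directly $C_W\geq C_P\bigl(1-\tfrac{1+\epsilon/3}{1+2\epsilon/3}\bigr)=\Omega(\epsilon)\,C_P$; and since $C_A\leq C_P$ and $\opt\leq C_P$, it suffices in Step~2 to show $C_{W\setminus A}\leq (\text{small})\cdot\epsilon\,\opt\leq (\text{small})\cdot\epsilon\,C_P$, after which $C_{W\cap A}\geq \Omega(\epsilon)\,C_P\geq\Omega(\epsilon)\,C_A$. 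The paper works with ratios to $C_P$ throughout and never bounds $C_P$ by~$\OO(\opt)$.

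\textbf{The dyadic sum in Step~2 does not telescope.} Your contradiction yields only $w(\ball(x_\kappa,\rho))<\opt/\rho$ at each scale, so the shell contribution is $\OO(\opt)$ \emph{per scale} with unboundedly many scales; the $\nicefrac{1}{k}$ in the definition of $A$ does not rescue this. The fix is a single-scale argument using the \emph{minimality} of $u$: let $S_\kappa=\{p\in W\setminus A:x_\kappa\text{ is nearest}\}$ and $\rho^\star=\max_{p\in S_\kappa}\delta(p,x_\kappa)$, attained at $p^\star$. Then $\alpha:=\nicefrac{1000k\rho^\star}{\epsilon}<u(p^\star)$, so by minimality $f(\vc{1}_{p^\star,\alpha/3})<3\opt/\alpha=\nicefrac{3\epsilon\,\opt}{(1000k\rho^\star)}$; since $\ball(p^\star,\nicefrac{\alpha}{3})\supseteq\ball(x_\kappa,\rho^\star)\supseteq S_\kappa$ and $\vcw^{\intercal}\vc{1}_{p^\star,\alpha/3}\leq f(\vc{1}_{p^\star,\alpha/3})$, this gives $w(S_\kappa)<\nicefrac{3\epsilon\,\opt}{(1000k\rho^\star)}$ and hence $C_{S_\kappa}\leq\rho^\star w(S_\kappa)<\nicefrac{3\epsilon\,\opt}{(1000k)}$. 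Summing over $\kappa\in[k]$ yields $C_{W\setminus A}<\nicefrac{3\epsilon\,\opt}{1000}$, as needed.

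For comparison, the paper goes the other way: it partitions $W$ by nearest center, keeps only heavy clusters $\kappa\in H$, splits each $W_\kappa$ by median contribution into a close half $W_\kappa^-$ and a far half $W_\kappa^+$, and shows $W_\kappa^+\subseteq A$ by arguing that for $p\in W_\kappa^+$ the ball $\ball(p,2\delta(p,x_\kappa))$ already contains $W_\kappa^-$ of contribution $\geq\nicefrac{\epsilon C_P}{(200k)}$, forcing $u(p)\leq\nicefrac{1000k\,\delta(p,x_\kappa)}{\epsilon}$. Both routes hinge on the same interplay between $u(\cdot)$ and the subgradient weights; yours bounds the bad part $C_{W\setminus A}$, the paper's exhibits a large good part $C_{W\cap A}$ directly.
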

\begin{proof}
    For any subset $S\subseteq P$ of points let $C_S=\sum_{p\in S}w(p)\delta(p,X)$ denote the \emph{contribution of $S$} towards $\vcw^\intercal\vcd(P,X)=C_P$.
    
    Let $W\subseteq P$ be the subset of $\nicefrac{\epsilon}{3}$-witnesses of $X$. We claim that the contribution $C_W$ is at least~$\epsilon C_P/10$. Suppose for the sake of a contradiction that their contribution is less. Then, using $0<\epsilon<1$,
    \begin{align*}
        \opt & \geq \vc{w}^{\intercal}\vcd(P,O)\\ 
             & \geq \sum_{p\in P\setminus W}w(p)\delta(p,O)\\
             & \geq \frac{1}{1+\nicefrac{\epsilon}{3}}\sum_{p\in P\setminus W}w(p)\delta(p,X)\\
             & \geq \frac{1-\nicefrac{\epsilon}{10}}{1+\nicefrac{\epsilon}{3}}\sum_{p\in P}w(p)\delta(p,X)\\
             & \geq \frac{\vcw^{\intercal}\vcd(P,X)}{1+\nicefrac{\epsilon}{2}}\\
             & \geq \frac{f(\vcd(P,X))}{(1+\nicefrac{\epsilon}{2})(1+\nicefrac{\epsilon}{10})}\\
             & \geq \frac{f(\vcd(P,X))}{1+3\epsilon/4}
    \end{align*}
    which contradicts $f(\vcd(P,X))>(1+\epsilon)\opt$.

    Let $W_1,\dots,W_k$ denote the subsets of the witnesses closest to centers $x_1,\dots,x_k$ in $X$, respectively. 

     Let $H\subseteq [k]$ be the subset of clusters $\iletter\in[k]$ such that $C_{W_\iletter}\geq \epsilon C_P/(100k)$. Fix any cluster $\iletter\in H$.  Let $\{z_1,\dots,z_{\ell}\}$ be the witnesses in $W_\iletter$ in non-decreasing order by the distance $\delta(z_i,x_\iletter)$, $i\in[\ell]$ to their closest cluster center $x_\iletter$. Let $j\in[\ell]$ be the minimum index $j$ such that the contribution of the set $W_\iletter^{-}=\{z_1,\dots,z_j\}$ is at least  $C_{W_\iletter}/2$. This implies that also $C_{W_\iletter^{+}}\geq C_{W_\iletter}/2$ where $W_\iletter^{+}=\{z_j,\dots,z_{\ell}\}$. Hence $C_{W_\iletter^{-}}$ and $C_{W_\iletter^{+}}$ are both at least $\epsilon C_P/(200k)$ because $\iletter\in H$. 

     We claim that $W_\iletter^{+}\subseteq A$ where $A$ is defined as in line~\ref{alg:admissable-set} in Algorithm~\ref{alg:framework-knc}. Towards this, let $p\in W_\iletter^{+}$ be arbitrary. We prove that $u(p)\leq 1000k\delta(p,x_\iletter)/\epsilon$ and hence $p\in A$. To see this, notice that $\ball(p,2\delta(p,x_\iletter))\supseteq\ball(x_\iletter,\delta(p,x_\iletter))\supseteq W_\iletter^{-}$. On the other hand,
     \begin{displaymath}
         \frac{\epsilon\opt}{300k}\leq\frac{\epsilon f(\vcd(P,X))}{300k}\leq \frac{\epsilon C_P}{200k}\\
         \leq \sum_{q\in W_\iletter^{-}}w(q)\delta(q,x_\iletter)\\
         \leq \delta(p,x_\iletter)\sum_{q\in W_\iletter^{-}}w(q)\,.
     \end{displaymath}
     Setting $\alpha=6\delta(p,x_\iletter)$, this implies that
     \begin{equation}\label{eq:heavy-contrib}
         f(\vc{1}_{p,\nicefrac{\alpha}{3}}) \geq \vc{w}^{\intercal}\vc{1}_{p,\nicefrac{\alpha}{3}}\\
                               = \sum_{q\in\ball(p,\nicefrac{\alpha}{3})}w(q)\\
                               \geq \sum_{q\in W_\iletter^{-}}w(q)\\
                               \geq \frac{\epsilon\opt}{200k\delta(p,x_\iletter)}\\
                               = \frac{\epsilon\cdot 3\opt}{100k\alpha}\,.
     \end{equation}

     Hence $u(p)\leq 100k\alpha/\epsilon \leq 1000k\delta(p,x_\iletter)/\epsilon$ as claimed. This completes the proof of the claim that $W_\iletter^{+}\subseteq A$ for any $\iletter\in H$.
     
     As shown above, $\sum_{\iletter\in[k]}C_{W_\iletter}=C_W\geq\epsilon C_P/10$. By definition of $H$, we have $\sum_{\iletter\in [k]\setminus H}C_{W_\iletter}\leq \epsilon C_P/100$. Hence $\sum_{\iletter\in H}C_{W_\iletter}\geq \epsilon C_P/20$. Also, by the arguments above, 
     \begin{displaymath}
     C_{A\cap W}\geq \sum_{\iletter\in H}C_{W_\iletter^+}\geq \sum_{\iletter\in H}\frac{C_{W_\iletter}}{2}\geq \frac{\epsilon C_P}{40}\geq \frac{\epsilon C_A}{40}\,.
     \end{displaymath}
     Since we sample a point $p$ from $A$ with probability proportional to its contribution $C_{\{p\}}$, we sample a witness in each iteration with probability at least $\nicefrac{\epsilon}{40}$.

     Notice that $C_P\geq f(\vcd(P,X))/2>0$. The left-hand side of Equation~\ref{eq:heavy-contrib} must therefore be positive. This implies that $A$ is not empty.
\end{proof}

\paragraph*{Overall Success Probability.} We are now ready to prove Lemma~\ref{lem:success-prob}, thereby completing the proof of the main theorem~\ref{thm:main}. We establish that the state of execution is consistent before entering the while loop in Algorithm~\ref{alg:framework-knc}. The proof is completed by combining the upper bound on the number of iterations (Lemma~\ref{lem:termination}) with the lower bound on the probability of maintaining consistence (Lemma~\ref{lem:sample-witness}).

\begin{proof}[Proof of Lemma~\ref{lem:success-prob}]
    Let $p^{(1)},\dots,p^{(k')}$ be the points marked in line~\ref{step:mark-points} of Algorithm~\ref{alg:framework-knc}. By Lemma~\ref{lem:anchors}, each $\ball(p^{(\iletter)},u(p^{(\iletter)}))$, $\iletter\in [k']$ contains a point from $O$. By construction, these balls are moreover pairwise disjoint. Hence, by relabeling the optimum centers $O=(o_1,\dots,o_k)$, we can assume that $\delta(p^{(\iletter)},o_\iletter)\leq u(p^{(\iletter)})$ for each marked point $p$ where $\iletter\in [k']$ is the index of the cluster. Therefore the state of execution of the algorithm is consistent with $O$ just before the first execution of the while loop (lines~\ref{step:loop}--\ref{step:loop-end}).
    Assume now that the state is consistent with~$O$ at the beginning of an iteration of the while loop. By Lemma~\ref{lem:sample-witness}, we sample an $\nicefrac{\epsilon}{3}$-witness $p$ in this iteration with probability $\Omega(\epsilon)$. In this event, the request $(p,r)$ added has radius $r=\delta(p,X)/(1+\nicefrac{\epsilon}{3})\geq \delta(p,O)$. If additionally the cluster index $\iletter\in[k]$ picked at random is the same as the one in $O$---which happens with probability $\Omega(\nicefrac{1}{k})$---then the state remains consistent with $O$. In this event, the recomputation of the center in line~\ref{alg:update-center} does not fail. By Lemma~\ref{lem:termination}, the algorithm terminates after at most $\OO\left(\frac{k(\log\nicefrac{k}{\epsilon})\lambda(\nicefrac{\epsilon}{10})}{\epsilon}\right)$ many iterations. Since in any iteration it does not fail with probability $\Omega(\nicefrac{\epsilon}{k})$, it succeeds overall with probability $\exp{\left(-\widetilde{\OO}\left(\frac{k\lambda(\nicefrac{\epsilon}{10})}{\epsilon}\right)\right)}$.
\end{proof}

\section{\ScatterDim Bounds}\label{sec:applications}

This section is devoted to bounding the \scatterdim in various classes of metrics, proving Theorems~\ref{thm:doubling-metric}, \ref{thm:treewidth}, and \ref{thm:planar} from the Introduction.

\subsection{Bounded Doubling Dimension}\label{sec:bounded-doubling}
In this section, we show the 
upper bound of the \scatterdim of any metric space of doubling dimension $d$, proving Theorem~\ref{thm:doubling-metric}.

\paragraph{Scatter Dimension and Packing.} 
Given metric $(X,\delta)$, an $\epsilon$-packing of this metric is a subset of points $X' \subseteq X$ such that $\delta(i,j) \geq \epsilon$ for all $i,j \in X'$. This is a standard notion in the theory of metric spaces.  
We first observe the following connection between our \scattering  and $\epsilon$-packing.

\begin{observation}\label{obs:scatttering-sequence}
Let $(x_1,p_1), \ldots, (x_{\ell},p_{\ell})$ be an \scattering  in a metric space $(P,F,\delta)$. Then, the centers $X=\{x_2,\ldots, x_{\ell}\}$ is an $\epsilon$-packing in metric $(P \cup F, \delta)$ and $X$ is contained in a unit ball. 
\end{observation}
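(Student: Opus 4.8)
The plan is to verify the two assertions directly from the two defining conditions of an \scattering (the covering and $\epsilon$-refutation conditions of Definition~\ref{def:scattercomp}), using nothing beyond the triangle inequality. There is no real obstacle here; the only subtlety is keeping careful track of which indices each condition applies to, which is exactly why the claimed packing is $\{x_2,\dots,x_\ell\}$ and excludes $x_1$.

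\emph{Containment in a unit ball.} I would take the (closed) ball $\ball(p_1,1)$ centered at the first point $p_1\in P$. For each $i$ with $2\le i\le \ell$ we have $1<i$, so the covering condition applied with $j=1$ gives $\delta(x_i,p_1)\le 1$, i.e.\ $x_i\in\ball(p_1,1)$. Hence $X=\{x_2,\dots,x_\ell\}\subseteq\ball(p_1,1)$.

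\emph{$\epsilon$-packing.} Fix any two distinct indices with $2\le i<j\le \ell$. Since $i<j$, the covering condition yields $\delta(x_j,p_i)\le 1$, while the $\epsilon$-refutation condition gives $\delta(x_i,p_i)>1+\epsilon$. The triangle inequality then gives
\[
1+\epsilon < \delta(x_i,p_i) \le \delta(x_i,x_j)+\delta(x_j,p_i) \le \delta(x_i,x_j)+1,
\]
so $\delta(x_i,x_j)>\epsilon$. As $i,j$ ranged over all distinct pairs in $\{2,\dots,\ell\}$, the set $X$ is an $\epsilon$-packing (in fact with strict separation).

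The reason $x_1$ must be dropped is that $x_1$ is entirely unconstrained in the definition of an \scattering: it need neither lie in $\ball(p_1,1)$ nor be $\epsilon$-separated from the remaining centers, since the $\epsilon$-refutation condition at index $1$ only constrains $\delta(x_1,p_1)$, a quantity that never enters the argument above. This completes the proof sketch.
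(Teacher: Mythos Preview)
Your proof is correct and follows essentially the same argument as the paper: the unit-ball containment via $\delta(x_i,p_1)\le 1$ for $i\ge 2$, and the $\epsilon$-separation via the triangle inequality combining $\delta(x_i,p_i)>1+\epsilon$ with $\delta(x_j,p_i)\le 1$. One small inaccuracy in your closing aside: $x_1$ \emph{is} in fact $\epsilon$-separated from every $x_j$ with $j\ge 2$ (since $\delta(x_1,p_1)>1+\epsilon$ and $\delta(x_j,p_1)\le 1$), so the only reason to drop $x_1$ is the unit-ball containment, not the packing.
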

\begin{proof}
 
From the definition of \scattering, we have $\delta(p_1,x_j) \leq 1$ for each $j \in \{2, \ldots, \ell\}$.
Also, for any $i,j \in \{1, \ldots, \ell\}$, since $\delta(x_j,p_j) > 1 + \epsilon$ and $\delta(p_j,x_i)\leq 1$ for $j<i$, by the triangle inequality $\delta(x_i,x_j) > \epsilon$.
If $i \neq 1$, then both $\delta(p_{i-1},x_i)$ and $\delta(p_{i-1},x_j) \leq 1$. Therefore, by the triangle inequality, $\delta(x_i,x_j) \leq \delta(x_i,p_{i-1}) + \delta(p_{i-1},x_j) \leq 2$.
\end{proof}

\begin{corollary}
The size of $\epsilon$-packing of a unit ball in metric $M$ is at most the $\epsilon$-scatter dimension minus one. 
\label{cor:packing vs scatter} 
\end{corollary}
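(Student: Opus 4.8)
The statement to prove is that the size of an $\epsilon$-packing of a unit ball in $M$ is at most the $\epsilon$-scatter dimension minus one. So we want an upper bound on packings \emph{in terms of} the scatter dimension; the plan is to take an $\epsilon$-packing of a unit ball and manufacture from it an $\epsilon$-scattering whose length is one more than the size of the packing. Concretely, suppose $Y = \{y_1, \dots, y_m\} \subseteq P \cup F$ is an $\epsilon$-packing contained in $\ball(c,1)$ for some center point $c$. (Here we should be a little careful about which side of the bipartition $P, F$ the points live on; I will address that below.) I would try to build a scattering of length $m+1$ out of this data, which by Definition~\ref{def:scattercomp} gives $m + 1 \le (\epsilon\text{-scatter dimension})$, i.e. $m \le (\epsilon\text{-scatter dimension}) - 1$, as claimed.

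\textbf{Constructing the scattering.} The natural candidate, reading Observation~\ref{obs:scatttering-sequence} in reverse, is roughly: take the center $c$ (the center of the unit ball) to play the role of $p_1$, and take the packing points to serve as the later centers $x_2, \dots, x_{m+1}$; the points $p_2, \dots$ then need to be chosen so as to ``separate'' consecutive packing points. More precisely, the scattering condition demands, for $j < i$, that $\delta(x_i, p_j) \le 1$ while $\delta(x_i, p_i) > 1 + \epsilon$. If I want $x_{i}$ to range over the packing points, I need for each packing point $y$ a witness point $p$ with $\delta(y, p) > 1 + \epsilon$ but with $\delta(y', p) \le 1$ for all packing points $y'$ chosen \emph{earlier}. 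One way to get such witnesses: after scaling, replace ``unit ball'' by a ball of a suitable radius so that each packing point $y_i$ has an associated point at distance just over $1+\epsilon$ from $y_i$ but within $1$ of all the others — but that is not automatic from being an $\epsilon$-packing of a unit ball. The cleaner route is a scaling trick: an $\epsilon$-packing of a unit ball, after rescaling the metric by a factor slightly less than $1$, becomes a packing of a slightly-smaller ball whose points are pairwise at distance $> \epsilon'$ for a slightly smaller $\epsilon'$, and where the ambient ball now sits strictly inside the ``$1+\epsilon$ shell''. Then $p_1 := c$ refutes $x_1$ (any point far from the packing), and for $i \ge 2$ we let $x_i := y_{i-1}$ and pick $p_i$ to be a far-away point refuting $y_{i-1}$; the covering condition $\delta(x_i, p_j) \le 1$ for $j < i$ holds because all the $x_i = y_{i-1}$ and $p_1 = c$ lie in a small common ball.

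\textbf{The main obstacle.} The delicate point is the $P$-versus-$F$ bipartition: a scattering requires $x_i \in F$ and $p_i \in P$, but the packing $Y$ could contain points of either type, and there is no guarantee that suitable ``refuting'' points $p_i \in P$ (far from $y_{i-1}$ but close to all earlier packing points) actually exist in $P$. I expect the honest version of this proof sidesteps the construction above and instead argues the contrapositive directly through Corollary~\ref{cor:packing vs scatter}'s companion Observation~\ref{obs:scatttering-sequence} together with a symmetric observation about the \emph{points} side: Observation~\ref{obs:scatttering-sequence} already shows the center set $\{x_2, \dots, x_\ell\}$ of a scattering is an $\epsilon$-packing inside a unit ball, of size $\ell - 1$; so if the $\epsilon$-scatter dimension is $D$, then every scattering has length $\le D+1$, hence yields packings of size $\le D$ — but that is the wrong direction. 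So the correct reading must be: \emph{given} an $\epsilon$-packing $S$ of a unit ball $\ball(c,1)$ with $c \in F$ and $S \subseteq P$, form the scattering by setting $p_1, \dots$ from $S$ and $x_1, \dots, x_{|S|+1}$ all equal to cleverly perturbed copies of $c$ (using closedness of $\cM$ under scaling to realize the needed distances). I would therefore: (i) invoke scale-invariance of the metric class to normalize; (ii) spell out that the packing points play the role of the $p_i$'s and a single center near $c$ (or $|S|$ near-copies of it) plays the role of the $x_i$'s, checking the covering inequality $\delta(x_i,p_j)\le 1$ from $p_j \in \ball(c,1)$ and the refutation inequality $\delta(x_i,p_i)>1+\epsilon$ from the pairwise $\epsilon$-separation after rescaling; and (iii) conclude $|S| + 1 \le$ ($\epsilon$-scatter dimension). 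The genuinely fiddly part is verifying that the rescaling can be chosen so that both inequalities hold simultaneously with strict inequality in the right place — i.e., pushing the $\epsilon$ of the packing to become the $\epsilon$ of the refutation while keeping everything inside the unit ball — and handling the trivial edge cases ($|S| \le 1$, or $F$ not containing a usable center-copy) separately.
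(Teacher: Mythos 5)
Your plan goes in the wrong direction, and in that direction the statement is in fact false. The paper's own remark immediately after Corollary~\ref{cor:packing vs scatter} gives the counterexample: in an $n$-node star rooted at $r$, the unit ball $\ball(r,1)$ contains an $\epsilon$-packing of size $n-1$, yet every $\epsilon$-scattering has length at most $2$. Despite the (garbled) wording of the corollary, the content the paper actually uses—this is what combines with the standard packing bound for doubling metrics to give Theorem~\ref{thm:doubling-metric}—is the reverse inequality: the $\epsilon$-scatter dimension exceeds the maximum size of an $\epsilon$-packing of a unit ball by at most one; equivalently, a longest scattering \emph{certifies} a large packing, not the other way around. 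With that reading the proof is a one-liner from Observation~\ref{obs:scatttering-sequence}: take an $\epsilon$-scattering $(x_1,p_1),\dots,(x_\ell,p_\ell)$ of maximum length $\ell$ (the scatter dimension); the covering condition puts $x_2,\dots,x_\ell$ inside $\ball(p_1,1)$, the triangle-inequality argument in the observation makes them pairwise more than $\epsilon$ apart (hence distinct), so they form an $\epsilon$-packing of a unit ball of size $\ell-1$. No construction of a scattering out of a packing is needed anywhere in the paper.

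The construction you sketch cannot be repaired, and the obstacles you flag are symptoms of this rather than technicalities. If the packing points serve as the $p_i$'s and every $x_i$ is (a near-copy of) the center $c$ of the unit ball containing the packing, then $\delta(x_i,p_i)\le 1<1+\epsilon$ for all $i$, so the refutation condition $\delta(x_i,p_i)>(1+\epsilon)$ can never hold; uniform rescaling does not help, since it multiplies the covering and refutation distances by the same factor, and pairwise $\epsilon$-separation of a packing gives no lower bound whatsoever on distances to $c$ (in the star, all packing points are at distance exactly $1$ from $r$). Likewise, the missing ``refuting'' witnesses in $P$ that your first construction needs simply may not exist—again the star is the witness. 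So the implication ``large packing of a unit ball $\Rightarrow$ long scattering'' genuinely fails, which is precisely why the paper states it only as the (false-in-general) converse in its remark and proves, and uses, only the direction scattering $\Rightarrow$ packing.
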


It is a well-known fact that $\epsilon$-packing of any metric of doubling dimension $d$ has size at most $\OO((1/\epsilon)^d)$. Combining this with Observation~\ref{obs:scatttering-sequence} yields Theorem~\ref{thm:doubling-metric}.

\paragraph{Remark:} We note that the converse of Corollary~\ref{cor:packing vs scatter} is false even in a very simple graph metric such as a star. In an $n$-node star rooted at $r$, a unit ball $\ball(r,1)$ includes the whole graph. There exists an $\epsilon$-packing of size $(n-1)$ by choosing the non-root nodes. However, any \scattering has length at most $2$.

\subsection{Bounded Treewidth Graphs}\label{sec:graph-metrics}
In this section we show that any graph of treewidth $\tw$ has \scatterdim $\tw^{(1/\epsilon)^{\OO(\tw)}}$. That is, we prove Theorem~\ref{thm:treewidth} for the bounded treewidth graph metric. We later show that the bound for planar graphs can be derived via an embedding result of~\cite{friggstad2019local}. For convenience, we abbreviate $\ball_{\delta_G}(r,\gamma)$ by $\ball_G(r,\gamma)$.

\subsubsection{Treewidth and Spiders}\label{sec:tw}

Our proof relies on the notion of spiders, whose existence can serve  as a ``witness'' to the fact that the treewidth of a graph $G$ is high. 
Given an edge-weighted graph $G$, $X \subseteq V(G)$ and $\gamma \in (0,1)$, 
a {\em $\gamma$-\spider\ on $X$} is a set $S=\ball_G(r,\gamma)$ for some $r \in V(G)$ such that 
there are $|X|$ paths from $S$ to $X$ that are vertex-disjoint except for in $S$. 
We say that a set $S$ is a {\em spider} on $X$ if it is a $\gamma$-spider for some $\gamma$. 
See Figure~\ref{fig:spider} for illustration. 

\begin{figure}[t]
    \centering
    \includegraphics[width=0.5\textwidth]{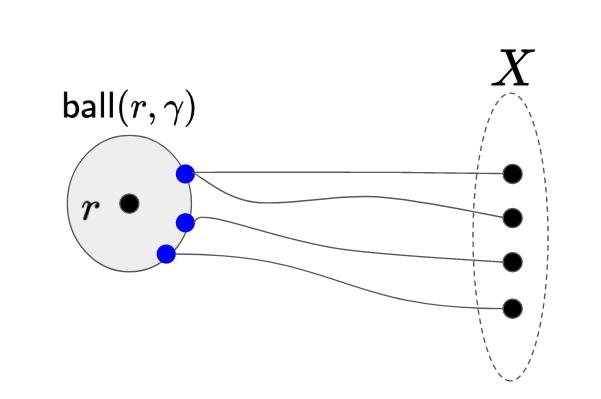}
    \caption{A spider $S= \ball_G(r,\gamma)$ on $X$. Paths connecting $X$ to $r$ are disjoint, except for nodes in $S$.}
    \label{fig:spider}
\end{figure}

Observe that if $S$ is a $\gamma$-spider on X, then for any $X' \subseteq X$, $S$ is also a $\gamma$-spider on $X'$. 
The following lemma is key to our result, roughly showing that the existence of a large number of spiders implies that the treewidth of $G$ is large. 

\begin{lemma}\label{lem:bound-spider}
Let $G$ be a graph, $k$ be an integer and $X \subseteq V(G): |X| > 3k$. If there is a family ${\mathcal S}$ of $k+1$ pairwise disjoint spiders on $X$, then the treewidth of $G$ is larger than $k$. 
\end{lemma}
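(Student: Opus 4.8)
The plan is to prove the contrapositive: from the $k+1$ pairwise disjoint spiders on a set $X$ with $|X|>3k$ I would extract a structure that certifies $\tw(G)\ge k+1$, namely a bramble of order $k+2$ (whence $\tw(G)\ge k+1>k$ by the bramble--treewidth duality of Seymour and Thomas), or equivalently, after a cleanup step, a topological-minor model of a graph of treewidth $k+1$ such as $K_{k+1,k+1}$. First I would record two elementary facts: a shortest-path ball $\ball_G(r,\gamma)$ is connected (every vertex of the ball has a shortest path to $r$ lying inside the ball); and since the $|X|$ legs witnessing a spider $S$ are pairwise vertex-disjoint outside $S$, every vertex outside $S$ lies on at most one of them, so any vertex set of size $s$ disjoint from $S$ blocks at most $s$ legs of $S$. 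I would also normalize each leg by passing to the suffix from its last vertex in its own ball and truncating at the first vertex of $X$, so that a leg meets its ball only at one end and meets $X$ only at its terminal.

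With this setup, the rows of the intended obstruction are the spiders $S_1,\dots,S_{k+1}$ (connected, pairwise disjoint, and each already joined to every terminal by internally disjoint legs), and the columns are $k+1$ terminals of $X$ to be selected; the row--column connections are built from (truncations of) the legs. The crux is \emph{interference}: although the legs of one spider are internally disjoint, a leg of $S_i$ may pass through another ball $S_j$ or cross a leg of $S_j$, and such overlaps destroy the model (or the pairwise-touching property of the corresponding bramble elements). This is exactly what the surplus $|X|>3k$ is for: there are only $k+1$ spiders but more than $3k$ terminals, so by a counting argument -- charging each conflict suffered by a candidate terminal to one of the few ($O(k)$) vertices or balls responsible for it, and exploiting that a spider's own legs spread out because they are internally disjoint -- one can discard the handful of spoiled terminals and still retain $k+1$ mutually compatible columns. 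Rerouting the retained legs to their last exit from the relevant balls then yields the desired obstruction to small treewidth, and a direct appeal to $\tw(K_{k+1,k+1})=k+1$ and minor-monotonicity (or to the order of the constructed bramble) finishes the proof.

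The step I expect to be the main obstacle is precisely this column-selection/counting argument. A priori a single connected set -- another spider, or another leg -- can cross a large number of a given spider's legs, so the naive union bound is useless; the argument has to exploit the internal disjointness of each spider's leg family more carefully, essentially charging interference to the distinct ``exit vertices'' an interfering set is forced to use, and it is this accounting that is calibrated so that roughly three terminals of slack per unit of $k$ suffice. All the remaining ingredients -- connectedness of balls, the leg normalization, minor-monotonicity of treewidth (or validity of the bramble), and the final numerical comparison with $k$ -- are routine and can be dispatched quickly.
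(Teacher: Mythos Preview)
Your plan diverges sharply from the paper's argument and, as you yourself flag, leaves the decisive step unproved. You aim to build a $K_{k+1,k+1}$ minor (or an equivalent bramble) with the spiders as one side and $k+1$ well-chosen terminals as the other, and you correctly identify that the whole difficulty is controlling the interference between legs of different spiders. But the ``counting argument'' you invoke to select $k+1$ clean terminals is never supplied; the hints about charging conflicts to exit vertices do not amount to a proof, and it is not at all clear that the slack $|X|>3k$ suffices for this purpose. A leg of $S_i$ may cross arbitrarily many legs of $S_j$ (there is no disjointness assumption across spiders), so interference is not naturally bounded by $O(k)$, and the rerouting you allude to can create new conflicts. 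In short, the proposal isolates the right obstacle but does not overcome it.

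The paper's proof avoids this entire detour via a one-paragraph balanced-separator argument. Suppose $\tw(G)\le k$; then there is a separator $A$ with $|A|\le k$ splitting $G-A$ into $V_1,V_2$ with $|V_i\cap X|\ge |X|/3>k$. Any spider $S$ disjoint from $A$ lies in one side, say $V_1$; its $>k$ legs to $V_2\cap X$ are vertex-disjoint outside $S$ and must each cross $A$, forcing $|A|>k$, a contradiction. Hence every spider meets $A$, and since the $k+1$ spiders are pairwise disjoint, $|A|\ge k+1$, again a contradiction. The condition $|X|>3k$ is used only to guarantee $>k$ terminals on the far side of the separator---no delicate column selection or rerouting is needed. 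I recommend abandoning the minor/bramble construction and using this separator argument instead.
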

\begin{proof}
Assume otherwise that the treewidth is at most $k$. Then, there exists a balanced separator $A \subseteq V(G)$ such that $|A| = k$ and  a partition of $G-A = V_1 \uplus V_2$ such that $E(V_1,V_2) =\emptyset$, $|V_1 \cap X |, |V_2 \cap X| \geq |X|/3$~\cite{DBLP:books/sp/CyganFKLMPPS15} (Folklore). 

We claim that  each spider $S \in \mathcal{S}$ must contain a vertex in the separator, i.e., $S \cap A \neq \emptyset$. For the sake of contradiction, say there exists $S \in \mathcal{S}$ such that $S \subseteq V_i$ for some $i \in \{1,2\}$. Without loss of generality, let $S \subseteq V_1$. Recall that $S$ is a spider on $X$. Hence there are $|X|$ many internally-vertex disjoint paths from $S$ to distinct vertices of $X$. 
Since $|V_2 \cap X| \geq |X| /3 > k$, there are at least $k+1$ internally vertex-disjoint paths from $S$ ($\subseteq V_1$) to $V_1 \cap X$. Since $A$ is a $(V_1,V_2)$-separator, all these vertex-disjoint paths pass through $A$. Thus, $|A|$ is at least the number of these paths ($k +1$), which is a contradiction. We conclude that each spider $S \in \mathcal{S}$ intersects $A$. 

Since $\mathcal{S}$ is a family of pairwise vertex-disjoint spiders, we conclude that $|A| \geq k+1$, a contradiction.
\end{proof}

\subsubsection{Iteratively Finding Spiders}

Our main result in this section is encapsulated in the following theorem. 

\begin{theorem}
\label{thm:structural thm for tw} 
If there is an \scattering of length at least $(\OO(k/\epsilon))^{(4/\epsilon)^{k+1}}$ in $G$, then graph $G$ contains a family of $k+1$ disjoint spiders on vertex set of size greater than $3k$.  
\end{theorem}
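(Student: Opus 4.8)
The plan is to construct the $k+1$ disjoint spiders one at a time, peeling off one spider per round and retaining, after each round, a sub-scattering that is still long enough (though dramatically shorter) to power the remaining rounds. Two simplifications come first. (a) By the definition of spider, a ``family of $k+1$ pairwise disjoint spiders on $X$'' only requires the \emph{balls} to be pairwise disjoint; their witnessing tentacle systems are allowed to overlap. (b) It suffices to fix one common target set $X$ of size $3k+1$ once and for all, so I would take $X=\{p_1,\dots,p_{3k+1}\}$, the first $3k+1$ points of the scattering, noting that every center $x_i$ with original index $i>3k+1$ covers \emph{all} of $X$ within distance $1$ by the covering condition. With these in place, ``$S$ is a spider on $X$'' means, by Menger's theorem, exactly that no set of at most $3k$ vertices outside $S$ separates $S$ from $X$; and once we have $k+1$ pairwise disjoint balls with this property, Lemma~\ref{lem:bound-spider} gives $\tw(G)>k$ — which is how the present theorem, combined with Lemma~\ref{lem:bound-spider}, yields Theorem~\ref{thm:treewidth}.

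I would phrase the iteration as an induction on the number $t$ of spiders still to be produced: if $G$ has an $\epsilon$-scattering of length at least $(Ck/\epsilon)^{(4/\epsilon)^{t}}$ (for a suitable absolute constant $C$) all of whose centers have original index $>3k+1$, then $G$ has $t$ pairwise disjoint balls, each a spider on $X$, each disjoint from a prescribed family of at most $k$ already-built balls. The base case $t=1$ is the single-spider extraction described below. For the inductive step one (i) extracts one new spider ball $S$ on $X$, which is disjoint from the previously built balls simply because we may always take the hub of $S$ to be a scattering center — any two centers are more than $\epsilon$ apart, so radius-$\gamma$ balls with $\gamma<\epsilon/2$ around distinct centers are automatically disjoint — and (ii) exhibits inside the current scattering a sub-scattering of length at least $(Ck/\epsilon)^{(4/\epsilon)^{t-1}}$, i.e.\ the $(\epsilon/4)$-th power of the current length; applying the induction hypothesis with $t-1$ and the enlarged family of built balls closes the step. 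Iterating this $(\epsilon/4)$-th-power loss over the $k+1$ rounds, starting from $(\OO(k/\epsilon))^{(4/\epsilon)^{k+1}}$ and ending at length $\Omega(k/\epsilon)$ (the minimum material needed to wire one spider to $3k+1$ targets), is precisely what produces the doubly-exponential-in-$k$ bound in the statement.

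The technical heart — and where I expect essentially all the effort to go — is the single-spider extraction together with its accompanying power loss: from a long scattering, produce one ball $S$ with $3k+1$ internally vertex-disjoint paths to the fixed points of $X$, and certify that a sub-scattering of length roughly the $(\epsilon/4)$-th power of the current one survives for the recursion. Crucially, one cannot appeal to the packing viewpoint behind Observation~\ref{obs:scatttering-sequence} and Corollary~\ref{cor:packing vs scatter}: bounded-treewidth graphs are \emph{not} doubling (even a star is not), so the $3k+1$ disjoint tentacles must be built by a direct combinatorial construction. The idea is to process the scattering in stages — intuitively $O(\log(1/\epsilon))$ of them, since the tentacles have length $O(1)$ while ``progress'' toward the escaping point is measured in multiplicative $\epsilon$-increments — where each stage routes a batch of the disjoint tentacles by \emph{combining} the shortest-path trees of many of the scattering's centers to $X$: a single center's tree branches too little near its hub, but across the many available centers one can greedily pick, for each target point of $X$, a shortest path from near a common hub so that the chosen paths stay pairwise disjoint outside the hub ball, after a pigeonhole / Erd\H{o}s--Szekeres-type cleaning of the scattering that discards those centers whose trees would force a collision. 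Each such cleaning costs at worst a root of the remaining length, so $O(\log(1/\epsilon))$ of them cost the claimed $(\epsilon/4)$-th-power overall; the same cleaning is what keeps the new ball and all future hubs disjoint from the already-built balls. Making this precise — pinning down the hub and the stage structure, bounding the centers and points lost per tentacle, and verifying that the surviving subsequence is genuinely an $\epsilon$-scattering — is the delicate part; the remaining ingredients (the covering/$\epsilon$-refutation triangle-inequality estimates, the reduction to disjoint balls, and the final appeal to Lemma~\ref{lem:bound-spider}) are routine.
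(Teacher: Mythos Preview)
Your high-level outline---peel off one spider per round, lose a fixed root of the scattering length each time, iterate $k+1$ rounds---matches the paper's scheme and correctly accounts for the doubly-exponential bound. The gap is that the single-spider extraction, which you yourself flag as ``the technical heart,'' is not actually carried out, and the sketch you give for it does not obviously work. You assert that the hub of each spider can be taken to be a scattering center, so that disjointness of the $k+1$ balls comes for free from the $\epsilon$-packing of centers; but nothing in the scattering axioms forces a small ball around any particular center $x_j$ to admit $3k+1$ internally vertex-disjoint paths to your fixed target set $X=\{p_1,\dots,p_{3k+1}\}$. Your proposed mechanism---``combining the shortest-path trees of many centers'' with a pigeonhole cleaning over $O(\log(1/\epsilon))$ stages---is not developed to the point where one can check it, and in bounded-treewidth graphs (think of a long subdivided path with a few hairs) there is no reason many such trees should jointly branch enough near a common hub to route $3k+1$ disjoint tentacles to prespecified endpoints.

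The paper's extraction lemma (Lemma~\ref{lem: tw iteration}) proceeds quite differently, and the differences matter. First, it does \emph{not} fix the target set in advance: it takes $X$ to be the full canonical packing of centers of the current scattering and lets the construction discover a large subset $X'\subseteq X$ on which the spider sits. Second, the hub is not a scattering center but a high-degree node in an auxiliary tree obtained by chopping a single shortest-path tree from the ambient-ball root into $\epsilon/3$-layers; the $D\ge(\ell/2)^{\epsilon/3}$ subtrees below that node are what furnish the disjoint tentacles, to targets $X''$ chosen one per subtree. Third, disjointness of successive spiders is obtained not by a packing argument on hubs but by \emph{deleting} the spider $S$ from $G$ and showing (via a distance-band argument, Claim~\ref{claim: same distances}) that a large sub-scattering $\sigma'$ with $X(\sigma')\subseteq X''$ remains valid in $G\setminus S$; the next spider is then found in $G\setminus S$ and is automatically disjoint from $S$. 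Your proposal would need either to reproduce this auxiliary-tree argument or to supply a genuinely different construction of the tentacles; as written, that step is missing.
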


Combining the above with Lemma~\ref{lem:bound-spider}, we can deduce that the  length of any \scattering is at most $\tw^{(1/\epsilon)^{\OO(\tw)}}$ as desired. 
We spend the rest of this section proving the theorem. 
Given \scattering $\sigma$, we say that the $\epsilon$-packing $X= X(\sigma)$, given by Observation~\ref{obs:scatttering-sequence}, is a canonical packing of $\sigma$. 

\begin{lemma}
\label{lem: tw iteration} 
Let $\sigma$ be an \scattering of length $\ell$ in $G\subseteq \ball_G(r,1)$ and $X= X(\sigma)$  its canonical $\epsilon$-packing. Then, there exist 
\begin{itemize}
\item a spider $S= \ball_G(r,\epsilon/3)$ on $X' \subseteq X: |X'| \geq c_0 \epsilon  \cdot (\ell/2)^{\epsilon/3}$ for some constant $c_0$ and 

\item a graph $G'$  such that $S \cap V(G') = \emptyset$ and an \scattering $\sigma'$ that is a subsequence of $\sigma$ such that $X(\sigma') = X'$.  
\end{itemize}
\end{lemma}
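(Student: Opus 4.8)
The plan is to read off both the spider and the residual instance from a distance-based hierarchical decomposition of $G$ around $r$, extracting a wide spider via a ``heavy path'' argument. Fix a shortest-path tree $T$ of $G$ rooted at $r$ (it exists since $G\subseteq\ball_G(r,1)$, so all distances from $r$ are finite). Put $q=\lfloor 3/\epsilon\rfloor$ and, for $0\le j\le q$, let $T_j=T[\ball_G(r,j/q)]$; each $T_j$ is a connected subtree (a shortest-path ball around the root), so the components of $T-T_j$ (``level-$j$ branches'') are subtrees, each joined to $T_j$ by a single edge, and every level-$(j{+}1)$ branch refines a unique level-$j$ branch. This gives a rooted hierarchy $\mathcal B$ of depth $\le q$: nodes are the branches at all levels, the root is $G$ itself, and to node $B$ we attach $X_B=X\cap V(B)$. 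Since $X=X(\sigma)$ is an $\epsilon$-packing and any two vertices of a level-$j$ branch are within distance $2(1-j/q)$ of each other, a level-$q$ branch contains at most one vertex of $X$; hence along any root-to-leaf path of $\mathcal B$ the attached $X$-set shrinks to size $\le 1$ within $\le q$ steps.

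Next, descend from the root of $\mathcal B$, always to the child whose attached $X$-set is largest. After $\le q$ steps the $X$-set has size $\le 1$, while it started at $|X|=\ell-1\ge\ell/2$; since at each step the size drops by at most a factor equal to the number of children with nonempty $X$-set (modulo a pigeonhole over the $\le q$ levels to charge the vertices that get absorbed into some $T_j$ rather than pushed to a child, which is where the $c_0\epsilon$ factor is spent), some node $B^{*}$ on this path has at least $(\ell/2)^{1/q}\ge(\ell/2)^{\epsilon/3}$ children with nonempty $X$-set. Let $r^{*}$ be the $B^{*}$-side endpoint of the edge joining $B^{*}$ to the $T_j$ above it, set $S=\ball_G(r^{*},\epsilon/3)$ (the letter $r$ in the statement now plays the role of $r^{*}$; by the definition of a \spider\ any vertex may serve as centre), and choose one vertex of $X$ in each of $\ge(\ell/2)^{\epsilon/3}$ distinct children of $B^{*}$; call this set $X'$. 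Since these vertices lie in pairwise different components of $B^{*}-S$, the tree paths from $S$ to them are vertex-disjoint outside $S$, so $S$ is a \spider\ on $X'$ of the required size.

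Finally, let $\sigma'$ be the subsequence of $\sigma$ keeping exactly the pairs whose centre lies in $X'$ (together with the initial pair needed so that $X(\sigma')=X'$); it is automatically an $\epsilon$-scattering in $G$, and $V(G'):=V(G)\setminus S$ is disjoint from $S$. In $G':=G[V(G')]$ the refutation inequalities of $\sigma'$ survive because deleting vertices only increases distances. For the covering inequalities one uses that a geodesic of length $\le 1$ meeting $\ball_G(r^{*},\epsilon/3)$ must have endpoints $u,v$ with $\delta(u,r^{*})+\delta(v,r^{*})\le 1+2\epsilon/3$; arranging --- by the freedom in choosing the level $j$, the local root $r^{*}$, and the discarded shells inside the hierarchy --- that every relevant centre of $X'$ and every relevant point of $\sigma'$ is at distance $>1/2+\epsilon/3$ from $r^{*}$ forces every covering geodesic of $\sigma'$ to avoid $S$, hence to lie in $G'$. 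Thus $\sigma'$ is an $\epsilon$-scattering in $G'$ with $X(\sigma')=X'$, completing the construction.

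\textbf{Main obstacle.} The delicate point is exactly this last coordination: a wide spider wants $X'$ spread across many branches around $r^{*}$, whereas a valid residual scattering in a graph that excludes the small ball $S=\ball_G(r^{*},\epsilon/3)$ wants $X'$ and the points of $\sigma'$ far from $r^{*}$ so that their short covering paths cannot dip into $S$. Making both hold simultaneously while retaining $\ge c_0\epsilon(\ell/2)^{\epsilon/3}$ branches is what ties the argument to the depth-$q=\Theta(1/\epsilon)$ hierarchy --- hence the $q$-th-root loss $(\ell/2)^{\epsilon/3}$ --- and forces the $c_0\epsilon$ slack coming from a pigeonhole over the $\Theta(1/\epsilon)$ scales.
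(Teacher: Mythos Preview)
Your first step (the hierarchical decomposition and the branching argument to locate a wide node $B^{*}$) is essentially the paper's construction of the auxiliary tree $\widehat T$, just phrased via level sets instead of recursively removed $\epsilon/3$-balls; both give depth $\Theta(1/\epsilon)$ and hence a branching factor $\ge(\ell/2)^{\epsilon/3}$ at some node. That part is fine.

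The second step, however, has a real gap. To show that the covering inequalities of $\sigma'$ survive in $G'=G\setminus S$, you propose to arrange that every centre $x'\in X'$ \emph{and every point $p$ of $\sigma'$} lies at distance $>\tfrac12+\tfrac{\epsilon}{3}$ from $r^{*}$, so that $\delta(x',r^{*})+\delta(p,r^{*})>1+\tfrac{2\epsilon}{3}$ forces any length-$\le 1$ geodesic between them to miss $S$. But the points $p_j$ in a scattering are unconstrained elements of $P$; nothing in your construction, or in the ``freedom in choosing the level $j$, the local root $r^{*}$, and the discarded shells,'' gives you any control over $\delta(p_j,r^{*})$. A point $p_j$ could sit right next to $r^{*}$, and then your sufficient condition simply fails. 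The sentence beginning ``arranging --- by the freedom \ldots'' is where the actual work would have to happen, and it is not done.

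The paper avoids this problem entirely by a different mechanism: after finding the spider $S=\ball_G(s,\epsilon/3)$ on $X''$, it pigeonholes the centres of $X''$ into $\le 3/\epsilon$ shells by their distance to $s$ and keeps the largest shell $X'$; this is exactly where the factor $c_0\epsilon$ is spent. The point is that now any $v\in S$ satisfies $|\delta(v,u)-\delta(v,u')|<\epsilon$ for all $u,u'\in X'$. If a covering geodesic from $p$ to $x'\in X'$ entered $S$ at $v$, one would get $\delta(p,x)\le \delta(p,v)+\delta(v,x)<\delta(p,v)+\delta(v,x')+\epsilon=\delta(p,x')+\epsilon\le 1+\epsilon$ for the centre $x\in X'$ paired with $p$ in $\sigma'$, contradicting the $\epsilon$-refutation $\delta(x,p)>1+\epsilon$. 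Crucially this uses only the \emph{equidistance of the centres} from $S$ together with the refutation property already built into $\sigma$; it imposes no constraint whatsoever on the location of the points $p_j$. That is the idea you are missing.
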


Before proving this lemma, we show how it implies Theorem~\ref{thm:structural thm for tw}. 
Let $G_0 = G$ contain a \scattering  $\sigma_0$ of length at least  $\ell_0 =  (\frac{k}{c_0 \epsilon})^{(4/\epsilon)^{k+1}}$ and $X_0 = X(\sigma_0)$. The lemma allows us to find a spider $S_1$ on $X_1$ of size $c_0 \epsilon \cdot \ell_0^{\epsilon/3} \geq (\frac{k}{c_0 \epsilon})^{(4/\epsilon)^{k}} = \ell_1$ for sufficiently small $\epsilon$. Moreover, we have the graph $G_1$ that is disjoint with $S_1$ and \scattering that is a subsequence $\sigma_1$ of length $\ell_1$. 
Since $(G_1,X_1,\sigma_1)$ satisfies the preconditions of Lemma~\ref{lem: tw iteration}, we can apply it to obtain $(G_2,X_2,\sigma_2)$ and so on. 
More formally, starting from $(G_i, \sigma_i, X_i)$, we apply Lemma~\ref{lem: tw iteration} to obtain $(G_{i+1}, \sigma_{i+1}, X_{i+1})$. We maintain the following invariant: The length of the sequence $\sigma_i$ satisfies $\ell_i = |X_i| \geq (\frac{k}{c_0  \epsilon})^{(4/\epsilon)^{k+1-i}}$. This allows us to find disjoint spiders $S_1, S_2, \ldots, S_{k+1}$ on $X_{k+1}: |X_{k+1}| > 3k$ as desired. 

\subsubsection{Proof of Lemma~\ref{lem: tw iteration}}  

Let $G$ be contained in the unit ball $\ball(r,1)$. 
The proof has two steps. In the first step, we find a spider $S$ on a subset $X''\subseteq X$ of relatively large size. In the second step, we show the graph $G'$ obtained by removing $S$ from $G$ still contains a large subsequence $\sigma'$ of $\sigma$ whose canonical packing is a subset $X'$ of $X''$ that has desired cardinality. 

\paragraph{First step:}
Let $T$ be a shortest path tree from $r$ to $X$ (recall that $|X| = \ell$), so vertices in $X$ appear at the leafs of this tree.
We construct an ``auxiliary'' tree  $\widehat{T}$ on subset $\widehat{V} \subseteq V(T)$ from $T$ inductively as follows. 
Let $B_r = \ball_T(r,\epsilon/3)$. Remove $B_r$ from $T$ to obtain subtrees $T_1,\ldots, T_q$ with roots $r_1,\ldots, r_q$. For each $i \in [q]$, let $X_i \subseteq X$ be the descendants of $r_i$ in $T_i$ that are in $X$. Since vertices in $X$ are at the leaf, we have that $X_i \neq \emptyset$.  
We inductively perform this process on the instances $(T_1,X_q),\ldots, (T_q,X_q)$ to obtain the auxiliary subtrees $\widehat{T}_i$ for $(T_i, X_i)$. 
Now create $\widehat{T}$ by connecting $r$ to $r_1,\ldots, r_q$ in (making them direct children of $r$). See Figure~\ref{fig:construction-aux-trees}. 

\begin{figure}[t]
    \centering
    \includegraphics[width=0.8\textwidth]{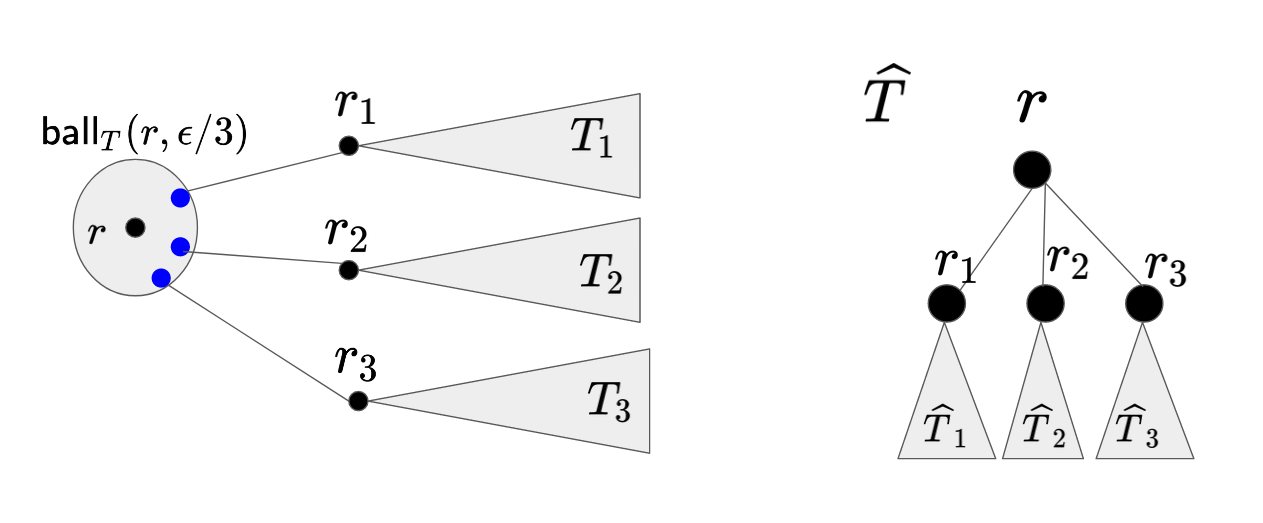}
    \caption{A recursive construction of tree $\widehat{T}$.}
    \label{fig:construction-aux-trees}
\end{figure}

For each $v \in V(\widehat{T})$, denote by $T_v$ the subtree of $T$ rooted at $v$ and $B_v$ the ball $\ball_{T_v}(v,\epsilon/3)$ constructed by the recursive procedure. 
Observe that $\bigcup_{v \in V(\widehat{T})}B_v \supseteq X$ and that the depth of $\widehat{T}$ is at most $(3/\epsilon)$ (since $\delta(r,x) \leq 1$ for all $x \in X$ and each recursion reduces the root-to-leaf distance by $\epsilon/3$.)  

\begin{claim}
There must be a vertex $r' \in V(\widehat{T})$ such that $r'$ has at least $D=(\ell/2)^{\epsilon/3}$ children in $\widehat{T}$.       
\end{claim}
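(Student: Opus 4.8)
The plan is to argue by contradiction: assume that every vertex of $\widehat{T}$ has fewer than $D=(\ell/2)^{\epsilon/3}$ children, and derive that $\widehat{T}$ then has too few vertices to ``reach'' all $\ell$ points of the canonical packing $X$. The first step is to show $|V(\widehat{T})|\ge \ell$. This combines two facts. The first is the covering property already recorded, $\bigcup_{v\in V(\widehat{T})}B_v\supseteq X$, so every point of $X$ lies in some ball $B_v=\ball_{T_v}(v,\epsilon/3)$. The second is that each such ball contains at most one point of $X$: if $x,y\in B_v$, then since $T_v$ is a subtree of the shortest-path tree $T$ its metric dominates $\delta_G$, so $\delta_G(x,y)\le \delta_{T_v}(x,y)\le \delta_{T_v}(x,v)+\delta_{T_v}(v,y)\le 2\epsilon/3<\epsilon$; as $X$ is an $\epsilon$-packing this forces $x=y$. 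Distributing the $\ell$ points of $X$ over the balls $\{B_v\}_{v\in V(\widehat{T})}$ with at most one point per ball gives $|V(\widehat{T})|\ge\ell$.

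The second step bounds $|V(\widehat{T})|$ from above under the contradiction hypothesis. Since $\widehat{T}$ is a rooted tree of depth at most $3/\epsilon$ in which every vertex has at most $D-1$ children, a geometric-series estimate gives $|V(\widehat{T})|\le \sum_{i=0}^{\lfloor 3/\epsilon\rfloor}(D-1)^i< 2(D-1)^{3/\epsilon}<2D^{3/\epsilon}$. Substituting $D=(\ell/2)^{\epsilon/3}$ yields $2D^{3/\epsilon}=2\cdot(\ell/2)=\ell$, hence $|V(\widehat{T})|<\ell$, contradicting the first step. Therefore some vertex of $\widehat{T}$ has at least $D$ children, which is the claim.

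The main point needing care is the branching-bound bookkeeping: the geometric-series estimate uses $D\ge 2$ (equivalently $\ell$ bounded below by a function of $\epsilon$; when $\ell=O_\epsilon(1)$ the statement concerns a constant-size object and is vacuous), and when $D$ or $3/\epsilon$ is not an integer one should read ``fewer than $D$ children'' as ``at most $\lfloor D\rfloor$ children'' and use $\lfloor D\rfloor^{\lfloor 3/\epsilon\rfloor}\le D^{3/\epsilon}$ in place of $(D-1)^{3/\epsilon}$. Beyond this the argument is routine: the covering property and the depth bound $\le 3/\epsilon$ are already available, and the only place the hypotheses on $X$ and $T$ (being a canonical $\epsilon$-packing, respectively a shortest-path tree so that $\delta_G\le \delta_{T_v}$ on each $T_v$) enter is the ``at most one packing point per ball'' fact. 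I do not expect a genuinely hard step here.
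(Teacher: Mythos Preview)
Your proposal is correct and follows essentially the same argument as the paper: assume every vertex has fewer than $D$ children, use the depth bound $3/\epsilon$ to conclude $|V(\widehat{T})|<2D^{3/\epsilon}=\ell$, and contradict this against the fact that the balls $B_v$ cover $X$ while each $B_v$ meets $X$ in at most one point (since $X$ is an $\epsilon$-packing and $B_v$ has diameter $\le 2\epsilon/3$). Your write-up is in fact more careful than the paper's about the integrality of $D$ and $3/\epsilon$ and about the geometric-series constant, but the structure of the proof is identical.
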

\begin{proof}
Assume that the number of children is less than $D$ for every vertex in $\widehat{T}$.
Then the total number of vertices in $\widehat{T}$ is less than $2D^{3/\epsilon}$. For each such vertex $v \in V(\widehat{T})$, we have $|B_v \cap X| \leq 1$ (since $X$ is an $\epsilon$-packing while the diameter of $B_v$ is at most $2\epsilon/3$). Therefore, $\ell =|X| \leq \bigcup_{v} |B_v \cap X| \leq 2D^{3/\epsilon}$. This would imply that $D \geq (\ell/2)^{\epsilon/3}$.  
\end{proof}

Let $v$ be the node in $\widehat{V}$ closest to the root in $\widehat{T}$ such that there are at least $D$ children (breaking ties arbitrarily).
This means that (in the process of creating $\widehat{T}$) removing $B_v=\ball_{T_v}(v,\epsilon/3)$ gives us at least $D$ subtrees $T_1,\ldots, T_D$ where each such tree contains (arbitrarily chosen) $x_i \in X_i$ as a descendant in $T$. Notice that $S$ is a spider on $X'' = \{x_1, x_2,\ldots, x_D\}$.  

\paragraph{Second step:} 
Let $\sigma''$ be the subsequence of $\sigma$ whose canonical $\epsilon$-packing is $X''$, that is, $X(\sigma'') = X''$. Recall that $|X''| \geq (\ell/2)^{\epsilon/3}$. Denote the spider $S$ by $S = \ball_G(s,\epsilon/3)$. 
In this second step, we show that $G' = G \setminus S$ still contains a long \scattering $\sigma'$ which is a subsequence of $\sigma''$ such that $X'=X(\sigma') \subseteq X''$ has the desirable length. 

By construction, we have that $X'' \subseteq \ball_G(s,1)$. 
We partition vertices in $X''$ into at most $3/\epsilon$ subsets based on the distances to $s$ as follows: For $i =1,\ldots, 3/\epsilon$, let $X''_i = \{x \in X'': \delta_G(s,x) \in (i\epsilon/3,(i+1)\epsilon/3]\}$.  Define $X'$ to be the set $X''_i$ that has maximum cardinality, so $|X'| \geq \frac{\epsilon}{3} \cdot  (\ell/2)^{\epsilon/3}$. Note that for all $u,u' \in X'$, we have $|\delta(s,u) - \delta(s,u')| < \epsilon/3$. 
The following claim asserts that each point in $S$ is roughly of the same distance from every point in $X'$ (see Figure~\ref{fig:decomp} for illustration). 

\begin{figure}
    \centering
    \includegraphics[width=0.5\textwidth]{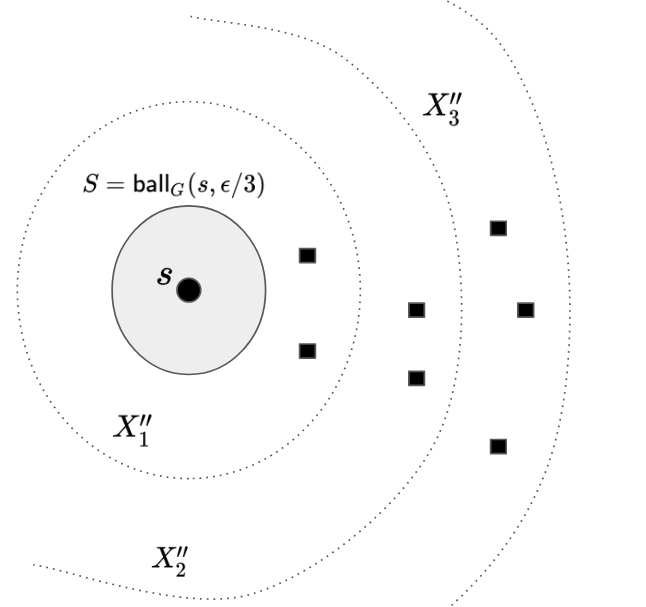}
    \caption{The partition of $X''$ into $\{X''_i\}$ based on their distance from the spider $S$. Rectangular points are the points in $X''$.}
    \label{fig:decomp}
\end{figure}

\begin{claim}
Let $v \in S$ and $u,u' \in X'$. Then $|\delta(v,u) - \delta(v,u')| < \epsilon$. 
\end{claim}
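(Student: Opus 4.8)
The plan is a direct application of the triangle inequality together with the two defining properties of the objects involved. First I would record that every vertex $v$ of the spider $S = \ball_G(s,\epsilon/3)$ satisfies $\delta(v,s)\le \epsilon/3$ by definition of $S$, and that the set $X'$ was chosen to be one of the distance classes $X''_i$, so any two of its members $u,u'$ satisfy $|\delta(s,u)-\delta(s,u')| < \epsilon/3$ (both distances $\delta(s,u)$ and $\delta(s,u')$ lie in a common half-open interval of length $\epsilon/3$).

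Next I would sandwich $\delta(v,u)$ around $\delta(s,u)$: by the triangle inequality $\delta(s,u)-\delta(v,s)\le \delta(v,u)\le \delta(s,u)+\delta(v,s)$, and since $\delta(v,s)\le\epsilon/3$ this gives $|\delta(v,u)-\delta(s,u)|\le \epsilon/3$, and symmetrically $|\delta(v,u')-\delta(s,u')|\le\epsilon/3$. Combining these two estimates with the band inequality from the previous paragraph, $|\delta(v,u)-\delta(v,u')| \le |\delta(v,u)-\delta(s,u)| + |\delta(s,u)-\delta(s,u')| + |\delta(s,u')-\delta(v,u')| < \epsilon/3 + \epsilon/3 + \epsilon/3 = \epsilon$, which is exactly the claim.

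There is essentially no obstacle here; the only point requiring a little care is the bookkeeping of the slacks: one contribution of $\epsilon/3$ for each of the two uses of the spider radius, plus one contribution of $\epsilon/3$ because $u$ and $u'$ lie in a common distance band around $s$. These sum to $\epsilon$, and the inequality is strict because the band inequality $|\delta(s,u)-\delta(s,u')|<\epsilon/3$ is strict (the intervals defining the $X''_i$ are half-open). This claim is then used exactly as one expects: it shows that, after deleting the spider $S$ and rescaling, the subsequence $\sigma'$ of $\sigma$ with canonical packing $X'$ is again an $\epsilon'$-scattering in $G'=G\setminus S$ for a suitably adjusted $\epsilon'$, completing the second step of Lemma~\ref{lem: tw iteration}.
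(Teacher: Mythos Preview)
Your proof is correct and is essentially the same triangle-inequality argument as the paper's: both combine the spider radius bound $\delta(v,s)\le\epsilon/3$ with the band bound $|\delta(s,u)-\delta(s,u')|<\epsilon/3$ to get three contributions of $\epsilon/3$. One small inaccuracy in your closing commentary: in the paper the resulting sequence $\sigma'$ remains an $\epsilon$-scattering in $G'$ (no adjusted $\epsilon'$ is needed).
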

\begin{proof}
Assume w.l.o.g. that $\delta(v,u) \leq \delta(v,u')$. By the triangle inequality, $\delta(v,u') \leq \delta(v,s) + \delta(s,u') \leq \delta(v,s) + \delta(s,u) + \epsilon/3 \leq \delta(s,u) + 2\epsilon/3$. Applying triangle inequality again, we get $\delta(s,u) \leq \delta(s,v) + \delta(v,u)$, and hence the desired bound.      
\end{proof}

The following claim will finish the proof.

\begin{claim}
\label{claim: same distances} 
Let $\sigma'$ be the subsequence of $\sigma''$ whose canonical packing is $X'$. Then $\sigma'$ is a valid \scattering in $G'= G \setminus S$. 
\end{claim}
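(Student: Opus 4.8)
The plan is to check the three defining properties of an \scattering for $\sigma'$ directly with respect to the deleted-graph metric $\delta_{G'}$, where $G'=G\setminus S$ and $S=\ball_G(s,\epsilon/3)$ is the spider produced above. Since $\sigma'$ is a subsequence of $\sigma$, it is already an \scattering with respect to $\delta_G$, so only two things can go wrong when passing to $G'$: (a) some center or point of $\sigma'$ might lie inside $S$, or (b) a covering inequality $\delta(x_i,p_j)\le 1$ might break because every shortest $x_i$--$p_j$ path in $G$ runs through $S$. The $\epsilon$-refutation inequalities come for free: deleting vertices never decreases distances, so $\delta_{G'}(x_i,p_i)\ge\delta_G(x_i,p_i)>1+\epsilon$ for every remaining pair.

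For (a) I would argue as follows. Every center appearing in $\sigma'$ lies in $X'$, and by the construction of the second step $X'\subseteq X''_i$ for some $i\ge 1$, so $\delta_G(s,x)>i\epsilon/3\ge\epsilon/3$ and hence $x\notin S$. Every point $q$ appearing in $\sigma'$ is the refuting point of some center $z$ of $\sigma'$, i.e.\ $\delta_G(z,q)>1+\epsilon$ with $z\in X''\subseteq\ball_G(s,1)$; therefore $\delta_G(s,q)>1+\epsilon-1=\epsilon>\epsilon/3$, so $q\notin S$ as well. The one delicate point is the ``head'' pair of $\sigma'$, whose center need not lie in $X'$; the cleanest remedy is to take $\sigma'$ to consist of exactly the pairs $(x,p)$ with $x\in X'$, so that its canonical packing is $X'$ with its first element removed. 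This loses only a factor of $2$ in $|X'|$, which the constant $c_0$ in Lemma~\ref{lem: tw iteration} absorbs, and now every center of $\sigma'$ is in $X'$, so the argument above applies uniformly.

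For (b), the heart of the matter, I would invoke the preceding same-distances claim. Take a covering pair of $\sigma'$: a center $x\in X'$ together with a point $q$ occurring earlier in $\sigma'$, where $q$ is the refuting point of an earlier center $x'\in X'$, so that $\delta_G(x,q)\le 1$ while $\delta_G(x',q)>1+\epsilon$. Suppose some shortest $x$--$q$ path in $G$ passes through a vertex $v\in S$; then $\delta_G(x,q)=\delta_G(x,v)+\delta_G(v,q)$. By the same-distances claim applied to $v\in S$ and $x,x'\in X'$ we have $\delta_G(x',v)<\delta_G(x,v)+\epsilon$, hence
\[\delta_G(x',q)\le\delta_G(x',v)+\delta_G(v,q)<\delta_G(x,v)+\delta_G(v,q)+\epsilon=\delta_G(x,q)+\epsilon\le 1+\epsilon\,,\]
contradicting $\delta_G(x',q)>1+\epsilon$. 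So no shortest $x$--$q$ path in $G$ meets $S$; such a path therefore survives in $G'=G\setminus S$, giving $\delta_{G'}(x,q)\le\delta_G(x,q)\le 1$. This establishes all covering inequalities in $G'$, completing the verification that $\sigma'$ is an \scattering in $G'$.

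I expect (b) to be the only real obstacle, and the right way to overcome it is precisely the observation packaged in the same-distances claim: from the viewpoint of the tiny ball $S$ every point of $X'$ looks identical up to an additive $\epsilon$, so a route from $x\in X'$ to a far point $q$ through $S$ would yield an almost-as-cheap route from $x'$ to $q$ through the same vertex of $S$, which is impossible because $q$ genuinely refutes $x'$. The rest is bookkeeping: handling the head of $\sigma'$ (addressed above) and observing that a shortest path all of whose vertices lie outside $S$ retains all of its edges in $G\setminus S$.
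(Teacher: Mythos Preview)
Your approach is essentially the same as the paper's. The heart of both arguments is your step (b): assuming a shortest $x$--$q$ path in $G$ meets $S$ at some $v$, you use the ``same distances'' claim $|\delta(v,x)-\delta(v,x')|<\epsilon$ to transport the short $x$-to-$q$ route into a short $x'$-to-$q$ route, contradicting $\delta(x',q)>1+\epsilon$. This is exactly the paper's proof (with the roles of $x$ and $x'$ swapped in the naming).

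You are in fact more careful than the paper on two points. First, you explicitly verify in (a) that every center and every refuting point of $\sigma'$ lies outside $S$, which is needed for $\sigma'$ to live in $G'$ at all; the paper is silent on this. Second, you flag the ambiguity about the head pair of $\sigma'$: the canonical packing drops the first center, so a literal reading of ``the subsequence whose canonical packing is $X'$'' does not guarantee that the first center lies in $X'$, yet both the paper's proof and yours need \emph{both} centers of a covering pair to be in $X'$ in order to invoke the same-distances claim. The paper simply writes ``let $(x,p)$, $(x',p')$ be two pairs in $\sigma'$'' and proceeds as if both centers are in $X'$, without addressing this. Your remedy---taking $\sigma'$ to be the pairs with center in $X'$ and absorbing the lost element into the constant $c_0$---is a clean and legitimate fix; it changes the statement cosmetically but not the content of Lemma~\ref{lem: tw iteration}.
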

\begin{proof}
We abbreviate $\delta_{G'}$ simply by $\delta'$. 
Let $(x,p)$ and $(x',p')$ be two pairs in $\sigma'$ such that $(x,p)$ appears before $(x',p')$. Since $\sigma$ is scattering, we have that $\delta(x',p) \leq 1$ while $\delta(x,p), \delta(x',p'), \delta(x,p') > (1+\epsilon)$. 

Notice that the refutation properties hold for these pairs after removing $S$, i.e., $\delta'(x,p), \delta'(x',p'),\allowbreak \delta'(x,p') > (1+\epsilon)$ (the distances cannot decrease after removing vertices from a graph).
It suffices then to show that $\delta'(p,x') \leq 1$. To this end, we argue that any shortest path from $p$ to $x'$ in $G$ cannot intersect with the ball $S$. 
Assume otherwise that there exists a shortest path $Q$ from $p$ to $x'$ in $G$ that intersects with $S$ at some vertex $v \in S \cap Q$. Notice that $\delta(p,x') = \delta(p,v) + \delta(v,x')$. We will reach a contradiction by showing that $\delta(p,x) \leq (1+\epsilon)$. 
Since $\delta(p,x) \leq \delta(p,v) + \delta(v,x)$, by Claim~\ref{claim: same distances}, this is at most $\delta(p,v) + \delta(v,x') + \epsilon =\delta(p,x') + \epsilon$, which would imply that $\delta(p,x) \leq (1+\epsilon)$, contradicting to the refutation property.  
\end{proof}

\subsection{Bounding \ScatterDim via Low-Treewidth Embedding}
\label{sec:planar}

In this section, we show a (simple) connection between bounding \scatterdim and an active research area on embedding with additive distortion~\cite{fox2019embedding,filtser2022low,cohen2020light}. This connection allows us to upper bound the \scatterdim of planar graphs.

In particular, we say that (weighted) graph class ${\mathcal G}$ admits a \textit{$t$-low treewidth-diameter embedding} for function $t: {\mathbb N} \rightarrow {\mathbb N}$ if there exists a \textit{deterministic} algorithm that takes $G$ and produces weighted graph $H$ of treewidth at most $t(\eta)$ and an embedding $\phi: V(G) \rightarrow V(H)$ such that: 
\[ \delta_G(u,v) \leq \delta_H(\phi(u), \phi(v)) \leq  \delta_G(u,v) + \eta D\]
where $D$ is the diameter of $G$. 

\begin{theorem}
Let $\lambda_{\tw}(\epsilon)$ denote the  the \scatterdim of graphs of treewidth $\tw$ (from the previous section, this bound is at most doubly exponential in $\tw$). 
If graph class ${\mathcal G}$ admits a $t$-low treewidth-diameter embedding, then every metric in ${\mathcal G}$ has \scatterdim at most $\lambda_{t(\epsilon/10)}(\epsilon/3)$.      
\end{theorem}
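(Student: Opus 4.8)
The plan is to turn an $\epsilon$-scattering in a graph $G\in\mathcal{G}$ into an $(\epsilon/3)$-scattering in a graph of treewidth at most $t(\epsilon/10)$, so that the bound $\lambda_{t(\epsilon/10)}(\epsilon/3)$ applies. We cannot apply the embedding to $G$ directly, because its additive error is $\eta\cdot\operatorname{diam}(G)$ and $\operatorname{diam}(G)$ may be arbitrarily large relative to the unit scale of the scattering; hence the first (and only delicate) step is to confine the scattering to a subgraph of constant diameter.

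\textbf{Localizing the scattering.} Let $\sigma=(x_1,p_1),\dots,(x_\ell,p_\ell)$ be an $\epsilon$-scattering in $G$; we may assume $\ell\ge 3$. Drop the first and the last pair and pass to the subsequence $\sigma'=(x_2,p_2),\dots,(x_{\ell-1},p_{\ell-1})$, which is again an $\epsilon$-scattering, of length $\ell-2$. From the covering condition, $\delta_G(x_i,p_1)\le 1$ for every $i\ge 2$ and $\delta_G(x_\ell,p_j)\le 1$ for every $j\le\ell-1$; hence $\delta_G(p_1,p_j)\le 2$, and every vertex occurring in $\sigma'$ lies in $B:=\ball_{\delta_G}(p_1,2)$. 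Let $G'$ be the subgraph of $G$ induced by $B$. Since on a shortest $u$--$p_1$ path with $u\in B$ every vertex stays within distance $2$ of $p_1$, the graph $G'$ is connected and $\operatorname{diam}(G')\le 4$. Crucially, for a covering pair $(x_i,p_j)$ of $\sigma'$ (so $j<i$) a shortest $x_i$--$p_j$ path in $G$ has length $\le 1$ and therefore stays inside $\ball_{\delta_G}(x_i,1)\subseteq B$, which gives $\delta_{G'}(x_i,p_j)=\delta_G(x_i,p_j)\le 1$; and for a refutation pair $\delta_{G'}(x_i,p_i)\ge\delta_G(x_i,p_i)>1+\epsilon$. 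Thus $\sigma'$ is an $\epsilon$-scattering in the metric of $G'$.

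\textbf{Embedding and rescaling.} The graph $G'$ is an induced subgraph of $G$, hence lies in $\mathcal{G}$ whenever $\mathcal{G}$ is closed under taking (induced) subgraphs, which holds for all classes of interest (e.g.\ planar, or any minor-closed class). Apply the $t$-low treewidth-diameter embedding to $G'$ with parameter $\eta:=\epsilon/10$ to obtain a weighted graph $H$ with $\operatorname{tw}(H)\le t(\epsilon/10)$ and $\phi\colon V(G')\to V(H)$ satisfying $\delta_{G'}(u,v)\le\delta_H(\phi(u),\phi(v))\le\delta_{G'}(u,v)+\eta\operatorname{diam}(G')\le\delta_{G'}(u,v)+4\eta$. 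Consider the sequence $(\phi(x_2),\phi(p_2)),\dots,(\phi(x_{\ell-1}),\phi(p_{\ell-1}))$ and rescale all distances of $H$ by the factor $1/(1+4\eta)$. Covering: each $\delta_{G'}(x_i,p_j)\le 1$ maps to a distance $\le (1+4\eta)/(1+4\eta)=1$. Refutation: each $\delta_{G'}(x_i,p_i)>1+\epsilon$ maps to a distance $>(1+\epsilon)/(1+4\eta)$, and a direct check gives $(1+\epsilon)/(1+4\eta)\ge 1+\epsilon/3$ for $\eta=\epsilon/10$ and $\epsilon\in(0,1)$ (it is equivalent to $\tfrac{2\epsilon}{5}(1+\tfrac{\epsilon}{3})\le\tfrac{2\epsilon}{3}$, i.e.\ to $\epsilon\le 2$). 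Also $\phi(x_i)\ne\phi(p_i)$ since their $H$-distance is positive. Hence the images form an $(\epsilon/3)$-scattering in a graph of treewidth $\le t(\epsilon/10)$, so its length $\ell-2$ is at most $\lambda_{t(\epsilon/10)}(\epsilon/3)$. This yields $\ell\le\lambda_{t(\epsilon/10)}(\epsilon/3)+2$; the additive $2$ merely accounts for the two discarded endpoints and can be absorbed (only boundedness of the $\epsilon$-scatter dimension is needed downstream).

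\textbf{Main obstacle.} The one point requiring care is that passing to the induced subgraph $G'$ must not blow up the covering distances beyond $1$ --- otherwise the rescaled image sequence would fail the covering condition in $H$. This is exactly why we take $B$ of radius $2$ around $p_1$ and verify that the (length-$\le 1$) shortest paths realizing the covering conditions remain inside $B$. Once this is in place, the embedding, the rescaling, and the arithmetic matching the constant $\epsilon/10$ are all routine.
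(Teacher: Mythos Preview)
Your proof is correct and follows essentially the same route as the paper: embed into a low-treewidth graph with $\eta=\epsilon/10$, rescale, and verify that covering/refutation survive to give an $(\epsilon/3)$-scattering. The one substantive difference is that the paper embeds $G$ directly and simply writes $\delta_H(\phi(x_j),\phi(p_i))\le 1+2\eta$, implicitly treating the diameter as at most~$2$; you correctly flag that this is not automatic and supply an explicit localization step (restrict to the ball of radius~$2$ around $p_1$, drop the two endpoint pairs, check that covering shortest paths stay inside). Your version is therefore more careful than the paper's on this point, at the price of (i) an additive~$2$ in the bound, which is harmless, and (ii) an extra hypothesis that $\mathcal G$ be closed under induced subgraphs---reasonable for planar or minor-closed classes, but strictly speaking not part of the theorem statement. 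The paper sidesteps (ii) by tacitly assuming the relevant scattering already lives in a graph of diameter~$2$; neither presentation makes this completely explicit.
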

\begin{proof}
Let $(x_1,p_1), (x_2,p_2), \ldots, (x_{\ell}, p_{\ell})$ be \scattering  in $G$. Let $\eta = \epsilon/10$. Consider an embedding $\phi$ of $G$ into $H$ such that the treewidth of $H$ is at most $t(\eta) = t(\epsilon/10)$. Notice that     
\begin{itemize}
    \item For $1 \leq i < j \leq \ell: \delta_H(\phi(x_j), \phi(p_i)) \leq 1 + 2\eta = 1+\epsilon/5$
    \item For $1 \leq i \leq \ell: \delta_H(\phi(x_i), \phi(p_i)) > 1 + \epsilon$
\end{itemize}
Consider (weighted) graph $H'$ obtained by scaling the weights of $H$ down by a factor of $(1+\epsilon/5)$. We have that $\delta_{H'}(\phi(x_j), \phi(p_i)) \leq 1$ while $\delta_{H'}(\phi(x_i), \phi(p_i)) > \frac{1+\epsilon}{1+\epsilon/5} \geq 1+ \epsilon/3$ for sufficiently small $\epsilon >0$. This implies that the embedded sequence is $(\epsilon/3)$-scattering in $H$. Therefore, from Theorem~\ref{thm:treewidth}, the length $\ell$ is upper bounded by $\lambda_{t(\epsilon/10)}(\epsilon/3)$.  
\end{proof}

Now we can use the following theorem. 

\begin{theorem}[Theorem 1.3 of \cite{fox2019embedding}]
\label{thm:embedding}
There is a polynomial-time algorithm that, given an edge-weighted planar graph and given a number $\eta > 0$, outputs an embedding of the graph into a planar graph of treewidth  $\poly(1/\eta)$ with additive error $\eta \cdot D$ where $D$ is the diameter of the input graph.
\end{theorem}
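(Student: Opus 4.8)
I would prove the statement through Klein's \emph{shortest‑path separator} framework, in the concrete shape of a Borradaile–Klein–Mathieu \emph{brick decomposition}, followed by two rounds of \emph{portalization} — one to attach the bricks cheaply, one to tame the ``mortar'' that separates them. Fix an arbitrary root $r$ of the input planar graph $G$, compute a shortest‑path tree from $r$, and let $D$ be the diameter, so that every root‑to‑vertex tree path has length at most $D$. Fix a small constant $c$ and call $\eta D/c$ the \emph{portal spacing}. I would build $H$ so that $\phi\colon V(G)\to V(H)$ is injective and every edge of $H$ carries the length of a shortest $G$‑path between the $G$‑preimages of its endpoints; by the triangle inequality in $G$ this makes $\delta_H \ge \delta_G$ automatically, so the two things that actually require work are the upper bound $\delta_H(\phi u,\phi v)\le \delta_G(u,v)+\eta D$ and the treewidth bound $\tw(H)=\poly(1/\eta)$.

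\textbf{Separators, portals, and the detour lemma.} By the planar shortest‑path separator theorem, $G$ admits a balanced vertex separator consisting of $\OO(1)$ root‑to‑vertex shortest paths $Q_1,\dots,Q_{\OO(1)}$; since each $Q_i$ has length at most $D$, placing one portal every $\eta D/c$ along $Q_i$ uses only $\OO(1/\eta)$ portals per path. The basic detour lemma is: if a shortest $u$–$v$ path in $G$ crosses some $Q_i$ at a vertex $w$, then walking along $Q_i$ from $w$ reaches a portal $\pi$ within distance $\eta D/(2c)$, whence $\delta_G(u,\pi)+\delta_G(\pi,v)\le \delta_G(u,v)+\eta D/c$. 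Recursing on the $\OO(1)$ pieces obtained by deleting the separator yields a decomposition tree of depth $\OO(\log n)$; I would take the shape of this tree as the tree of a tree‑decomposition of $H$, letting the bag of a node be the set of portals lying on the boundary of its piece, and I would build $H$ from $G$ by replacing piece interiors by planar ``portal graphs'' drawn inside existing faces so that planarity is preserved.

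\textbf{The main obstacle: treewidth independent of $n$.} The step I expect to be the crux is forcing the bag size to be $\poly(1/\eta)$ rather than the naïve $\Theta((1/\eta)\log n)$ one gets by keeping the portals of all ancestor separators. The fix is to maintain the invariant that the \emph{boundary} of every piece in the decomposition is the union of only $\OO(1)$ shortest (sub)paths, so that a bag carries only $\OO(1/\eta)$ portals no matter how deep the piece is. Establishing and preserving this invariant is where the real work lies: it needs the structural facts about planar shortest paths (two shortest paths from a common tree intersect in a single subpath; Klein's ``sorting'' of portals along a path), and it is cleanest to organize through a brick decomposition — compute a mortar graph that is a union of $\eta$‑short shortest paths cutting $G$ into bricks each with a boundary made of $\OO(1)$ nearly‑shortest paths, portalize each brick boundary with $\OO(1/\eta)$ portals and collapse the brick interior onto them, and then recursively apply shortest‑path separators to the structured mortar graph to bring its treewidth down to $\poly(1/\eta)$ as well. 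I expect the brick‑boundary complexity bookkeeping, and the verification that replacing interiors preserves both planarity and non‑contraction, to be the most delicate parts.

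\textbf{Keeping the error from telescoping.} A secondary point I would need to handle is that a single shortest path in $H$ could otherwise be rerouted through portals at $\Theta(\log n)$ successive levels, accumulating error $\Theta((\eta D/c)\log n)$. This is prevented by storing inside $H$ the exact inter‑portal distances \emph{along} each separator/mortar path ($\OO(1/\eta)$ numbers per path), so that once a walk has ``snapped'' onto the portal network near $u$ it travels along that network exactly and ``snaps off'' again near $v$; the overhead is then only the two snapping steps, $2\cdot\eta D/(2c)=\eta D/c\le \eta D$ for $c\ge 1$. Combining the $\poly(1/\eta)$ treewidth bound with this $\eta D$ error bound, and noting that every subroutine (shortest‑path trees, shortest‑path separators, brick decomposition, portal insertion) runs in polynomial time, would complete the proof.
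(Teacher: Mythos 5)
First, note that the paper does not prove this statement at all: Theorem~\ref{thm:embedding} is imported verbatim as Theorem~1.3 of~\cite{fox2019embedding} and used as a black box, so there is no internal proof to compare against; what you have written is an attempted reconstruction of the cited result, and as such it has genuine gaps at exactly the points where that paper does its real work. The two obstacles you flag --- getting treewidth $\poly(1/\eta)$ with no dependence on $n$, and preventing the additive error from telescoping over the $\OO(\log n)$ recursion levels --- are precisely the contribution of Fox-Epstein, Klein, and Schild, and your proposed fixes do not resolve them. For the error, storing exact inter-portal distances along each separator path only makes travel \emph{along a single separator path} exact; a shortest $u$--$v$ path, after snapping to a portal $\pi$ on the separator of the piece where $u$ and $v$ first split, still needs the distances $u$--$\pi$ and $\pi$--$v$ to be preserved \emph{inside the child pieces}, and those are themselves only preserved up to the error incurred at deeper levels. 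The claim that a walk ``snaps on'' near $u$, travels the portal network exactly, and ``snaps off'' near $v$ presupposes that the union of all separator paths is a metrically exact spanner between portals, which is false: moving between portals of different separator paths again crosses pieces and again incurs per-level error. So the recurrence is still additive per level and the naive bound $\Theta(\eta D\log n)$ is not avoided by this device.

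For the treewidth, the invariant that every piece's boundary consists of $\OO(1)$ shortest subpaths (hence $\OO(1/\eta)$ portals per bag) is asserted rather than established, and the tool you invoke to establish it --- the Borradaile--Klein--Mathieu brick/mortar decomposition --- is the wrong instrument: the mortar graph is a \emph{cost-sensitive} structure whose guarantees are relative to the total edge weight (or OPT of a Steiner-type instance), not a \emph{diameter-sensitive} one giving uniform additive $\eta D$ distortion for all vertex pairs; collapsing brick interiors onto $\OO(1/\eta)$ boundary portals preserves boundary-to-boundary connection patterns up to a cost factor, not all pairwise distances up to $\eta D$. The actual proof in~\cite{fox2019embedding} uses a different and considerably more involved construction precisely because the standard separator-plus-portals and brick-decomposition toolkit, applied as you describe, was known not to give treewidth independent of $n$ with additive $\eta D$ error. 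In short, your outline correctly identifies the genre of techniques and the two crucial difficulties, but it leaves both unproved, so it does not constitute a proof of the theorem; for the purposes of this paper the statement should simply be cited, as is done here.
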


This implies, in our language, that planar graphs have low treewidth-diameter embedding.

\begin{corollary}
Planar graphs have  \scatterdim at most $\exp \left( \exp ({\sf poly}(1/\epsilon)) \right)$.  
\end{corollary}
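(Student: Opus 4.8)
The plan is to compose three facts that are already in place: the low-treewidth-diameter embedding for planar graphs (Theorem~\ref{thm:embedding}), the generic reduction from bounded \scatterdim to such embeddings (the formal version of Theorem~\ref{thm:embedding and scatter dim} proved in this section), and the doubly-exponential \scatterdim bound for bounded-treewidth graphs (Theorem~\ref{thm:treewidth}). First I would observe that Theorem~\ref{thm:embedding} says exactly that the class of planar graphs admits a $t$-low treewidth-diameter embedding with $t(\eta)=\poly(1/\eta)$: given a planar graph and $\eta>0$, one obtains deterministically in polynomial time a host graph of treewidth $\poly(1/\eta)$ whose shortest-path metric dominates the input metric and overestimates it by at most $\eta D$, where $D$ is the diameter of the input. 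Plugging this into the reduction of this section (which uses the accuracy parameter $\eta=\epsilon/10$) yields that every planar metric has \scatterdim at most $\lambda_{t(\epsilon/10)}(\epsilon/3)=\lambda_{\poly(1/\epsilon)}(\epsilon/3)$, where $\lambda_{\tw}(\cdot)$ denotes the \scatterdim of treewidth-$\tw$ graphs.

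The remaining step is to substitute the quantitative bound of Theorem~\ref{thm:treewidth}, namely $\lambda_{\tw}(\epsilon)=\tw^{(1/\epsilon)^{\OO(\tw)}}$, with $\tw=\poly(1/\epsilon)$ and with $\epsilon/3$ in place of $\epsilon$, and then collapse the resulting tower of exponentials. This gives a bound of the form $\bigl(\poly(1/\epsilon)\bigr)^{(3/\epsilon)^{\OO(\poly(1/\epsilon))}}$. The inner exponent is $(3/\epsilon)^{\OO(\poly(1/\epsilon))}=\exp\!\bigl(\OO(\poly(1/\epsilon))\cdot\log(3/\epsilon)\bigr)=\exp(\poly(1/\epsilon))$, absorbing the $\log(1/\epsilon)$ factor into the polynomial; the whole expression is then $\bigl(\poly(1/\epsilon)\bigr)^{\exp(\poly(1/\epsilon))}=\exp\!\bigl(\exp(\poly(1/\epsilon))\cdot\log\poly(1/\epsilon)\bigr)=\exp(\exp(\poly(1/\epsilon)))$, again absorbing the $\log\poly(1/\epsilon)$ factor into the outer polynomial. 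This is precisely the claimed bound.

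I do not expect a genuine obstacle here; the work is essentially bookkeeping. The two places that deserve a little care are (i) tracking how the accuracy parameter degrades along the chain — the embedding reduction turns an $\epsilon$-scattering into an $(\epsilon/3)$-scattering inside a graph of treewidth $t(\epsilon/10)$, so Theorem~\ref{thm:treewidth} must be invoked at accuracy $\epsilon/3$ and treewidth $\poly(1/\epsilon)$, not at $\epsilon$ and $\poly(1/\epsilon)$ — and (ii) verifying that the hypotheses of the embedding reduction are met verbatim: Theorem~\ref{thm:embedding} supplies the embedding deterministically in polynomial time, with the correct one-sided (dominating, additive $\eta D$) distortion, which is exactly what the reduction needs; moreover the class of planar graphs is closed under the edge-weight scaling used in the definition of \scatterdim. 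Everything else is immediate.
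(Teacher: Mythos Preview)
Your proposal is correct and follows exactly the route the paper takes: the corollary is stated immediately after the embedding theorem and the reduction theorem, and the paper leaves the composition of these with Theorem~\ref{thm:treewidth} implicit. Your careful tracking of the parameter degradation ($\epsilon\mapsto\epsilon/10$ for the embedding, $\epsilon\mapsto\epsilon/3$ for the resulting scattering) and your collapse of the tower of exponentials are precisely the bookkeeping the paper omits.
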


\subsection{High-Dimensional Euclidean Space}\label{sec:high-dim-cont}

Recall, from the introduction and Sections~\ref{sec:scatter-dim}, that the \scatterdim of high-dimensional (continuous) Euclidean space  is unbounded. In this section, we show, however, that the \algscatterdim of this metric is bounded.

\conteuclscatter*
We dedicate the rest of this section to the proof of Theorem \ref{thm:scatter-cont-eucl}. 
In order to upper bound the \algscatterdim for the continuous Euclidean space, it suffices to show that there exists an algorithm $\cC$ such that the $(\cC, \epsilon)$-scattering dimension in the Euclidean space is bounded. We use an algorithm by Kumar and Yildirim~\cite{KY09} as \pc algorithm for the high-dimensional Euclidean space. They study the \pc problem in the language of \textsc{Weighted Euclidean $1$-Center}. They provide a \pc algorithm based on a convex optimization formulation which efficiently (and approximately) solves the \pc problem in continuous Euclidean setting for weights with bounded aspect ratio. Let $\cC_{\textnormal{KY}}$ denote this algorithm. The following lemma is adapted from Kumar and Yildirim's work into our terminology (see Lemma 4.2 of \cite{KY09}).

\begin{lemma}\label{lem:ky}
Given an instance $(P, F, \delta)$ of \pc in high-dimensional Euclidean space, associated radii $r(p)$
to each $p \in P$, and $\epsilon \in (0, 1)$, the length of any \algscattering[$(\cC_{\textnormal{KY}}, \epsilon)$]
is at most $\bigO{\nicefrac{\tau}{\epsilon^2}}$ where $\tau \geq 1$ is the squared ratio of the largest radius in the requests to the smallest.
\end{lemma}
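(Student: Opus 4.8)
The plan is to obtain the bound by reading off the convergence analysis of the Kumar--Yildirim algorithm $\cC_{\textnormal{KY}}$ and translating its iteration count into a bound on the length of a scattering. Recall that $\cC_{\textnormal{KY}}$ solves an instance of \pc in $\bRd$ by treating the set $Q$ of requests as a weighted Euclidean $1$-center instance (each request $(p,r)$ imposes the constraint that $x$ be within distance $r$ of $p$, with the $r$'s encoding the weights in the manner of~\cite{KY09}) and running a Frank--Wolfe/core-set iteration: it maintains a candidate center that is a convex combination of a small \emph{active set} of request points, and in each iteration it augments the active set by a request that the current center violates by more than the tolerated factor and re-optimizes. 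Lemma~4.2 of~\cite{KY09} bounds the number of such iterations---equivalently, the size of the active set at termination---by $\OO(\tau/\epsilon^2)$, where $\tau$ is the squared aspect ratio of the radii appearing in $Q$, after which the candidate center $\epsilon$-satisfies all of $Q$ (and if no center satisfies $Q$ even approximately, the algorithm reports failure within the same iteration budget).

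The heart of the proof is the observation that an \algscattering[$(\cC_{\textnormal{KY}},\epsilon)$] $(x_1,p_1,r_1),\dots,(x_\ell,p_\ell,r_\ell)$ is, step for step, such an iterative run. Indeed, at round $i$ the center is forced to be $x_i=\cC_{\textnormal{KY}}(M,\{(p_1,r_1),\dots,(p_{i-1},r_{i-1})\},\tfrac{\epsilon}{2})$, so $x_i$ $\tfrac{\epsilon}{2}$-satisfies every earlier request, $\|x_i-p_j\|\le(1+\tfrac{\epsilon}{2})r_j$ for $j<i$, while the point player must produce $p_i$ with $\|x_i-p_i\|>(1+\epsilon)r_i$, i.e.\ a request violated by strictly more than the $(1+\tfrac{\epsilon}{2})$ factor that $\cC_{\textnormal{KY}}$ would have accepted. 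Thus the triples $(x_i,p_i,r_i)$ exhibit exactly the ``current center covers all previously seen points, one newly revealed point is far'' situation that triggers a Frank--Wolfe step of error parameter $\tfrac{\epsilon}{2}$, and one can run the potential argument behind~\cite[Lemma~4.2]{KY09} essentially verbatim on this sequence: after rescaling so that the smallest radius equals $1$ (permissible since the statement is scale-invariant and the Euclidean class is scaling-closed), all of $p_1,\dots,p_\ell$ lie in a ball of radius $\sqrt{\tau}$ around a common near-optimal center, so the Frank--Wolfe potential has total variation $\OO(\tau)$, and each round increases it by $\Omega(\epsilon^2)$; hence $\ell=\OO(\tau/\epsilon^2)$.

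The step I expect to be the main obstacle is justifying that this translation is faithful despite two mismatches: $\cC_{\textnormal{KY}}$ is re-invoked from scratch on the growing request set at every round rather than carried forward, and the point player chooses an arbitrary sufficiently-violated point rather than the farthest one. Both are handled by robustness of the analysis rather than by anything special to the scattering: the iterates of $\cC_{\textnormal{KY}}$ on $\{(p_1,r_1),\dots,(p_{i-1},r_{i-1})\}$ are determined by that request set alone, the potential attached to round $i$ depends only on the points revealed so far, and the per-iteration progress estimate of~\cite{KY09} uses only the coverage property of the current center and the existence of one sufficiently-violated request---never that the violated request is the farthest one. Consequently the potential is monotone along the scattering with increments bounded below, exactly as in a genuine run of $\cC_{\textnormal{KY}}$. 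What remains is routine bookkeeping: tracking the constants through the gap between the $\tfrac{\epsilon}{2}$-satisfaction guarantee of $\cC_{\textnormal{KY}}$ and the $(1+\epsilon)$-refutation demanded of the point player, through the rescaling, and through the (easy) case where $\cC_{\textnormal{KY}}$ fails and the scattering simply terminates; none of these affect the $\OO(\tau/\epsilon^2)$ form of the bound. The only genuinely external input is~\cite[Lemma~4.2]{KY09}, which we invoke as a black box for the iteration/active-set count.
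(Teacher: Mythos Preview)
The paper does not actually prove this lemma: it is stated as a direct adaptation of \cite[Lemma~4.2]{KY09} into the paper's terminology, with no argument given beyond the citation. Your proposal therefore goes further than the paper does, by sketching why the Kumar--Yildirim iteration bound translates into a bound on the length of an \algscattering[$(\cC_{\textnormal{KY}},\epsilon)$].

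Your sketch is reasonable in spirit and correctly isolates the one nontrivial point: the scattering re-invokes $\cC_{\textnormal{KY}}$ from scratch on each growing prefix rather than carrying a single Frank--Wolfe run forward, and the refuting point $p_i$ is adversarial rather than the farthest point. Your claim that the potential argument survives both mismatches is plausible, but it is not fully justified in your write-up: you assert that ``the potential attached to round $i$ depends only on the points revealed so far'' and that the per-step progress bound needs only \emph{some} sufficiently-violated request, but you do not actually exhibit the potential or verify these properties against the specific analysis in \cite{KY09}. Since this is precisely the step the paper skips by citing \cite{KY09} as a black box, your proposal is in effect promising to re-derive (or carefully re-read) that lemma under slightly relaxed hypotheses. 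That is a fair plan, but as written it is a plan rather than a proof; the ``routine bookkeeping'' you defer is exactly where a careful reader would want to see the potential written down and the monotonicity checked across the re-invocation boundary.
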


Note that for a constant $\tau$, Lemma \ref{lem:ky} yields the proof of the theorem. 
To complete the proof, we show that by increasing the length of the \scattering by a multiplicative factor of $\bigO{\log \nicefrac{1}{\epsilon}}$, we can assume that $\tau$ is $\bigO{\nicefrac{1}{\epsilon^2}}$. 

\paragraph*{Aspect-Ratio Condition.} The following lemma provides a sufficient condition for bounded \algscatterdim that facilitates the design of a \pc algorithm for bounding the \algscatterdim . In particular, this condition is key to bound the \algscatterdim of high-dimensional continuous Euclidean spaces. It can be seen as a strenghtened converse of the aspect-ratio lemma~\ref{lem:aspect-ratio}.
\begin{lemma}[Aspect-Ratio Condition]\label{lem:critertion-algscattercomp}
  Let $\cM$ be a class of metric spaces with \pc algorithm $\cC_{\cM}$ and let $\epsilon\in (0,1)$. If any $(\cC_{\cM}, \epsilon)$-scattering $(x_1, p_1, r_1), \ldots, (x_\ell, p_\ell, r_\ell)$ with $r_i \in [\nicefrac{\epsilon}{12}, 1]$, $i \in [\ell]$ contains at most $\lambda(\epsilon)$ triples with the same radius, then the \algscatterdim of $\cM$ bounded by $\bigO{\lambda(\epsilon)\log\nicefrac{1}{\epsilon}}$.
\end{lemma}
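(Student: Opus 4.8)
\emph{Strategy.} Since the \algscatterdim of $\cM$ is the minimum of \algscatterdim[$(\epsilon,\cdot)$] over all \pc algorithms, it suffices to build from $\cC_{\cM}$ a single \pc algorithm $\cC'$ whose \algscatterdim[$(\epsilon,\cC')$] is $\OO(\lambda(\epsilon)\log\nicefrac1\epsilon)$. On input $(M,Q,\eta)$, let $r_{\min}$ be the smallest radius in $Q$, let $\mu$ be the largest power of two with $\mu\le r_{\min}$, and let the \emph{tight window} $Q^\star\subseteq Q$ consist of the requests of radius below $(\nicefrac{c}{\eta})\mu$ for a suitable absolute constant $c$ (e.g.\ $c=6$). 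The algorithm $\cC'$ runs $\cC_{\cM}(M,Q^\star,\eta)$ and, if this returns a center $x$ that in addition $\eta$-satisfies \emph{all} of $Q$, outputs $x$; otherwise it fails. First I would verify that $\cC'$ is a genuine \pc algorithm: properties (i) and (iii) are immediate, and for (ii) one observes that if $Q$ has an \emph{exact} solution $y$, then $y$ satisfies $Q^\star$ exactly, so $\cC_{\cM}$ returns some $x$ with $\delta(x,p_{\min})\le(1+\eta)r_{\min}$ (where $(p_{\min},r_{\min})$ is the tightest request), hence $\delta(x,y)\le(2+\eta)r_{\min}$; since every request in $Q\setminus Q^\star$ has radius $r'\ge(\nicefrac{c}{\eta})\mu\ge(\nicefrac{c}{2\eta})r_{\min}\ge(2+\eta)r_{\min}/\eta$, the triangle inequality gives $\delta(x,p')\le\delta(x,y)+r'\le(1+\eta)r'$, so the final check never fails in case (ii) and no fallback is needed.

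\emph{Structure of a $(\cC',\epsilon)$-scattering.} Fix such a scattering $(x_1,p_1,r_1),\dots,(x_\ell,p_\ell,r_\ell)$; here $\cC'$ is always invoked with $\eta=\nicefrac\epsilon2$. For each $i\ge 2$ the center $x_i$ exists, hence (a) $x_i$ $\nicefrac\epsilon2$-satisfies \emph{all} earlier requests (the final check passed), and (b) $x_i=\cC_{\cM}(M,W_i,\nicefrac\epsilon2)$ where $W_i$ is the tight window of the first $i-1$ requests. Because every earlier radius is at least the running minimum $r_{\min}^{(i)}\ge\mu_i$, the window $W_i$ only ever contains radii in $[\mu_i,(\nicefrac{12}{\epsilon})\mu_i)$, an interval of aspect ratio below $\nicefrac{12}{\epsilon}$. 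The running minimum, and thus the dyadic scale $\mu_i$, is non-increasing along the scattering, so the scattering decomposes into \emph{scale phases} (maximal runs of constant $\mu_i$); within a phase the window threshold is constant, so the windows grow monotonically.

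\emph{Counting triples of a fixed radius $\rho$.} Call a triple $(x_i,p_i,\rho)$ \emph{out-of-range} if $\rho>(\nicefrac{12}{\epsilon})r_{\min}^{(i)}$ and \emph{in-range} otherwise. I claim there is at most one out-of-range $\rho$-triple: if $i<i'$ were two of them, then by (a) and (b) both $x_i$ and $x_{i'}$ lie within $(1+\nicefrac\epsilon2)r_{\min}^{(i)}$ of the tightest point at position $i$, while $x_{i'}$ $\nicefrac\epsilon2$-satisfies the request $(p_i,\rho)$; combining these via the triangle inequality with the refutation $\delta(x_i,p_i)>(1+\epsilon)\rho$ forces $r_{\min}^{(i)}>(\nicefrac\epsilon6)\rho$, contradicting out-of-range-ness ($r_{\min}^{(i)}<(\nicefrac\epsilon{12})\rho$). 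The in-range $\rho$-triples (after discarding an $\OO(1)$ of edge cases around the first occurrence of radius $\rho$) all lie in the contiguous block of positions where $r_{\min}^{(i)}\in[\Omega(\epsilon\rho),\rho]$, and this block spans only $\OO(\log\nicefrac1\epsilon)$ consecutive scale phases. For a single scale phase, the requests whose radii lie in that phase's window band — which, within that phase, are handled by $\cC_{\cM}$ on a monotonically growing prefix and satisfy the $(1+\epsilon)$-refutation — form a $(\cC_{\cM},\epsilon)$-scattering of aspect ratio below $\nicefrac{12}{\epsilon}$; rescaling so that its radii fall in $[\nicefrac\epsilon{12},1]$, the hypothesis of the lemma bounds the number of $\rho$-triples it contributes by $\lambda(\epsilon)$. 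Summing over the $\OO(\log\nicefrac1\epsilon)$ phases gives $\OO(\lambda(\epsilon)\log\nicefrac1\epsilon)$ many $\rho$-triples, which is the asserted bound on \algscatterdim[$(\epsilon,\cC')$].

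\emph{Main obstacle.} The delicate point is extracting the per-phase $(\cC_{\cM},\epsilon)$-scattering: at the \emph{start} of a scale phase the window $W_i$ need not be empty — it can ``reach back'' and contain requests of radius up to $\OO(\nicefrac1\epsilon)\mu_i$ introduced during the $\OO(\log\nicefrac1\epsilon)$ preceding scale phases — so the sequence of windowed $\cC_{\cM}$-calls inside a phase is a $\cC_{\cM}$-scattering ``warm-started'' by these reach-back requests rather than one started from scratch. Arguing that the hypothesis still controls such warm-started sequences (for instance by folding the reach-back into the aspect-ratio bound and noting that, because each reach-back request is itself a refuted triple, the whole sequence together with its reach-back prefix is governed by the scatter bound, or by a charging argument that attributes reach-back triples to the phase in which they were created) is where the real bookkeeping lies; together with the decomposition into $\OO(\log\nicefrac1\epsilon)$ scale phases it is the source of the logarithmic factor in the statement.
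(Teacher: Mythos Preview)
Your construction of the wrapper algorithm $\cC'$ (round the running minimum to a dyadic scale, pass only the ``tight window'' to $\cC_{\cM}$, then verify all of $Q$) and your phase decomposition are exactly what the paper does; the paper's Algorithm~2 is your $\cC'$ with the constant $c=6$ and the rounding direction flipped. Your verification that $\cC'$ is a \pc algorithm also matches the paper's Lemma~6.7/6.8 (the paper phrases the triangle-inequality step as ``$p$ $\eta$-implies $p'$'').

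Two points of comparison are worth noting. First, your explicit treatment of \emph{out-of-range} $\rho$-triples (those made after the running minimum has dropped below $\Theta(\epsilon)\rho$, so that $(p_i,\rho)$ never enters any window) is not in the paper: the paper only counts phases ``during which a request with this radius is added to $Q'$'' and so silently ignores these triples. Your argument that there is at most one such triple is correct and fills that hole.

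Second, the ``reach-back'' obstacle you flag is real and the paper glosses over it just as you do. The paper simply asserts that, within a fixed phase, the windowed calls form a $(\cC_{\cM},\epsilon)$-scattering with radii in $[\nicefrac{\epsilon}{12},1]$ and invokes the hypothesis; it never addresses the fact that at the start of the phase the set $Q'$ already contains earlier requests, so the in-phase centers are $\cC_{\cM}$ evaluated on prefixes that are \emph{warm-started} by those earlier requests rather than on the in-phase prefix alone. Your suggested fixes (fold the reach-back into a single bounded-aspect-ratio scattering, or charge reach-back triples to their creation phase) are plausible, but as written neither you nor the paper actually closes this step; if you want the argument to be airtight you should make one of those fixes explicit.
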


 To prove Lemma~\ref{lem:critertion-algscattercomp} , we assume that we are given a \pc algorithm $\cC_{\cM}$ as stated. We claim that the following \pc algorithm, 
which invokes $\cC_{\cM}$ as a sub-routine, yields \algscatterdim $\OO(\lambda(\epsilon)\log\nicefrac{1}{\epsilon})$ according to the condition of Definition~\ref{def:algscattercomp}.

\begin{algorithm}[H]
    \caption{\pc algorithm realizing lemma~\ref{lem:critertion-algscattercomp}.}\label{alg:aspect-ratio-alg-scatter-comp}
  \KwData{Metric space $M = (P,F,\delta)\in\cM$, requests $Q = (p_1, r_1), \ldots (p_\ell, r_\ell)$ with $p_i \in P$ and $r_i \in \bR^{+}$ for $i \in [\ell]$, error parameter $\eta>0$, \pc algorithm $\cC_{\cM}$ as in Lemma~\ref{lem:critertion-algscattercomp}}
  \KwResult{center $x\in F$ such that $\delta(p_i,x)\leq (1+\eta)r_i$ for all $i \in [\ell]$ or ``fail''}
  $\rho\gets\min\{\,2^{-j}\mid j\in\bN_{0}\textnormal{ and }\min_{i \in [\ell]}r_i \leq 2^{-j}\,\}$\;
  $Q'\gets\{\,(p_i, r_i) \mid i \in [\ell], \nicefrac{\eta}{3}\cdot r_i\leq \rho \,\}$\;
  $x\gets\cC_{\cM}(M, Q', \eta)$\;  
  \ForEach{$i\in [\ell]$}{
    \lIf{$\delta(p_i,x)>(1+\eta)r_i$}{
      ``fail''
    }
   }
  \Return{$x$}\;
\end{algorithm}
The following definition formulates a condition for two requests $(p, r), (p', r')\in Q$ under which it suffices to satisfy $(p, r)$ in order satisfy $(p', r')$ as well.
\begin{definition}\label{def:implication}
Let $\eta\in (0,1)$ and $(p, r), (p', r') \in Q$. We say that $p$ \emph{$\eta$-implies} $p'$ if $\ball(p,(1+\eta)r)\subseteq\ball(p',(1+\eta)r')$.
\end{definition}

\begin{lemma}\label{lem:cluster-implication}
Let $(p, r), (p', r') \in Q$ be two requests such that $r\leq \nicefrac{\eta}{3}\cdot r'$ for some $\eta\in (0,1)$. If there is some center in $F$ satisfying both requests then $p$ $\eta$-implies $p$.
\end{lemma}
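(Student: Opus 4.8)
The plan is to unwind Definition~\ref{def:implication} and argue directly with two applications of the triangle inequality. First I would fix a center $c\in F$ satisfying both requests, i.e., $\delta(c,p)\leq r$ and $\delta(c,p')\leq r'$; such a $c$ exists by hypothesis. This is the only place the existence of a common center enters: it forces $p$ and $p'$ to be close, since $\delta(p,p')\leq \delta(p,c)+\delta(c,p')\leq r+r'$, and $r$ is tiny compared to~$r'$.

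Next I would take an arbitrary $q\in\ball(p,(1+\eta)r)$ and show $q\in\ball(p',(1+\eta)r')$, which is precisely the inclusion $\ball(p,(1+\eta)r)\subseteq\ball(p',(1+\eta)r')$ demanded by Definition~\ref{def:implication}. Applying the triangle inequality once more and using the bound on $\delta(p,p')$ just established,
\[
\delta(q,p')\leq \delta(q,p)+\delta(p,p')\leq (1+\eta)r+(r+r')=(2+\eta)r+r'.
\]
It then suffices to check $(2+\eta)r\leq \eta r'$, because then $\delta(q,p')\leq (1+\eta)r'$, as needed. Using the hypothesis $r\leq \nicefrac{\eta}{3}\cdot r'$ together with $\eta<1$ (so that $2+\eta<3$), we get $(2+\eta)r\leq \frac{(2+\eta)\eta}{3}\,r'\leq \eta r'$, which finishes the argument.

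I do not expect any genuine obstacle here: the statement is a two-step triangle-inequality computation, and the constant $\nicefrac{\eta}{3}$ in the hypothesis is exactly the slack needed to absorb the factor $2+\eta<3$ produced by the two applications. (It is worth noting, for the reader, that the conclusion should read ``$p$ $\eta$-implies $p'$'', the asymmetry being that $r$ is the smaller radius.) This lemma will feed into the correctness analysis of Algorithm~\ref{alg:aspect-ratio-alg-scatter-comp}: it guarantees that once the requests with radius at most $\nicefrac{3}{\eta}\cdot\rho$ (i.e., those in $Q'$) are $\eta$-satisfied, all remaining requests are $\eta$-satisfied automatically.
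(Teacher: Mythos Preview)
Your proof is correct and essentially identical to the paper's: the paper applies a single three-term triangle inequality $\delta(p',x)\leq\delta(p',o)+\delta(o,p)+\delta(p,x)\leq r'+r+(1+\eta)r$ and then asserts this is at most $(1+\eta)r'$, which is exactly your bound $(2+\eta)r+r'\leq(1+\eta)r'$ with the final arithmetic left implicit. Your observation that the conclusion should read ``$p$ $\eta$-implies $p'$'' is also correct.
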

\begin{proof}
    Let $o$ be the center satisfying both requests and let $x\in \ball(p,(1+\eta)r(p))$. By triangle inequality
\begin{align*}
    \delta(p',x) & \leq \delta(p',o)+\delta(o,p)+\delta(p,x)\\
    &  \leq r'+r+(1+\eta)r\\
    & \leq (1+\eta)r'\qedhere
\end{align*}    
\end{proof}

\begin{lemma}
    Algorithm~\ref{alg:aspect-ratio-alg-scatter-comp} is a \pc algorithm.
\end{lemma}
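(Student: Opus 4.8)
The plan is to verify the three defining properties of an approximate \pc algorithm for Algorithm~\ref{alg:aspect-ratio-alg-scatter-comp}: that (i) any center it returns $\eta$-satisfies all distance constraints, (ii) it never fails whenever some center satisfies all constraints exactly, and (iii) it runs in time $\poly(|M|,1/\eta)$. I would dispatch (i) and (iii) in one line each: the concluding loop over the requests explicitly checks $\delta(p_i,x)\le(1+\eta)r_i$ for every $i\in[\ell]$ and reports ``fail'' if any check fails, so a returned center necessarily $\eta$-satisfies all of $Q$; and the algorithm performs a single dyadic-threshold computation, one scan of $Q$ to form $Q'$, one call to $\cC_{\cM}$ (polynomial by hypothesis since $|Q'|\le|Q|$), and one more scan of $Q$.

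The bulk of the argument is property (ii). I would fix a center $o\in F$ with $\delta(o,p_i)\le r_i$ for every $i\in[\ell]$. Since $Q'\subseteq Q$, the center $o$ satisfies the restricted instance $Q'$ exactly as well, so property (ii) of $\cC_{\cM}$ guarantees that the call $\cC_{\cM}(M,Q',\eta)$ does not fail and returns a center $x$ that $\eta$-satisfies every request in $Q'$. The remaining task is to show that $x$ also $\eta$-satisfies every request \emph{outside} $Q'$, so that the concluding loop passes and $x$ is returned.

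For this I would combine the dyadic rounding with Lemma~\ref{lem:cluster-implication}. From the definition of $\rho$ one has $\rho\ge r_{\min}:=\min_i r_i$ (with the convention $\min\emptyset=\infty$ in the degenerate case where all radii exceed $1$), and since $\eta<1$ this gives $\tfrac{\eta}{3}r_{\min}<r_{\min}\le\rho$, so the smallest-radius request $(p_{\min},r_{\min})$ lies in $Q'$. Now for any request $(p_i,r_i)\notin Q'$ we have $\tfrac{\eta}{3}r_i>\rho\ge r_{\min}$, that is $r_{\min}\le\tfrac{\eta}{3}r_i$, so Lemma~\ref{lem:cluster-implication} applied to the pair $(p_{\min},r_{\min}),(p_i,r_i)$ with the common center $o$ yields that $p_{\min}$ $\eta$-implies $p_i$, i.e.\ $\ball(p_{\min},(1+\eta)r_{\min})\subseteq\ball(p_i,(1+\eta)r_i)$. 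Since $x$ $\eta$-satisfies $(p_{\min},r_{\min})$ we get $x\in\ball(p_{\min},(1+\eta)r_{\min})\subseteq\ball(p_i,(1+\eta)r_i)$, hence $\delta(p_i,x)\le(1+\eta)r_i$. Thus $x$ $\eta$-satisfies all of $Q$, the loop does not fail, and (ii) holds.

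The step I expect to be most delicate is the constant bookkeeping around $\rho$: the dyadic choice must be large enough that the smallest-radius request survives in $Q'$, while simultaneously forcing any discarded request to have radius at least a $3/\eta$ factor above $r_{\min}$ — which is exactly what makes Lemma~\ref{lem:cluster-implication} applicable — and the factor $\tfrac{\eta}{3}$ in the definition of $Q'$ is tuned precisely for this. The only edge case, where every radius exceeds $1$ and the defining set of $\rho$ is empty, is handled by the convention $\min\emptyset=\infty$, which makes $Q'=Q$ and the ``outside $Q'$'' part of the argument vacuous.
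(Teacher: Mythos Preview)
Your proof is correct and follows essentially the same route as the paper's: both arguments show property (ii) by observing that the minimum-radius request survives in $Q'$, and then apply Lemma~\ref{lem:cluster-implication} to conclude that any request dropped from $Q$ is $\eta$-implied by one that remains (the paper phrases this as ``there is some $(p_j,r_j)\in Q'$ with $r_j\le\rho$'' rather than naming $r_{\min}$ explicitly, but it is the same step). Your write-up is in fact more complete than the paper's, which only treats property (ii) explicitly and does not discuss the edge case $r_{\min}>1$.
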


\begin{proof}
    Assume there is some $o\in F$ such that $\delta(p_i,o)\leq r_i$ for all $i \in [\ell]$. We want to show that our algorithm does not fail and outputs a center $x$ such that $\delta(p_i, x) \leq (1+\eta) r_i$ for any $i \in [\ell]$.
    
    Let $Q',\rho,x$ be defined as in Algorithm~\ref{alg:aspect-ratio-alg-scatter-comp}.
    Consider any $(p_i, r_i) \in Q$. We distinguish two cases. First assume that $(p_i, r_i) \in Q'$. Assuming that~$\cC_{\cM}$ is a correct \pc algorithm, we have that $\delta(p_i, x) \leq (1 + \eta) r_i$. On the other hand, if $(p_i, r_i) \in Q \setminus Q'$, then there is some $(p_j, r_j) \in Q'$ such that $r_j \leq \rho < \nicefrac{\eta}{3} \cdot r_i$. Hence $p_j$ $\eta$-implies $p_i$ by Lemma~\ref{lem:cluster-implication}. As argued above,
    $x \in \ball(p_j, (1+\eta) r_j)$ and hence $x \in \ball(p_i, (1+\eta) r_i)$ by Definition~\ref{def:implication}.
\end{proof}

We conclude the proof of Lemma~\ref{lem:critertion-algscattercomp} by proving the following lemma.
\begin{lemma}
    Let $\cC$ denote Algorithm~\ref{alg:aspect-ratio-alg-scatter-comp}. Any $(\cC, \epsilon)$-scattering has then $\OO(\lambda(\epsilon)\log\nicefrac{1}{\epsilon})$ triples with the same radius.
\end{lemma}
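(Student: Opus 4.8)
The final lemma asserts that when $\cC$ denotes Algorithm~\ref{alg:aspect-ratio-alg-scatter-comp}, any $(\cC,\epsilon)$-scattering has $\OO(\lambda(\epsilon)\log\nicefrac{1}{\epsilon})$ triples with the same radius; combined with the earlier lemma showing $\cC$ is a valid \pc algorithm, this establishes Lemma~\ref{lem:critertion-algscattercomp}. So the plan is: fix an arbitrary $(\cC,\epsilon)$-scattering $(x_1,p_1,r_1),\dots,(x_\ell,p_\ell,r_\ell)$, restrict attention to the sub-sequence of triples sharing one fixed radius value $r$, and bound the number of such triples by $\OO(\lambda(\epsilon)\log\nicefrac{1}{\epsilon})$.

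**The main idea.** The key observation is what $\cC$ does internally on the $i$-th request: it computes a threshold $\rho_i = \min\{2^{-j} : 2^{-j}\ge \min_{t\le i-1} r_t\}$, discards all requests $(p_t,r_t)$ with $t<i$ whose radius is too large relative to $\rho_i$ (i.e.\ $\nicefrac{\eta}{3}\cdot r_t > \rho_i$), and calls the inner \pc algorithm $\cC_{\cM}$ only on the surviving ``small-radius'' requests. With $\eta = \nicefrac{\epsilon}{2}$ (the error parameter used in the definition of \algscattering), the surviving requests all have radii within a bounded aspect ratio: each surviving radius $r_t$ satisfies $\nicefrac{\epsilon}{6}\cdot r_t \le \rho_i \le 2\min_{s}r_s \le 2 r_t'$ for the minimum radius $r_t'$, so the surviving radii lie in an interval $[a, (12/\epsilon)\,a]$ for $a = \min_s r_s$. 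Now I would partition the timeline of the scattering into \emph{epochs} delimited by the times at which $\rho_i$ strictly decreases (i.e.\ when a new, smaller minimum radius appears). Within a single epoch, $\rho_i$ is constant, so the set of ``relevant'' radius classes is fixed; more importantly, the sub-sequence of triples arising within one epoch whose radii fall in a fixed $2$-adic band is — after rescaling — a genuine $(\cC_{\cM},\epsilon)$-scattering on requests with radii in $[\nicefrac{\epsilon}{12},1]$ (a constant-aspect-ratio window), exactly the object the hypothesis of Lemma~\ref{lem:critertion-algscattercomp} controls, and hence contains at most $\lambda(\epsilon)$ triples of any single radius value.

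**Counting epochs and bands.** The two quantities to multiply are: (i) the number of distinct radius values within the bounded aspect-ratio window $[a,(12/\epsilon)a]$ that are relevant — but since we are fixing \emph{one} radius value $r$ at the outset, this contributes a factor $1$ to the per-radius count, not $\log\nicefrac1\epsilon$; and (ii) the number of epochs during which triples of that fixed radius $r$ can be created. A triple of radius $r$ is only fed to $\cC_{\cM}$ (rather than being handled by $\eta$-implication from a much smaller request, via Lemma~\ref{lem:cluster-implication}) during epochs where $\rho_i \ge \nicefrac{\eta}{3}\cdot r = \nicefrac{\epsilon}{6}\cdot r$; once $\rho_i$ drops below that, any later request of radius $r$ is automatically satisfied by the smaller surviving requests and cannot be refuted — but wait, refutation means $\delta(x_i,p_i) > (1+\epsilon)r_i$, which contradicts that $x_i$ $\eta$-satisfies $(p_i,r_i)$ for $\eta=\epsilon/2 < \epsilon$. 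So in fact once $\rho_i < \nicefrac{\epsilon}{6}\cdot r$, no further triple of radius $r$ can appear in the scattering at all. Since $\rho$ only decreases and is a power of two, and the range of $\rho$ that matters for radius $r$ is $[\nicefrac{\epsilon}{6}\cdot r, r]$ (radii larger than $\rho$ initially cannot be refuted either, by the same implication argument applied to the initial minimum), there are only $\OO(\log\nicefrac{1}{\epsilon})$ relevant epochs for radius $r$. In each such epoch, $\rho$ is fixed, and the triples of radius $r$ created form a $(\cC_{\cM},\epsilon)$-scattering after rescaling by $1/\rho$ — all radii then land in $[\nicefrac{\epsilon}{12},1]$ — so by hypothesis at most $\lambda(\epsilon)$ of them occur. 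Multiplying, we get $\OO(\lambda(\epsilon)\log\nicefrac{1}{\epsilon})$.

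**Anticipated obstacle.** The delicate point is verifying that the triples of fixed radius $r$ occurring \emph{within one epoch} really do form a legitimate $(\cC_{\cM},\epsilon)$-scattering on radii in $[\nicefrac{\epsilon}{12},1]$: I must check that when Algorithm~\ref{alg:aspect-ratio-alg-scatter-comp} processes the $i$-th such request, the set $Q'$ it passes to $\cC_{\cM}$ is exactly (a superset consistent with) the prefix of small-radius requests, so that $x_i = \cC_{\cM}(M,Q',\eta)$ matches the definitional requirement $x_i = \cC_{\cM}(M,\{(p_1,r_1),\dots,(p_{i-1},r_{i-1})\},\nicefrac{\epsilon}{2})$ \emph{with respect to the rescaled, band-restricted instance}. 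This requires that the requests filtered out (large radius, handled by implication) genuinely impose no constraint on $\cC_{\cM}$'s output relevant to the surviving points, which is precisely the content of Lemma~\ref{lem:cluster-implication}. The bookkeeping — that the filtering threshold $\nicefrac{\eta}{3}\cdot r_i \le \rho$ aligns with the band boundaries and that rescaling by $1/\rho$ maps the surviving radii into $[\nicefrac{\epsilon}{12},1]$ — is the routine part; making the ``epoch'' decomposition rigorous and arguing that no triple of radius $r$ survives outside the $\OO(\log\nicefrac1\epsilon)$ relevant epochs is where care is needed. I would present this as: (1) no triple of radius $r$ appears once $\rho < \nicefrac{\epsilon}{6}r$ (implication + $\eta<\epsilon$), (2) no triple of radius $r$ appears while $\rho \ge 2r$ is impossible since $\rho \le 2\cdot(\text{current min}) \le 2r$ always once a radius-$r$ request has arrived, hence only $\OO(\log\nicefrac1\epsilon)$ powers of two for $\rho$ are relevant, (3) within each fixed value of $\rho$, at most $\lambda(\epsilon)$ triples of radius $r$ by the rescaling argument and the hypothesis.
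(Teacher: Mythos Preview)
Your approach is the paper's: partition the scattering into phases (your ``epochs'') where $\rho$ is constant, show that the requests passed to $\cC_{\cM}$ during a phase have radii in $[\rho/2,3\rho/\eta]$ and hence land in $[\epsilon/12,1]$ after rescaling by $\eta/(3\rho)$, apply the hypothesis to bound triples-per-radius within a phase by $\lambda(\epsilon)$, and then count $\OO(\log\nicefrac{1}{\epsilon})$ relevant phases for any fixed radius value.

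Two points need correction. First, in your step~(3) you write that ``the triples of radius $r$ created form a $(\cC_{\cM},\epsilon)$-scattering''; this is not right, because each center $x_i$ is computed by $\cC_{\cM}$ on \emph{all} of $Q'_i$, which contains requests of many radii, not on the radius-$r$ prefix alone. The object to which the hypothesis applies is the full sequence of $Q'$-requests (these are exactly what $\cC_{\cM}$ sees, and during a phase $Q'$ grows monotonically), and the hypothesis then bounds the number of triples of each fixed radius within that sequence. Second, your argument for ``no triple of radius $r$ once $\rho<\epsilon r/6$'' via Lemma~\ref{lem:cluster-implication} is muddled: you claim $x_i$ $\eta$-satisfies $(p_i,r_i)$, but $x_i$ is computed \emph{before} $(p_i,r_i)$ is issued, and the implication lemma also requires a common exact satisfier, which is not given here. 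The clean argument is direct: if $r_i>3\rho_i/\eta$ then $r_i>\rho_i\geq m_i$, so $\rho_{i+1}=\rho_i$ and $(p_i,r_i)\notin Q'_{i+1}=Q'_i$; hence $x_{i+1}=\cC_{\cM}(M,Q'_i,\eta)=x_i$, and the final check in Algorithm~\ref{alg:aspect-ratio-alg-scatter-comp} sees $\delta(p_i,x_{i+1})=\delta(p_i,x_i)>(1+\epsilon)r_i>(1+\eta)r_i$ and fails. So at most one such triple exists (the last one), which is what you need.
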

\begin{proof}
        Let $Q', \rho, x$ be defined as in Algorithm~\ref{alg:aspect-ratio-alg-scatter-comp}. 
        Consider an arbitrary $(\cC, \epsilon)$-scattering $(x_1, p_1, r_1), \allowbreak \ldots, (x_\ell, p_\ell, r_\ell)$ and let $Q$ denote the sequence of its requests, $(p_1, r_1), \ldots, (p_\ell, ,r_\ell)$. Notice that, hypothetically, if we were to run the algorithm on all prefixes of $Q$ by increasing length, the value of $\rho$ would be monotonically decreasing over the sequence. We sub-divide the sequence into \emph{phases}, which are maximal (contiguous) sub-sequences in which the value $\rho$ does not change.
        Fix some phase. Notice that the set $Q'$ would be inclusion-wise increasing during this phase because $\rho$ remains unchanged. If $(p_i, r_i)$ is an arbitrary request added to $Q'$ at some point during the phase, then $\nicefrac{\rho}{2} \leq r_i \leq \nicefrac{3\rho}{\eta}$. Re-scaling distances by factor $\nicefrac{\eta}{(3\rho)}$ and using $\eta=\nicefrac{\epsilon}{2}$ shows that all requested radii during this phase lie in the range $[\nicefrac{\epsilon}{12},1]$. Hence, by the assumption on sub-routine~$\cC_\cM$ made in Lemma~\ref{lem:critertion-algscattercomp}, the scattering has at most $\lambda(\epsilon)$ many triples per radius value~$r_i$ during this phase.
        
        We complete the proof by noting that there are at most $\log_2(\nicefrac{12}{\epsilon})$ many phases for any fixed radius value $r_i$, 
        during which request with this radius are added to $Q'$.
\end{proof}

Lemmas \ref{lem:ky} and \ref{lem:critertion-algscattercomp} together with the observation that $\tau = \left(\nicefrac{12}{\epsilon}\right)^2$ give the proof of Theorem~\ref{thm:scatter-cont-eucl}.

\section{Conclusions and Open Problems}

We present a unified view on efficient parameterized approximation schemes that applies to large variety of clustering objectives and metric spaces. 
From complexity theoretic perspective, this implies  rather surprising collapses of approximability of symmetric and asymmetric norm clustering problems (in the regime of \textbf{P}, their approximabilities are substantially different, and yet they collapse in \textbf{FPT}).

There are rooms for future open problems in two directions that can be pursued independently. First, can we characterize the class of metric spaces with bounded scatter dimension? For example, do minor-free graphs have bounded scatter dimension? 
This is a purely structural question whose resolution immediately yields an EPAS (through our framework). Since the bounded treewidth graphs play an important role in our approach (through the lens of low treewidth embedding), it would be interesting to pinpoint the exact bound on their \scatterdim; a particularly interesting concrete question is whether the bound can be brought down to singly exponential. 

The second direction concerns extensions of our framework. 
Some clustering objectives are still missing from our framework. For instance, what about clustering with outliers~\cite{KLS18, BOR21, DZ22} (in which case the cost function $f$ is instead an anti-norm)? 
Even more conceptually, our current algorithm is oblivious to the structure of the input metric, but our theorem can only talk about whether an EPAS can be obtained. Is it possible for such a framework to give approximation factors in all spectrums (e.g., $(3+o(1))$-approximation for \kc if a general, unstructured metric space is given as input)? 
The ``dream result'' could look something like $(\gamma({\mathcal M})+o(1))$ FPT approximation algorithm where $\gamma({\mathcal M})$ is the FPT-approximability of metric class ${\mathcal M}$.

Last, but not least, the scope of this paper is to handle multiple clustering objectives and metric spaces. Many clustering problems additionally impose restrictions on how points in $P$ can be assigned to open centers in $X$, e.g., capacity~\cite{ABMM019, LI16, VC20}, different notions of fairness~\cite{BCFN19, CKLV17, CFLM19} and diversity constraints~\cite{LYZ10, TOG21,10.1145/3534678.3539487}; in such case, our framework does not apply.  Extending our framework to handle such constraints (or proving that EPASes do not exist when such constraints are enforced) is an interesting direction.

\section*{Acknowledgements}

Fateme Abbasi, Sandip Banerjee and Jaros\l{}aw Byrka are supported by Polish National Science Centre (NCN) Grant 2020/39/B/ST6/01641. Parinya Chalermsook is supported by European Research Council (ERC) under the European Union’s Horizon 2020 research and innovation programme (grant agreement No 759557). D\'{a}niel Marx is supported by the European Research Council (ERC) consolidator grant No.~725978 SYSTEMATICGRAPH.

\bibliographystyle{alpha}
\bibliography{references}

\end{document}